\def\op#1{\mathop{{\it\fam0} #1}\limits}
\newcommand{\id}{{\rm Id\,}}
\newcommand{\pr}{{\rm pr}}
\newcommand{\di}{{\rm dim\,}}
\newcommand{\Ker}{{\rm Ker\,}}
\newcommand{\lng}{\langle}
\newcommand{\rng}{\rangle}
\newcommand{\bite}{\begin{itemize}}
\newcommand{\eite}{\end{itemize}}
\newcommand{\benu}{\begin{enumerate}}
\newcommand{\eenu}{\end{enumerate}}
\newcommand{\bde}{\begin{description}}
\newcommand{\ede}{\end{description}}
\newcommand{\bquo}{\begin{quote}}
\newcommand{\equo}{\end{quote}}
\newcommand{\bquot}{\begin{quotation}}
\newcommand{\equot}{\end{quotation}}
\newcommand{\beq}{\begin{equation}}
\newcommand{\eeq}{\end{equation}}
\newcommand{\ben}{\begin{eqnarray}}
\newcommand{\een}{\end{eqnarray}}
\newcommand{\be}{\begin{eqnarray*}}
\newcommand{\ee}{\end{eqnarray*}}
\newcommand{\bea}{\begin{eqalph}}
\newcommand{\eea}{\end{eqalph}}
\newcommand{\cG}{{\mathfrak g}}
\newcommand{\gF}{{\mathfrak F}}
\newcommand{\cA}{{\mathcal A}}
\newcommand{\cO}{{\mathcal O}}
\newcommand{\cT}{{\mathcal T}}
\newcommand{\cV}{{\mathcal V}}
\newcommand{\cW}{{\mathcal W}}
\newcommand{\cF}{{\mathcal F}}
\newcommand{\cC}{{\mathcal C}}
\newcommand{\ccG}{{\mathcal G}}
\newcommand{\cS}{{\mathcal S}}
\newcommand{\bL}{{\mathbf L}}
\newcommand{\rA}{{\mathrm{Ann}\,}}
\newcommand{\rO}{{\mathrm{Orth}}}
\newcommand{\al}{\alpha}
\newcommand{\bt}{\beta}
\newcommand{\dl}{\delta}
\newcommand{\la}{\lambda}
\newcommand{\f}{\phi}
\newcommand{\vf}{\varphi}
\newcommand{\Om}{\Omega}
\newcommand{\m}{\mu}
\newcommand{\n}{\nu}
\newcommand{\e}{\epsilon}
\newcommand{\ve}{\varepsilon}
\newcommand{\thh}{\theta}
\newcommand{\vr}{\varrho}
\newcommand{\up}{\upsilon}
\newcommand{\vt}{\vartheta}
\newcommand{\si}{\sigma}
\newcommand{\fl}{\flat}
\newcommand{\sh}{\sharp}
\newcommand{\bT}{{\bf T}}
\newcommand{\bs}{{\bf s}}
\newcommand{\w}{\wedge}
\newcommand{\wt}{\widetilde}
\newcommand{\wh}{\widehat}
\newcommand{\dr}{\partial}
\newcommand{\ar}{\op\longrightarrow}
\let\ssection=\section
\renewcommand{\section}{\setcounter{equation}{0}\ssection}
\newcounter{eqalph}[section]
\newcounter{equationa}[section]
\newcounter{example}[section]
\newcounter{remark}[section]
\newcounter{theorem}[section]
\newcounter{proposition}[section]
\newcounter{lemma}[section]
\newcounter{corollary}[section]
\newcounter{definition}[section]
\def\theremark{\arabic{section}.\arabic{remark}}
\def\thetheorem{\arabic{section}.\arabic{theorem}}
\def\thedefinition{\arabic{section}.\arabic{definition}}
\newenvironment{proof}{\noindent {\textit{Outline of proof}:}
}{$\Box$\medskip}
\newenvironment{example}{\refstepcounter{remark}\medskip\noindent{\bf
Example \theremark:} }{$\Box$ \medskip}
\newenvironment{remark}{\refstepcounter{remark}\medskip\noindent{\bf
Remark \theremark:} }{$\Box$\medskip}
\newenvironment{theorem}{\refstepcounter{theorem}
\medskip\noindent{\sc Theorem \thetheorem}:}{$\Box$\medskip}
\newenvironment{proposition}{\refstepcounter{theorem}\medskip\noindent{\sc
Proposition \thetheorem}:}{$\Box$\medskip}
\newenvironment{lemma}{\refstepcounter{theorem}\medskip\noindent{\sc
Lemma \thetheorem}:}{ $\Box$\medskip }
\newenvironment{definition}{\refstepcounter{definition}\medskip\noindent{\sc
Definition \thedefinition}:}{$\Box$\medskip}
\newenvironment{eqalph}{\stepcounter{equation}
\setcounter{equationa}{\value{equation}} \setcounter{equation}{0}

\begin{eqnarray}}{\end{eqnarray}
\setcounter{equation}{\value{equationa}}}
\newcommand{\mar}[1]{}
\begin{document}

\hbox{}

\begin{center}

{\Large\bf

Partially superintegrable systems on Poisson manifolds}

\bigskip
\bigskip

{\sc A.KUROV$^1$, G.SARDANASHVILY$^{1,2}$}

\bigskip

$^1$ Moscow State University, Moscow, Russia

$^2$ Lepage Research Institute, Czech Republic

\bigskip
\bigskip

\end{center}

\begin{abstract}

\noindent Superintegrable (non-commutative completely integrable)
systems on a symplectic manifold conventionally are considered.
However, their definition implies a rather restrictive condition
$2n=k+m$ where $2n$ is a dimension of a symplectic manifold, $k$
is a dimension of a pointwise Lie algebra of a superintegrable
system, and $m$ is its corank. To solve this problem, we aim to
consider partially superintegrable systems on Poisson manifolds
where $k+m$ is the rank of a compatible Poisson structure. The
according extensions of the Mishchenko--Fomenko theorem on
generalized action-angle coordinates is formulated.
\end{abstract}

%\tableofcontents

%\setcounter{page}{1}

%\pagenumbering{arabic}

\section{Introduction}

The Liouville--Arnold theorem for completely integrable systems
\cite{arn,laz}, the Poincar\'e--Lyapounov--Nekhoroshev theorem for
commutative partially integrable systems \cite{gaeta,nekh94} and
the Mishchenko--Fomenko theorem (Theorem \ref{nc0}) for the
superintegrable ones \cite{bols03,fasso05,mishc} on symplectic
manifolds state the existence of action-angle coordinates around a
compact invariant submanifold of an integrable system which is a
torus $T^m$. These theorems have been extended to a general case
of invariant submanifolds which need not be compact, but are
diffeomorphic to a toroidal cylinder
\mar{g120a}\beq
\mathbb R^{m-r}\times T^r, \qquad T^r=\op\times^rS^1,
\label{g120a}
\eeq
(Theorems \ref{cmp20}, \ref{nc6} and \ref{nc0'}, respectively)
\cite{fior2,jmp03,book10,ijgmmp09a,book15,vin}.

However, Definition \ref{i0} of a superintegrable (non-commutative
completely integrable) system on a symplectic manifold is rather
restrictive. Its item (iii) requires that the matrix function
$s_{ij}$ (\ref{nc1}) must be of corank $m=2n-k$ (\ref{x1}) where
$2n$ is a dimension of a symplectic manifold and $k$ is a number
of independent generating functions of a superintegrable system.
In particular, commutative partially integrable systems on a
symplectic manifold fail to satisfy this condition (Remark
\ref{kk506}).

\begin{example} \label{x52} \mar{x52}
The Kepler problem on an $(n=2)$-dimensional configuration space
$\mathbb R^2\setminus\{0\}$ possesses three integrals of motion:
an orbital momentum $M_{12}$ and two components of the Rung--Lenz
vector $A^a$ \cite{ijgmmp09a,book15}. They constitute two
different superintegrable systems: (i) with a Lie algebra $so(3)$
on a domain $U_-$ of a phase space $\mathbb R^4\setminus\{0\}$ of
negative energy, and (ii) with a Lie algebra $so(2,1)$ on a domain
$U_+\subset\mathbb R^4\setminus\{0\}$ of positive energy. However,
if $n>2$, a number of integrals of motion $(M_{ij}, A^i)$,
$i=1,\ldots, n$, of the Kepler system is more than $2n$, and they
fail to form any superintegrable system. In this case, one however
can consider a partially superintegrable system with the
generating functions $(M_{12}, A^1, A^2)$ because orbits of a
motion of a Kepler system is well known to lie in a plane.
\end{example}

To avoid the restriction $m=2n-k$ (\ref{x1}), we aim to consider
partially superintegrable systems on Poisson manifolds (Definition
\ref{x11}). The following are their examples.

\begin{example} \label{x6} \mar{x6}
Let $F=(F_1,\ldots,F_k)$ be a superintegrable system of corank
$2n-k$ on a $2n$-dimensional connected symplectic manifold
$(Z,\Om)$. Let $X$ be an $r$-dimensional manifold regarded as a
Poisson manifold with a zero Poisson structure (Example
\ref{w206}). A manifold product $Z\times Q$ can be endowed with a
product Poisson structure $w$ of rank $2n$ (Example \ref{spr862}).
Then the pull-back of functions $F_i$ onto $Z\times X$ exemplifies
a partially superintegrable system of corank $m=2n-k$ where $2n$
is a rank of a Poisson structure on a $(2n+r)$-dimensional Poisson
manifold (Definition \ref{x11}).
\end{example}

\begin{example}  \label{x7} \mar{x7}
Commutative partially integrable systems on Poisson manifolds
(Definition \ref{cc1}) exemplify partially superintegrable
systems.
\end{example}

A key point is that invariant submanifolds of a superintegrable
system are integral manifolds of a certain commutative
 partially integrable system on a symplectic manifold (Remark
\ref{x2}). As a consequence, the proof of above mentioned
generalized Mishchenko--Fomenko theorem (Theorem \ref{nc0'}) for
superintegrable systems is reduced to generalized
Poincar\'e--Lyapounov--Nekhoroshev Theorem \ref{nc6} for
commutative partially integrable systems on a symplectic manifold
(Section 2). Therefore, we start our investigation of partially
superintegrable systems with commutative partially integrable
systems on Poisson manifolds (Section 4) \cite{fior,jmp03}.

Our goal is that, in a case of partially integrable systems on
Poisson manifolds, the above mentioned restriction condition
(\ref{x1}) comes to a form $k+m=r$ where $r$ is the rank of a
Poisson structure on a manifold $Z$, but not a dimension of $Z$
(Lemma \ref{x50}).

The extended Mishchenko--Fomenko theorem on generalized
action-angle coordinates in the case of symplectic superintegrable
systems (Theorem \ref{nc0'}) is extended to partially
superintegrable systems on Poisson manifolds (Theorem \ref{x51}).
Its proof also is reduced to Theorems \ref{bi92} and \ref{bi100}
for commutative partially integrable systems on Poisson manifolds.

\section{Superintegrable systems on symplectic manifolds}

Throughout the work, all functions and maps are smooth, and
manifolds are finite-dimensional smooth real and paracompact.

\begin{definition} \label{i0} \mar{i0}
Let $(Z,\Om)$ be a $2n$-dimensional connected symplectic manifold,
and let $(C^\infty(Z), \{,\})$ be a Poisson algebra of smooth real
functions on $Z$. A subset
\mar{i00}\beq
F=(F_1,\ldots,F_k), \qquad n\leq k<2n, \label{i00}
\eeq
of a Poisson algebra $C^\infty(Z)$ is called a superintegrable
system if the following conditions hold.

(i) All the functions $F_i$ (called the  generating functions of a
superintegrable system) are independent, i.e., a $k$-form
$\op\w^kdF_i$ nowhere vanishes on $Z$. It follows that a
surjection
\mar{nc4}\beq
\wh F:Z\to N=\op\times_iF_i(Z)\subset \mathbb R^k \label{nc4}
\eeq
is a submersion, i.e., a fibred manifold over a domain
(contractible open subset) $N\subset\mathbb R^k$ endowed with the
coordinates $(x_i)$ such that $x_i\circ \wh F=F_i$. Fibres of the
fibred manifold $\wh F$ (\ref{nc4}) are called the invariant
submanifolds of a superintegrable system.

(ii) There exist smooth real functions $s_{ij}$ on $N$ such that
\mar{nc1}\beq
\{F_i,F_j\}= s_{ij}\circ \wh F, \qquad i,j=1,\ldots, k.
\label{nc1}
\eeq

(iii) The matrix function $\bs$ with the entries $s_{ij}$
(\ref{nc1}) is of constant corank
\mar{x1}\beq
m=2n-k, \qquad 2n=\di Z, \qquad k=\di N, \label{x1}
\eeq
at all points of $N$.
\end{definition}

If $k>n$, the matrix $\bs$ is necessarily non-zero. If $k=n$, then
$\bs=0$, and we are in the case of completely integrable systems
as follows.

\begin{definition} \label{cmp21} \mar{cmp21} The subset $(F_1,\ldots,F_n)$ (\ref{i00})
of a Poisson algebra $C^\infty(Z)$ on a symplectic manifold
$(Z,\Om)$ is called the completely integrable system if $F_i$ are
independent functions in involution.
\end{definition}

Therefore, superintegrable systems sometimes are called
non-commutative completely integrable systems. However, this
notion differs from that in \cite{laur}.

\begin{remark} \label{kk506} \mar{kk506} A
family $\{S_1,\ldots, S_m\}$ of $m\leq n$ independent smooth real
functions in involution on a symplectic manifold $(Z,\Om)$ is
called the (commutative) partially integrable system. It should be
emphasized that a commutative partially integrable system on a
symplectic manifold fails to be a particular superintegrable
system because the condition (\ref{x1}) in item (iii) of
Definition \ref{i0} is not satisfied, unless $m=n$ and it is a
completely integrable system.
\end{remark}

The following two assertions clarify the structure of
superintegrable systems \cite{fasso05,fior2,book10,book15}.

\begin{lemma} \label{nc7} \mar{nc7} Given a symplectic manifold $(Z,\Om)$,
let $\pi:Z\to N$ be a fibred manifold such that, for any two
functions $f$, $f'$ constant on fibres of $\pi$, their Poisson
bracket $\{f,f'\}$ also is well. By virtue of Theorem \ref{p11.3},
$N$ is provided with a unique coinduced Poisson structure
$\{,\}_N$ such that $\pi$ is a Poisson morphism.
\end{lemma}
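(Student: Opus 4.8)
The plan is to transport the Poisson bracket of $Z$ down to $N$ along the pullback map, the stated hypothesis being exactly what makes this transport consistent. Since $\pi:Z\to N$ is a fibred manifold, hence a surjective submersion, the pullback $\pi^*:C^\infty(N)\to C^\infty(Z)$ is an injective homomorphism of associative algebras (injectivity because $\pi$ is onto), and its image is precisely the subalgebra of functions on $Z$ that are constant on the fibres of $\pi$. I would establish this identification first, since it translates the standing hypothesis into the statement that $\pi^*C^\infty(N)$ is closed under $\{,\}$.

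Granting the identification, the construction is immediate. For $g,h\in C^\infty(N)$ the functions $\pi^*g,\pi^*h$ are fibrewise constant, so by hypothesis $\{\pi^*g,\pi^*h\}$ is fibrewise constant and therefore equals $\pi^*k$ for a unique $k\in C^\infty(N)$; set $\{g,h\}_N:=k$, so that
\beq
\pi^*\{g,h\}_N=\{\pi^*g,\pi^*h\}.
\eeq
Well-definedness is guaranteed by injectivity of $\pi^*$, and the displayed relation is by construction the assertion that $\pi$ is a Poisson morphism. That $\{,\}_N$ satisfies the Poisson axioms is then inherited from $Z$: bilinearity and antisymmetry are immediate; for the Leibniz rule and the Jacobi identity one applies $\pi^*$ to the left-hand side, uses that $\pi^*$ is an algebra homomorphism together with the corresponding identity of $\{,\}$ on $Z$, and cancels $\pi^*$ by injectivity. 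Uniqueness is the same cancellation: any bracket $\{,\}'_N$ making $\pi$ a Poisson morphism satisfies $\pi^*\{g,h\}'_N=\{\pi^*g,\pi^*h\}=\pi^*\{g,h\}_N$, whence $\{,\}'_N=\{,\}_N$.

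The one step that deserves care, and the place where Theorem \ref{p11.3} does the work, is the identification of fibrewise-constant functions with $\pi^*C^\infty(N)$, i.e. the surjectivity of $\pi^*$ onto that subalgebra. For a self-contained verification I would argue locally: in a submersion chart $\pi$ is a projection, a fibrewise-constant function restricts along local sections to a smooth function on the base independently of the chosen section, and these local descents agree on overlaps and so glue to a global smooth $g$ on $N$ with $\pi^*g=f$. Once this is in hand everything else is formal, and indeed this is exactly the general mechanism that Theorem \ref{p11.3} packages, so in practice the lemma follows by invoking that theorem with the stated closure condition as its hypothesis.
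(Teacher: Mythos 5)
Your proposal is correct and takes essentially the same route as the paper: the paper proves the lemma simply by invoking Theorem \ref{p11.3} (the coinduced Poisson structure on the base of a fibration whose fibrewise-constant functions are closed under the bracket), and your construction — identifying fibrewise-constant functions with $\pi^*C^\infty(N)$, defining $\{g,h\}_N$ by $\pi^*\{g,h\}_N=\{\pi^*g,\pi^*h\}$, and deducing the Poisson axioms and uniqueness from injectivity of $\pi^*$ — is exactly the standard mechanism behind that theorem, as you yourself note. The only added value of your write-up is that it makes explicit the descent argument (via local sections of the submersion) that the paper delegates to its references.
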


Since any function constant on fibres of $\pi$ is a pull-back of
some function on $N$, the superintegrable system (\ref{i00}) with
$\pi=\wh F$ satisfies the condition of Lemma \ref{nc7} due to item
(i) of Definition \ref{i0}. Thus, a base $N$ of the fibred
manifold $\wh F$ (\ref{nc4}) is endowed with a coinduced Poisson
structure of corank $m$. With respect to coordinates $x_i$ in item
(i) of Definition \ref{i0}, its bivector field reads
\mar{cmp1}\beq
w=s_{ij}(x_k)\dr^i\w\dr^j. \label{cmp1}
\eeq

\begin{lemma} \label{nc8} \mar{nc8} Given a fibred manifold $\pi:Z\to N$ with connected fibres in
Lemma \ref{nc7}, the following conditions are equivalent
\cite{fasso05,libe}:

(i) the corank of the coinduced Poisson structure $\{,\}_N$ on $N$
equals $m=\di Z-\di N$,

(ii) the fibres of $\pi$ are isotropic,

(iii) the fibres of $\pi$ are  maximal integral manifolds of the
involutive distribution spanned by the Hamiltonian vector fields
of the pull-back $\pi^*C$ of Casimir functions $C$ of a coinduced
Poisson structure on $N$.
\end{lemma}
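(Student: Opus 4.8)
The plan is to reduce all three conditions to a single pointwise statement in the symplectic vector space $T_zZ$, concerning the vertical subspace $V_z=\Ker d\pi_z$ and its symplectic orthogonal $V_z^\perp$, and then to close by elementary symplectic linear algebra. First I would record that a function is constant on fibres iff it is a pull-back $\pi^*g$ (using connectedness of the fibres), and that for such a function the Hamiltonian vector field lies in $V^\perp$: since $i_{X_{\pi^*g}}\Om=d(\pi^*g)$ annihilates $V$, we get $\Om(X_{\pi^*g},v)=d(\pi^*g)(v)=0$ for every $v\in V$. As $g$ runs over $C^\infty(N)$ the covector $d(\pi^*g)(z)$ sweeps out the full annihilator $V_z^\circ$, so the Hamiltonian vector fields of pull-backs span exactly $V_z^\perp$, a subspace of dimension $\di N=k$. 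Consequently the assignment $\Phi_z\colon T_{\pi(z)}^*N\to V_z^\perp$, $dg(\pi(z))\mapsto X_{\pi^*g}(z)$, which is nothing but $\Om^{-1}\circ(d\pi_z)^*$, is a well-defined linear isomorphism.

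Next I would transport the coinduced bivector through $\Phi_z$. Because $\{\pi^*g,\pi^*g'\}=\Om(X_{\pi^*g},X_{\pi^*g'})$ is the pull-back of the coinduced bracket $\{g,g'\}_N$ furnished by Lemma \ref{nc7}, the coinduced bivector $w_N$ at $\pi(z)$ equals the pull-back along $\Phi_z$ of the form $\Om|_{V_z^\perp}$. Hence the corank of $w_N$ at $\pi(z)$ equals the dimension of the radical of $\Om|_{V_z^\perp}$, namely $V_z^\perp\cap(V_z^\perp)^\perp$; and since $(V_z^\perp)^\perp=V_z$ this gives the master identity $\mathrm{corank}\,w_N=\di(V_z\cap V_z^\perp)$.

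With this identity the equivalences become linear algebra. As $\di V_z=m$, one always has $\di(V_z\cap V_z^\perp)\le m$, with equality iff $V_z\subseteq V_z^\perp$, i.e. iff the fibre is isotropic at $z$; so the coinduced corank is $m$ at every point exactly when every fibre is isotropic, which is (i) $\Leftrightarrow$ (ii). For (iii) I would use that a Casimir $C$ is characterised by $dC\in\Ker w_N^\sharp$, whence $X_{\pi^*C}(z)=\Phi_z(dC)$ lies in $\Phi_z(\Ker w_N^\sharp)=V_z\cap V_z^\perp$. Thus the distribution spanned by the fields $X_{\pi^*C}$ is contained in $V\cap V^\perp$, and where $w_N$ has locally constant corank the differentials of Casimirs span $\Ker w_N^\sharp$, so the two distributions agree. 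Under (ii) this distribution is the vertical $V$ itself, whose connected integral manifolds are the fibres, giving (iii); conversely, if the fibres are the maximal integral manifolds, then the distribution has rank $m=\di V$ while lying in $V\cap V^\perp\subseteq V^\perp$, forcing $V\subseteq V^\perp$ and hence (ii).

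The step requiring most care is the passage to (iii): one must ensure that the Hamiltonian vector fields of the pull-backs of Casimir functions actually generate the full intersection $V\cap V^\perp$ rather than a proper subdistribution, which rests on the coinduced Poisson structure having locally constant rank so that enough independent Casimirs exist near each point. This is automatic once (i) holds, so I would prove the chain (ii) $\Leftrightarrow$ (i) $\Rightarrow$ (iii) and then close the loop by the dimension count (iii) $\Rightarrow$ (ii). The remaining ingredients — that $\Phi_z$ is an isomorphism and that $(V^\perp)^\perp=V$ — are standard facts of symplectic linear algebra.
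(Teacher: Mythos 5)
Your proof is correct. There is nothing in the paper to compare it against: the paper states Lemma \ref{nc8} with citations to the literature (Fass\'o; Libermann--Marle) and gives no proof of its own, so your argument fills a gap the authors delegated to references. Your route — reducing all three conditions to the pointwise identity $\mathrm{corank}\,w_N(\pi(z))=\di(V_z\cap V_z^{\perp})$ via the isomorphism $\Phi_z=\Om^{-1}\circ(d\pi_z)^{*}:T^{*}_{\pi(z)}N\to V_z^{\perp}$ — is sound: the equivalence (i) $\Leftrightarrow$ (ii) then drops out of $\di(V_z\cap V_z^{\perp})\leq\di V_z$ with equality iff $V_z\subseteq V_z^{\perp}$, and both implications involving (iii) rest on the correct computation $\Phi_z(\Ker w_N^{\sh})=V_z\cap V_z^{\perp}$. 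The one delicate step — that Hamiltonian vector fields of pull-backs of Casimirs span all of $V\cap V^{\perp}$ rather than a proper subdistribution — you identify explicitly and handle properly by ordering the implications as (ii) $\Leftrightarrow$ (i) $\Rightarrow$ (iii) and (iii) $\Rightarrow$ (ii), so that the constancy of the corank (hence regularity of $w_N$, hence local existence of enough independent Casimirs, cf.\ Theorem \ref{canpo}) is available exactly where it is needed, while the converse direction uses only the containment of the distribution in $V\cap V^{\perp}$. One residual caveat, consistent with how the paper actually uses the lemma: ``Casimir functions'' must be understood locally, i.e.\ one may have to shrink $N$ so that $\di N-\mathrm{rank}\,w_N$ independent Casimirs exist; this is precisely what the proof of Theorem \ref{nc0'} does by passing to the neighborhood $N_M$.
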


It is readily observed that the fibred manifold $\wh F$
(\ref{nc4}) obeys condition (i) of Lemma \ref{nc8} due to item
(iii) of Definition \ref{i0}, namely, $k-m= 2(k-n)$.

\begin{remark} \label{x2} \mar{x2}
The pull-back $\pi^*C$ of Casimir functions in item (iii) of Lemma
\ref{nc8} are in involution with all functions on $Z$ which are
constant on fibres of $\pi$. Let $N$ admit a family of $m$
independent Casimir functions $(C_\la)$. Then their pull-back
$\pi^*C_\la$ constitute a commutative partially integrable system
on a symplectic manifold $(Z,\Om)$ (Remark \ref{kk506}). In this
case it follows from item (iii) of Lemma \ref{nc8} that invariant
submanifolds of a superintegrable system are integral manifolds of
this commutative partially integrable system. As a consequence,
the proof of generalized Mishchenko--Fomenko Theorem \ref{nc0'} is
reduced to the generalized  Poincar\'e--Lyapounov--Nekhoroshev
theorem (Theorem \ref{nc6}) for commutative partially integrable
systems on a symplectic manifold.
\end{remark}

\begin{remark} \label{cmp8} \mar{cmp8} In many applications, condition (i) of Definition
\ref{i0} fails to hold. It can be replaced with that a subset
$Z_R\subset Z$ of regular points (where $\op\w^kdF_i\neq 0$) is
open and dense. Let $M$ be an invariant submanifold through a
regular point $z\in Z_R\subset Z$. Then it is regular, i.e.,
$M\subset Z_R$. Let $M$ admit a regular open saturated
neighborhood $U_M$ (i.e., a fibre of $\wh F$ through a point of
$U_M$ belongs to $U_M$). For instance, any compact invariant
submanifold $M$ has such a neighborhood $U_M$. The restriction of
functions $F_i$ to $U_M$ defines a superintegrable system on $U_M$
which obeys Definition \ref{i0}. In this case, one says that a
superintegrable system is considered around its invariant
submanifold $M$. We refer to \cite{ijgmmp09a,book15} for a global
analysis of superintegrable systems.
\end{remark}

Given a superintegrable system in accordance with Definition
\ref{i0}, the above mentioned generalization of the Mishchenko --
Fomenko theorem to non-compact invariant submanifolds states the
following \cite{fior2,book10,ijgmmp09a,book15}.

\begin{theorem} \label{nc0'} \mar{nc0'} Let the Hamiltonian vector fields $\vt_i$ (\ref{z100}) of the
generating functions $F_i$ be complete, and let fibres of the
fibred manifold $\wh F$ (\ref{nc4}) be connected and mutually
diffeomorphic. Then the following hold.

(I) The fibres of the fibred manifold $\wh F$ (\ref{nc4}) are
diffeomorphic to the toroidal cylinder (\ref{g120a}) where
$m=2n-k$.

(II) Given a fibre $M$ of the fibration $\wh F$ (\ref{nc4}), there
exists its open saturated neighborhood $U_M$ which is a trivial
principal bundle
\mar{kk600}\beq
U_M=N_M\times (\mathbb R^{m-r}\times T^r)\ar^{\wh F} N_M
\label{kk600}
\eeq
with the structure additive group (\ref{g120a}).

(III) A neighborhood $U_M$ is provided with the bundle
(generalized action-angle) coordinates $(I_\la,q^A, y^\la)$,
$\la=1,\ldots, m$, $A=1,\ldots,2(n-m)$, such that: (i) the
generalized angle coordinates $(y^\la)$ are coordinates on a
toroidal cylinder, i.e., fibre coordinates on the fibre bundle
(\ref{kk600}); (ii) the $(I_\la,q^A)$ are coordinates on its base
$N_M$ where the action coordinates $(I_\la)$ are values of
independent Casimir functions of the coinduced Poisson structure
$\{,\}_N$ on $N_M$; and (iii) a symplectic form $\Om$ on $U_M$
reads
\be
\Om= dI_\la\w dy^\la + \Om_{AB}(I_\bt,q^C)dq^A\w dq^B.
\ee
\end{theorem}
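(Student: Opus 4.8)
The plan is to reduce the three assertions to the generalized Poincar\'e--Lyapounov--Nekhoroshev theorem (Theorem \ref{nc6}) for a commutative partially integrable system, along the lines indicated in Remark \ref{x2}. First I would equip the base $N$ of the fibred manifold $\wh F$ (\ref{nc4}) with its coinduced Poisson structure $\{,\}_N$ (Lemma \ref{nc7}), which by item (iii) of Definition \ref{i0} together with Lemma \ref{nc8} has constant corank $m=2n-k$. Shrinking $N$ to a contractible neighbourhood of $\wh F(M)$ if necessary, a Poisson manifold of constant corank admits $m$ independent Casimir functions $(C_\la)$, $\la=1,\ldots,m$. Their pull-backs $S_\la=\wh F{}^*C_\la=C_\la\circ\wh F$ are then $m$ independent functions on $U_M\subset Z$ that are pairwise in involution, i.e.\ a commutative partially integrable system in the sense of Remark \ref{kk506}.

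The second step is to identify the invariant submanifolds of $F$ with the common integral manifolds of $(S_\la)$. Because $\wh F$ is a Poisson morphism and $C_\la$ is Casimir, $\{S_\la,F_j\}=(\{C_\la,x_j\}_N)\circ\wh F=0$, so each $\vt_{S_\la}$ is tangent to the fibres of $\wh F$; by item (iii) of Lemma \ref{nc8} these Hamiltonian vector fields in fact span the tangent spaces of the fibres, which are therefore the maximal integral manifolds of the involutive distribution generated by the $\vt_{S_\la}$. Thus $M$ is simultaneously a fibre of $\wh F$ and an integral manifold of the commutative system $(S_\la)$, which is the backbone of the reduction.

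Before invoking Theorem \ref{nc6} I must upgrade the completeness hypothesis. Using the $C^\infty(Z)$-linearity of $\alpha\mapsto w^\sharp(\alpha)$ one has $\vt_{S_\la}=\sum_i(\deri{C_\la}{x_i}\circ\wh F)\,\vt_i$, and the coefficients $\deri{C_\la}{x_i}\circ\wh F$ are constant on each fibre. Restricted to a single fibre $M$, the field $\vt_{S_\la}$ is hence a constant-coefficient combination of the complete fields $\vt_i$, and I would argue that the resulting flows commute and remain complete on $M$, so that the $\vt_{S_\la}$ define a locally free action of $\mathbb R^m$ on each connected fibre. The main obstacle I anticipate lies exactly here: the individual $\vt_i$ are \emph{not} tangent to $M$ (since $s_{ij}\neq0$ when $k>n$), so completeness of the tangential combinations $\vt_{S_\la}$ does not follow formally and must be extracted from completeness of the $\vt_i$ together with the mutual-diffeomorphism assumption on the fibres.

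Granting completeness, Theorem \ref{nc6} applies verbatim to $(S_\la)$ and yields all three conclusions. A connected fibre carrying a transitive locally free $\mathbb R^m$-action is the quotient $\mathbb R^m/\Gamma$ by its discrete isotropy group $\Gamma\cong\mathbb Z^r$, giving the toroidal cylinder $\mathbb R^{m-r}\times T^r$ of part (I); the mutual diffeomorphism of fibres makes $r$ locally constant and promotes $U_M$ to the trivial principal bundle (\ref{kk600}) of part (II). For part (III), the angle coordinates $y^\la$ are the fibre coordinates dual to the $\vt_{S_\la}$ and Theorem \ref{nc6} produces the normal form $\Om=dI_\la\w dy^\la+\Om_{AB}(I_\bt,q^C)\,dq^A\w dq^B$ directly, with $2(n-m)$ transverse coordinates $q^A$; the only point requiring separate attention is the identification of the action coordinates $I_\la$ with Casimir functions of $\{,\}_N$, which follows since the $I_\la$ can be chosen as an independent reparametrisation of the $C_\la$ and any function of Casimir functions is again Casimir.
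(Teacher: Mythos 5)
Your reduction is the same as the paper's: pull back $m$ independent Casimir functions $C_\la$ of the coinduced Poisson structure $\{,\}_N$ on a neighbourhood $N_M$ of $\wh F(M)$, observe via item (iii) of Lemma \ref{nc8} that the fibres of $\wh F$ are precisely the integral manifolds of the commutative system $S_\la=\wh F^*C_\la$ on $U_M=\wh F^{-1}(N_M)$, and then cite Theorem \ref{nc6}. But there is a genuine gap at exactly the point you flag and leave open: verifying condition (i) of Theorem \ref{nc6}, the completeness of the Hamiltonian vector fields $\up_\la=\vt_{S_\la}$. You are right that it does not follow "formally" — an $\mathbb R$-linear combination of complete vector fields need not be complete — but saying it "must be extracted from completeness of the $\vt_i$ together with the mutual-diffeomorphism assumption on the fibres" is not an argument, and it also points at the wrong hypothesis: mutual diffeomorphism of fibres is consumed as condition (ii) of Theorem \ref{nc6} and plays no role in the completeness question.

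The paper closes this gap with Palais' theorem (Remark \ref{zz95}): if complete vector fields constitute a basis for a finite-dimensional real Lie algebra of vector fields, then every element of that Lie algebra is complete. Concretely, on a fixed fibre $M'$ the coefficients in $\up_\la=\sum_i(\dr^iC_\la\circ\wh F)\,\vt_i$ are genuine constants, since $\wh F$ is constant on $M'$; hence $\up_\la$ along $M'$ is an $\mathbb R$-linear combination of the complete fields $\vt_i$, i.e.\ an element of the finite-dimensional real Lie algebra they generate there, and is therefore complete by Remark \ref{zz95}. Completeness on all of $U_M$ then follows because the $\up_\la$ are vertical for $U_M\to N_M$, so their flows preserve each fibre and completeness fibrewise suffices. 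This Lie-algebraic mechanism is the one missing idea in your proposal; with it supplied, the remainder of your argument (identification of fibres with integral manifolds, application of Theorem \ref{nc6}, and the observation that the action coordinates $I_\la$ may be taken as functions of the Casimirs, hence Casimir themselves) matches the paper's proof.
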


\begin{proof}
It follows from item (iii) of Lemma \ref{nc8} that every fibre $M$
of the fibred manifold (\ref{nc4}) is a maximal integral manifold
of an involutive distribution spanned by the Hamiltonian vector
fields $\up_\la$ of the pull-back $\wh F^*C_\la$ of $m$
independent Casimir functions $\{C_1,\ldots, C_m\}$ of the
coinduced Poisson structure $\{,\}_N$ (\ref{cmp1}) on an open
neighborhood $N_M$ of a point $\wh F(M)\in N$. Let us put $U_M=\wh
F^{-1}(N_M)$. It is an open saturated neighborhood of $M$.
Consequently, invariant submanifolds of a superintegrable system
(\ref{i00}) on $U_M$ are integral manifolds of a commutative
partially integrable system
\mar{cmp4}\beq
S=(\wh F^*C_1, \ldots, \wh F^*C_m), \qquad 0<m\leq n, \label{cmp4}
\eeq
on a symplectic manifold $(U_M,\Om)$ (Remark \ref{x2}). Therefore,
statements (I) -- (III) of Theorem \ref{nc0'} are the corollaries
of Theorem \ref{nc6} below. Its condition (i) is satisfied as
follows. Let $M'$ be an arbitrary fibre of the fibred manifold
$\wh F:U_M\to N_M$ (\ref{nc4}). Since
\be
\wh F^*C_\la(z)= (C_\la\circ \wh F)(z)= C_\la(F_i(z)),\qquad z\in
M',
\ee
the Hamiltonian vector fields $\up_\la$ on $M'$ are $\mathbb
R$-linear combinations of Hamiltonian vector fields $\vt_i$ of
generating functions $F_i$. It follows that $\up_\la$ on $M'$ are
elements of a finite-dimensional real Lie algebra of vector fields
on $M'$ generated by vector fields $\vt_i$. Since vector fields
$\vt_i$ are complete, the vector fields $\up_\la$ on $M'$ also are
well (Remark \ref{zz95} below). Consequently, these vector fields
are complete on $U_M$ because they are vertical vector fields on
$U_M\to N$.
\end{proof}

\begin{remark} \label{zz95} \mar{zz95} If complete vector fields on a
smooth manifold constitute a basis for a finite-dimensional real
Lie algebra, any element of this Lie algebra is complete
\cite{palais}.
\end{remark}

\begin{remark} \label{zz90} The condition of the completeness of Hamiltonian
vector fields of generating functions $F_i$ in Theorem \ref{nc0'}
is rather restrictive. One can replace it with that the
Hamiltonian vector fields $\up$ of the pull-back onto $Z$ of
Casimir functions on $N$ are complete.
\end{remark}

If the conditions of Theorem \ref{nc0'} are replaced with that
fibres of the fibred manifold $\wh F$ (\ref{nc4}) are compact and
connected, this theorem restarts the Mishchenko--Fomenko one as
follows \cite{bols03,fasso05,mishc}.

\begin{theorem} \label{nc0} \mar{nc0}
Let fibres of the fibred manifold $\wh F$ (\ref{nc4}) be compact
and connected. Then they are diffeomorphic to a torus $T^m$, and
statements (II) -- (III) of Theorem \ref{nc0'} hold.
\end{theorem}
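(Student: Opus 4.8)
The plan is to derive Theorem \ref{nc0} as a direct specialization of the already-established Theorem \ref{nc0'}. The only new hypothesis is that the fibres of $\wh F$ are compact (and connected), replacing the dual assumptions in Theorem \ref{nc0'} that the Hamiltonian vector fields $\vt_i$ are complete and that the fibres are mutually diffeomorphic. So the first task is to show that compactness of the fibres automatically recovers everything needed to invoke Theorem \ref{nc0'}.

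\emph{Completeness.} First I would observe that the generating functions $F_i$ are constant on the fibres of $\wh F$, so their Hamiltonian vector fields $\vt_i$ are tangent to the fibres (they annihilate each $F_j$ since $\{F_i,F_j\}=s_{ij}\circ\wh F$ is itself constant on fibres, and more to the point $\vt_i$ is vertical with respect to $\wh F$). A vector field tangent to a compact manifold is complete on that manifold, hence, being vertical on the fibred manifold $\wh F$, the $\vt_i$ are complete on a saturated neighborhood. This is exactly the completeness hypothesis required by Theorem \ref{nc0'}, now obtained for free from compactness.

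\emph{Mutual diffeomorphism.} Next I would handle the hypothesis that the fibres are mutually diffeomorphic. With the completeness just established, the Hamiltonian vector fields $\up_\la$ of the pull-back Casimir functions generate a locally free action of the structure group (\ref{g120a}) on a saturated neighborhood, and the orbits are precisely the invariant submanifolds. A compact connected orbit of such an abelian group action is a torus, and the local triviality of the resulting principal bundle (part (II) of Theorem \ref{nc0'}) forces neighboring fibres to be diffeomorphic to one another; so local constancy of the diffeomorphism type is automatic once a single fibre is compact and has a saturated neighborhood (cf. Remark \ref{cmp8}). Having verified both hypotheses, Theorem \ref{nc0'} applies and yields statements (II)--(III) verbatim.

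\emph{Identifying the fibre.} The remaining point is statement (I): a compact fibre must be a torus $T^m$ rather than the general toroidal cylinder $\mathbb R^{m-r}\times T^r$ of (\ref{g120a}). By Theorem \ref{nc0'}(I) the fibre is diffeomorphic to $\mathbb R^{m-r}\times T^r$ for some $r$; since $\mathbb R^{m-r}$ is non-compact whenever $m-r>0$, compactness forces $r=m$, so the toroidal cylinder degenerates to $T^m$. I expect this to be the only substantive step, and it is essentially immediate. The main conceptual obstacle is really the first one --- confirming that compactness genuinely supplies the completeness used in Theorem \ref{nc0'} --- but this reduces to the standard fact that a vector field on a compact manifold integrates to a global flow, so no serious difficulty arises.
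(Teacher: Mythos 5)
Your overall strategy --- reduce to Theorem \ref{nc0'} by showing that compactness supplies the missing hypotheses, then use compactness to force $r=m$ in (\ref{g120a}) --- is the same as the paper's, and your final step (a compact toroidal cylinder must be $T^m$) is correct. But your first step contains a genuine error. You claim the Hamiltonian vector fields $\vt_i$ of the generating functions are vertical with respect to $\wh F$ because they ``annihilate each $F_j$''. They do not: by (\ref{m81}), $\vt_i\rfloor dF_j=\{F_i,F_j\}=s_{ij}\circ\wh F$, and for a genuinely non-commutative superintegrable system ($k>n$) the matrix $\bs$ is non-zero, so some $\vt_i$ are \emph{not} tangent to the fibres of $\wh F$. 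The fact that $\{F_i,F_j\}$ is constant on fibres does not make it zero there. The $\vt_i$ are tangent only to the larger leaves of the foliation $\cF$ (the preimages of the symplectic leaves of the coinduced Poisson structure on $N$), which are unions of fibres of $\wh F$ and can be non-compact even when every single fibre is compact; so compactness of the fibres gives no completeness of the $\vt_i$ at all.

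The repair is exactly the route the paper takes: the vector fields that genuinely are vertical are the Hamiltonian vector fields $\up_\la$ of the pull-backs $\wh F^*C_\la$ of Casimir functions, by item (iii) of Lemma \ref{nc8}; a vertical vector field on a fibred manifold with compact fibres is complete, and by Remark \ref{zz90} (equivalently, by inspecting the proof of Theorem \ref{nc0'}, where completeness of the $\vt_i$ is used only to deduce completeness of the $\up_\la$ before invoking Theorem \ref{nc6}) this is all the completeness that is actually required. A second, smaller defect: to verify that the fibres are mutually diffeomorphic you appeal to the local triviality of the principal bundle in part (II) of Theorem \ref{nc0'}, i.e.\ you use a conclusion of the theorem you are trying to apply to check one of its hypotheses, which is circular. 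The correct argument is that a fibred manifold with compact fibres is a fibre bundle (Remark \ref{kk501}, Ehresmann's theorem), so its fibres over the connected base $N$ are automatically mutually diffeomorphic. With these two corrections your proof coincides with the paper's.
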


\begin{remark}
In Theorem \ref{nc0}, the Hamiltonian vector fields $\up_\la$ are
complete because fibres of the fibred manifold $\wh F$ (\ref{nc4})
are compact. As well known, any vector field on a compact manifold
is complete.
\end{remark}

If $F$ (\ref{i00}) is a completely integrable system, the
coinduced Poisson structure on $N$ equals zero, and the generating
functions $F_i$ are the pull-back of $n$ independent functions on
$N$. Then Theorems \ref{nc0} and \ref{nc0'} come to the well-known
Liouville--Arnold theorem \cite{arn,laz} and its generalization
(Theorem \ref{cmp20} below) to the case of non-compact invariant
submanifolds \cite{fior,book10,book15}, respectively.

\begin{theorem} \label{cmp20} \mar{cmp20} Given a completely integrable system
$F$ in accordance with Definition \ref{cmp21}, let the Hamiltonian
vector fields $\vt_i$ of functions $F_i$ be complete, and let
fibres of the fibred manifold $\wh F$ (\ref{nc4}) be connected and
mutually diffeomorphic. Then items (I) and (II) of Theorem
\ref{nc0'} hold, and its item (III) is replaced with the following
one.

(III') The neighborhood $U_M$ (\ref{kk600}) where $m=n$ is
provided with the bundle (generalized action-angle) coordinates
$(I_\la,y^\la)$, $\la=1,\ldots, n$, such that the angle
coordinates $(y^\la)$ are coordinates on a toroidal cylinder, and
the symplectic form $\Om$ on $U_M$ reads
\mar{cmp66}\beq
\Om= dI_\la\w dy^\la. \label{cmp66}
\eeq
\end{theorem}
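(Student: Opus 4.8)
The plan is to obtain Theorem~\ref{cmp20} as the degenerate specialization $m=n$ of the already-proved Theorem~\ref{nc0'}, exactly as Remark~\ref{x2} and the proof of Theorem~\ref{nc0'} suggest. First I would observe that when $F$ is a completely integrable system in the sense of Definition~\ref{cmp21}, item~(iii) of Definition~\ref{i0} forces $k=n$, so the matrix $\bs$ with entries $s_{ij}$ (\ref{nc1}) vanishes identically; hence the coinduced Poisson structure $\{,\}_N$ on $N$ (Lemma~\ref{nc7}) is zero and its corank is $m=n$. Consequently \emph{every} coordinate function $x_i$ on $N$ is a Casimir function, and the $n$ generating functions $F_i=x_i\circ\wh F$ are themselves the pull-backs of $n$ independent Casimir functions. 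The commutative partially integrable system $S$ (\ref{cmp4}) built from the Casimirs therefore coincides (up to relabelling) with $F$ itself, and item~(iii) of Lemma~\ref{nc8} is satisfied since $k-m=2(k-n)=0$.

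Next I would verify the completeness hypothesis needed to invoke Theorem~\ref{nc6}. Here the Hamiltonian vector fields $\up_\la$ of the pull-backs of the Casimir functions are, on each fibre, $\mathbb R$-linear combinations of the Hamiltonian vector fields $\vt_i$ of the $F_i$ themselves; but since the Casimirs are just the $x_i$, these are literally the $\vt_i$, which are complete by hypothesis. Thus condition~(i) of Theorem~\ref{nc6} holds automatically, and statements~(I) and~(II) of Theorem~\ref{nc0'}, with $m=n$, transfer verbatim: the fibres of $\wh F$ (\ref{nc4}) are diffeomorphic to the toroidal cylinder (\ref{g120a}) and possess a trivial principal-bundle neighborhood $U_M$ (\ref{kk600}) with structure group $\mathbb R^{n-r}\times T^r$.

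The only genuine point of work is the replacement of statement~(III) by~(III'). In the general superintegrable case the action coordinates $(I_\la)$ are the $m=2n-k$ Casimir values while the transverse coordinates $(q^A)$, $A=1,\ldots,2(n-m)$, carry a nondegenerate symplectic block $\Om_{AB}dq^A\w dq^B$. When $m=n$ the transverse dimension $2(n-m)$ collapses to zero, so the coordinates $q^A$ disappear altogether and the base $N_M$ is parametrized by the action coordinates $(I_\la)$, $\la=1,\ldots,n$, alone. The symplectic form in~(III) then loses its $\Om_{AB}$ term and reduces to $\Om=dI_\la\w dy^\la$, which is precisely (\ref{cmp66}). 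I would check that the count matches the symplectic dimension: $n$ action plus $n$ angle coordinates give $2n=\di Z$, consistent with $U_M$ being an open subset of $Z$.

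I expect the main obstacle to be purely bookkeeping rather than conceptual: one must confirm that setting $m=n$ does not create a degenerate or empty chart and that the bundle coordinates of Theorem~\ref{nc6} specialize cleanly when the fibre-transverse factor is trivial. Once the identification of $F$ with its associated commutative system~(\ref{cmp4}) is made explicit and the vanishing of $q^A$ is justified by the dimension count $2(n-m)=0$, the result is an immediate corollary, and the Liouville--Arnold statement is recovered. No further analytic input beyond Theorem~\ref{nc6} is required.
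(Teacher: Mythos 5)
Your proposal is correct and follows the same route as the paper, which obtains Theorem \ref{cmp20} precisely as the $k=n$ specialization of Theorem \ref{nc0'}: the matrix $\bs$ vanishes, the coinduced Poisson structure on $N$ is zero so that the $F_i$ are themselves pull-backs of $n$ independent Casimir functions, and the transverse coordinates $q^A$ (with their block $\Om_{AB}dq^A\w dq^B$) disappear since $2(n-m)=0$. Your additional checks (completeness of the $\up_\la$ reducing to that of the $\vt_i$, and the dimension count $n+n=2n=\di Z$) are exactly the bookkeeping the paper leaves implicit.
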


In a general setting, one considers commutative partially
integrable systems on a symplectic manifold (Remark \ref{kk506}).
The Poincar\'e--Lyapounov--Nekhoroshev theorem
\cite{gaeta,gaeta03,nekh94} generalizes the Liouville--Arnold one
to a commutative partially integrable system with compact
invariant submanifolds. Forthcoming Theorem \ref{nc6} is concerned
with a generic commutative partially integrable system on a
symplectic manifold \cite{jmp03,ijgmmp09a,book15}.

Given a commutative partially integrable system $S=\{S_1,\ldots,
S_m\}$ of $m\leq n$ independent smooth real functions in
involution on a symplectic manifold $(Z,\Om)$, we have a fibred
manifold
\mar{g106}\beq
\wh S:Z\to X=\op\times^m S_\la(Z)\subset\mathbb R^m \label{g106}
\eeq
over a domain $X\subset \mathbb R^m$. We agree to call its fibres
the invariant submanifolds of a commutative partially integrable
system though it is not a superintegrable system (Remark
\ref{kk506}),  unless $m=n$ and it is a completely integrable
system.

Hamiltonian vector fields $v_\la$ of generating functions $S_\la$
are mutually commutative and independent. Consequently, they span
an $m$-dimensional involutive distribution on $Z$ whose maximal
integral manifolds constitute an isotropic foliation $\cS$ of $Z$.
Its leaves are called the integral manifolds of a commutative
partially integrable system. Because functions $S_\la$ are
constant on leaves of this foliation, each fibre of the fibred
manifold $\wh S$ (\ref{g106}) is foliated by the leaves of a
foliation $\cS$.

If $m=n$, we are in the case of a completely integrable system,
and its integral manifolds (i.e., leaves of $\cS$) are connected
components of its invariant submanifolds (i.e., fibres of the
fibred manifold $\wh S$ (\ref{g106})).

\begin{theorem} \label{nc6} \mar{nc6}
Let a commutative partially integrable system
$S=\{S_1,\ldots,S_m\}$ on a symplectic manifold $(Z,\Om)$ satisfy
the following conditions.

(i) The Hamiltonian vector fields $v_\la$ of $S_\la$ are complete.

(ii) The foliation $\cS$ is a fibred manifold $\pi_\cS:Z\to N$
whose fibres are mutually diffeomorphic and, being integral
manifolds, are connected.

\noindent Then the following hold
\cite{jmp03,book10,ijgmmp09a,book15}.

(I) The fibres of $\cS$ are diffeomorphic to the toroidal cylinder
(\ref{g120a}).

(II) Given a fibre $M$ of $\cS$, there exists its open saturated
neighborhood $U_M$ such that the restriction of $\pi_\cS$ to $U_M$
is a trivial principal bundle with the structure additive group
(\ref{g120a}), and we have a composite fibre bundle
\mar{x30}\beq
\wh S: U_M\ar \pi_\cS(U_M)\ar \wh S(U_M)\subset\mathbb R^m.
\label{x30}
\eeq

(III) A neighborhood $U_M$ is provided with the bundle
(generalized action-angle) coordinates
\mar{x24}\beq
 (I_\la,q^A,y^\la)\to
(I_\la,q^A)\to (I_\la), \quad \la=1,\ldots,m, \quad A=1,\ldots,
2(n-m), \label{x24}
\eeq
such that: (i) the action coordinates $(I_\la)$ on $\wh S(U_M)$
are expressed into the values of generating functions $S_\la$;
(ii) the angle coordinates $(y^\la)$ are coordinates on the
toroidal cylinder (\ref{g120a}); and (iii) a symplectic form $\Om$
on $U_M$ reads
\be
\Om= dI_\la\w dy^\la +  \Om_{AB}(I_\bt,q^C) dq^A\w dq^B.
\ee
\end{theorem}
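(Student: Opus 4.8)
The plan is to follow the classical Liouville--Arnold strategy, adapted to the partial (isotropic, $m\le n$) setting, in three stages corresponding to items (I)--(III), and then to return the symplectic form to normal form.

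First I would pass from the generating functions to the geometry of their flows. Because the Hamiltonian vector fields $v_\la$ pairwise commute (the $S_\la$ being in involution) and are complete by condition (i), their flows generate a smooth, locally free action of the additive group $\mathbb R^m$ on $Z$ whose orbits are exactly the maximal integral manifolds of the distribution spanned by the $v_\la$, i.e.\ the leaves of $\cS$ and hence the (connected, by condition (ii)) fibres of $\pi_\cS:Z\to N$. Each such fibre is then an orbit $\mathbb R^m/\G_z$, where the isotropy group $\G_z$ is discrete (the action is locally free, the $v_\la$ being independent) and closed, so $\G_z\cong\mathbb Z^r$ for some $r$. This identifies every fibre with $\mathbb R^m/\mathbb Z^r\cong\mathbb R^{m-r}\times T^r$, the toroidal cylinder (\ref{g120a}); the integer $r$ is constant on $N$ because the fibres are mutually diffeomorphic, which proves item (I).

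Next, for item (II), I would upgrade this fibrewise description to a local trivialization. Restricting to a contractible neighborhood $N_M$ of $\pi_\cS(M)$, I would produce smooth sections $\la\mapsto e_\la(\cdot)$ of the period lattice, i.e.\ smooth $\mathbb R^m$-valued functions on $N_M$ whose values span $\G_z$ at each point; their smooth dependence follows from the implicit function theorem applied to the orbit return conditions together with the constancy of $r$. Straightening the lattice by the resulting frame exhibits $U_M=\pi_\cS^{-1}(N_M)\to N_M$ as a trivial principal bundle with structure group $\mathbb R^{m-r}\times T^r$, and composing with the quotient $N_M\to\wh S(U_M)$ (the leaves of $\cS$ refining the invariant submanifolds) yields the composite fibration (\ref{x30}).

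Finally, for item (III), I would construct the generalized action--angle coordinates. The angle coordinates $(y^\la)$ are the linear coordinates along the $\mathbb R^m$-orbits in which the flows are translations, normalized by the lattice frame so that the $T^r$-directions have unit period. The transverse coordinates $q^A$, $A=1,\ldots,2(n-m)$, are pulled back from the fibres of $N_M\to\wh S(U_M)$, on which the induced structure is symplectic of rank $2(n-m)$ since $\cS$ is isotropic. The action coordinates $(I_\la)$ are functions of the $S_\la$ alone, pulled back from $\wh S(U_M)$; I would obtain them as primitives of the $1$-form on $\wh S(U_M)$ whose components are the periods of a local symplectic potential $\theta$ ($d\theta=\Om$) along the frame $e_\la$, closedness of this $1$-form following from $d\Om=0$ and the isotropy of the leaves. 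Using $i_{v_\la}\Om=-dS_\la$ and the involution relations, I would then check that $\Om$ has no cross terms between the $(I_\la,y^\la)$ block and the transverse block, so that $\Om=dI_\la\w dy^\la+\Om_{AB}(I_\bt,q^C)\,dq^A\w dq^B$.

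I expect the main obstacle to be this last normalization: producing action coordinates simultaneously conjugate to the angle coordinates and fully decoupling $\Om$ into the canonical $dI_\la\w dy^\la$ part and a purely transverse part $\Om_{AB}\,dq^A\w dq^B$, with the mixed $dI_\la\w dq^A$ and $dy^\la\w dq^A$ terms eliminated. This is precisely the feature absent from the completely integrable case (Theorem \ref{cmp20}), where the transverse block is empty, and it requires a careful choice of the transversal carrying the $q^A$ together with a Moser-type or generating-function argument to absorb the cross terms.
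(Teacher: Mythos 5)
Your treatment of items (I) and (II) is essentially the paper's own route (it is the argument given for Theorem \ref{bi0}): the complete commuting Hamiltonian vector fields define a locally free action of $\mathbb R^m$ whose orbits are the fibres, the isotropy subgroups are lattices $\mathbb Z^r$ with $r$ constant because the fibres are mutually diffeomorphic, and smooth fields of lattice generators, obtained from a section of $\pi_\cS$, are used to straighten the bundle into the trivial principal bundle (\ref{z10'}). No objection there.

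The genuine gap is in item (III), and you flagged it yourself: you do not prove that the mixed terms can be removed, you only ``expect'' that a careful choice of transversal plus ``a Moser-type or generating-function argument'' will do it. That postponed step \emph{is} the content of the theorem beyond the completely integrable case, so the proof is incomplete exactly where it matters; moreover, part of your worry is spurious and the rest is settled by a different mechanism than the one you name. The $dy^\la\w dq^A$ block is no obstacle at all: since the $S_\la$ are pulled back from $\wh S(U_M)$, the relation $v_\la\rfloor\Om=-dS_\la$ says that contracting $\Om$ with any vertical vector yields a combination of the $dS_\mu$; hence $\dr/\dr y^\la\rfloor\Om$ has no $dy^\mu$ and no $dq^A$ components, which kills the fibre--fibre block (isotropy) and the fibre--transversal block simultaneously. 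What actually needs work are the $dS\w dS$, $dS\w dq$ and $dS\w dy$ blocks, and the mechanism used in the literature the paper cites --- and exhibited leafwise in the paper's proof of Theorem \ref{bi100} --- is an explicit Liouville-form computation, not a Moser deformation. One first shows $\Om$ is exact on $U_M$; this step you also skipped, and it is not free, since $H^2_{\rm DR}(U_M)\cong H^2_{\rm DR}(T^r)\neq 0$, so your potential $\theta$ need not exist on the whole toroidal domain: exactness holds precisely because isotropy of the fibres forbids components of $\Om$ on the torus generators $d\vf^i\w d\vf^j$. Then, writing $\Om=d\Xi$, the conditions $v_\la\rfloor\Om=-dS_\la$ force the $dS$-components of $\Xi$ to be independent of the angle coordinates, the action coordinates are read off as the angle components of $\Xi$ (these are your period integrals), and the residual cross terms are absorbed by explicit angle shifts of the type $\tau^a=-t^a+E^a$, $\f^i=\vf^i-E^i$, whose existence is guaranteed by closedness of $\Om$ rather than by any auxiliary deformation argument. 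So your skeleton is right, but the step you deferred is the theorem, and the tool you propose for it is not the one that closes it.
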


\section{Lie algebra superintegrable systems}

Following the original Mishchenko--Fomenko theorem
\cite{bols03,fasso05,mishc}, let us consider superintegrable
systems whose generating functions $F=\{F_1,\ldots,F_k\}$ form a
$k$-dimensional real Lie algebra $\cG$ of corank $m=2n-k$ with the
commutation relations
\mar{zz60}\beq
\{F_i,F_j\}= c_{ij}^h F_h, \qquad c_{ij}^h={\rm const.}
\label{zz60}
\eeq
We agree to call them the Lie algebra superintegrable systems. In
this case, the fibration $\wh F$ (\ref{nc4}) is a momentum mapping
of $Z$ onto a domain $N$ of the Lie coalgebra $\cG^*$ (Section
6.2) which is provided with the coordinates $x_i$ in item (i) of
Definition \ref{i0} \cite{book10,guil,book15}. Accordingly, the
coinduced Poisson structure $\{,\}_N$ on $N$ coincides with the
canonical Lie--Poisson structure on $\cG^*$, and it is given by a
Poisson bivector field
\be
w=\frac12 c_{ij}^h x_h\dr^i\w\dr^j.
\ee

In view of the relations (\ref{m81}), Hamiltonian vector fields
$\vt_i$ of generating functions $F_i$ of the Lie algebra
superintegrable system (\ref{zz60}) make up a real Lie algebra
$\cG$ with the commutation relations
\mar{x8}\beq
[\vt_i,\vt_j]= c_{ij}^h \vt_h. \label{x8}
\eeq
Since the morphism $w^\sh$ (\ref{m51}) is of maximal rank,
Hamiltonian vector fields $\vt_i$ are independent, i.e.,
$\op\w^k\vt_i\neq 0$.

Following the conditions of Theorem \ref{nc0'}, let us assume that
Hamiltonian vector fields $\vt_i$ are complete. In accordance with
the above mentioned theorem \cite{onish,palais}, they define a
Hamiltonian action on $Z$ of a simply connected Lie group $G$
whose Lie algebra is isomorphic to $\cG$. Since vector fields
$\vt_i$ are independent, the action of $G$ on $Z$ is locally free,
i.e., isotropy groups of points of $U$ are discrete subgroups of
$G$. Orbits of $G$ coincide with $k$-dimensional maximal integral
manifolds of a regular distribution on $Z$ spanned by Hamiltonian
vector fields $\vt_i$ \cite{susm}. They constitute a foliation
$\cF$ of $Z$. Then the fibration $\wh F$ (\ref{nc4}) sends its
leaves $\cF_z$ through points $z\in Z$ onto the orbits $\ccG_{\wh
F(z)}$ of the coadjoint action (\ref{z211}) of $G$ on $\cG^*$,
which coincide with the canonical symplectic foliation $\ccG$ of
$\cG^*$. Conversely, $\cF_z=\wh F^{-1}(\ccG_{\wh F(z)})$ in
accordance with item (iii) of Lemma \ref{nc8}.

It should be noted note that Casimir functions $C\in \cC(\cG^*)$
of the Lie--Poisson structure on $\cG^*$ are exactly the coadjoint
invariant functions on $\cG^*$. They are constant on orbits of the
coadjoint action of $G$ on $\cG^*$. Consequently, their pull-back
$\wh F^*C$ are constant on leaves of a foliation $\cF$. Therefore,
the real Lie algebra $\cG$ (\ref{x8}) is extended to a Lie algebra
over a subring $\cC =\wh F^*\cC(N)\subset C^\infty(Z)$ of the
pull-back $\wh FC$ of Casimir functions on $N\subset \cG^*$.

Now let us assume that a foliation $\cF$ is a fibred manifold
$\pi_\cF: Z\to \pi_\cF(Z)$ whose fibres are mutually
diffeomorphic. This implies that a symplectic foliation $\ccG$ of
$\wh F(Z)$ also is a fibred manifold $\pi_\ccG:\wh F(Z)\to
\pi_\cF(Z)$. Thus, we have a composite fibred manifold
\mar{x20}\beq
\pi_\cF=\pi_\ccG\circ\wh F: Z\ar \wh F(Z)\ar \pi_\cF(Z),
\label{x20}
\eeq
so that a fibration $\wh F$ obeys the conditions of Theorem
\ref{nc0'}. It follows that, given a leaf $V$ of $\cF$, there
exists its open saturated neighborhood $U_V$ such that $\wh
F(U_V)\subset \cG^*$ is provided with some family of $m$
independent Casimir functions $C=(C_1,\ldots,C_m)$ and, restricted
to $U_V$, the composite fibred manifold $\pi_\cF$ (\ref{x20})
becomes a composite bundle
\mar{x21}\ben
&& \pi_\cF=\wh C\circ\wh F: U_V\ar \wh F(U_V)\ar \op\times^m
C_\la(\wh F(U_V))= \label{x21}\\
&& \qquad \op\times^m \wh F^*C_\la(U_V)=\pi_\cF(U_V) \nonumber
\een
in toroidal cylinders. In accordance with Remark \ref{x2}, fibres
of $\wh F$ are integral manifolds of a commutative partially
integrable system $S=(S_\la)$ on $Z$ of the pull-back $S_\la=\wh
F^*C_\la$ of Casimir functions $C_\la$. Then the composite bundle
(\ref{x21}) takes the form (\ref{x30}). Accordingly, it is endowed
with the bundle (generalized action-angle) coordinates
\mar{x23}\beq
(I_\la,q^A,y^\la)\to (I_\la,q^A)\to (I_\la), \quad \la=1,\ldots,m,
\quad A=1,\ldots, k-m, \label{x23}
\eeq
which are generalized action-angle coordinates (\ref{x24}) in
Theorem \ref{nc6} when $S_\la$ are the pull-back $\wh F^*C_\la$ of
Casimir functions $C_\la$. Note that the latter in turn are the
pull-back $C_\la=\pi_\ccG^*\Phi_\la$ of some functions $\Phi_\la$
on a base $\pi_\cF(U_V)$ of the composite bundle (\ref{x21}).

\section{Commutative partially integrable systems on Poisson manifolds}

As was mentioned above invariant submanifolds of a superintegrable
system are integral manifolds of a certain commutative partially
integrable system on a symplectic manifold (Remark \ref{x2}).
Therefore, we start our analysis of partially superintegrable
systems with commutative partially integrable systems on Poisson
manifolds in Example \ref{x7}
\cite{fior,jmp03,kurov,book10,book15}.

A key point is that a commutative partially integrable system
admits different compatible Poisson structures (Theorem
\ref{bi92}). Treating commutative partially integrable systems, we
therefore are based on a wider notion of the commutative dynamical
algebra \cite{jmp03}.

Let we have $m$ mutually commutative vector fields $\{\vt_\la\}$
on a connected smooth manifold $Z$ which are functionally
independent (i.e., $\op\w^m\vt_\la\neq 0$) everywhere on $Z$. We
denote by $\cC\subset C^\infty(Z)$ a $\mathbb R$-subring of smooth
real functions $f$ on $Z$ whose derivations $\vt_\la\rfloor df$
along $\vt_i$ vanish for all $\vt_\la$. Let $\cA$ be an
$m$-dimensional Lie $\cC$-algebra generated by the vector fields
$\{\vt_\la\}$.

\begin{definition} \label{kk1} \mar{kk1}
We agree to call $\cA$ the commutative dynamical algebra.
\end{definition}

For instance, given a commutative partially integrable system $S$
on a symplectic manifold (Remark \ref{kk506}), the Hamiltonian
vector fields of its generating functions constitute a commutative
dynamical algebra in accordance with Definition \ref{kk1}.

In a general setting, let us now consider a commutative dynamical
algebra on a Poisson manifold.

\begin{definition} \label{cc1} \mar{cc1} Let $(Z,w)$ be a
(regular) Poisson manifold (Section 6.4) and $\cA$ an
$m$-dimensional commutative dynamical algebra on $Z$. A triple
$(Z,\cA,w)$ is said to be a commutative partially integrable
system if the following hold.

(a) The generators $\vt_\la$ are Hamiltonian vector fields of some
independent functions $S_\la\in \cC$ on $Z$.

(b) All elements of $\cC\subset C^\infty(Z)$ are mutually  in
involution, i.e., their Poisson brackets $\{f,f'\}_w$, $f,f'\in
\cC$, equal zero.
\end{definition}

It follows at once from this definition that the Poisson structure
$w$ is at least of rank $2m$, and that $\cC$ is a commutative
Poisson algebra. We call the functions $S_\la$ in item (a) of
Definition \ref{cc1} the  generating functions of a commutative
partially integrable system, which is uniquely defined by a family
$(S_1,\ldots,S_m)$ of these functions.

\begin{definition} \label{kk2} \mar{kk2} We say that a Poisson
structure in Definition \ref{kk1} is compatible.
\end{definition}

One can show (Theorem \ref{bi92}) that a compatible Poisson
structure is of rank $2m$.

\begin{remark} \label{000} \mar{000}
If $2m=\di Z$ in Definition \ref{cc1}, we have a  completely
integrable system on a symplectic manifold $Z$ (Definition
\ref{cmp21}). However, a commutative partially integrable system
on a symplectic manifold in Remark \ref{kk506} fails to be well in
accordance with Definition \ref{cc1} because it does not satisfy
item (b) of this Definition if $m<n$.
\end{remark}

If $2m<\di Z$, there exist different compatible Poisson structures
on $Z$ which bring a commutative dynamical algebra $\cA$ into a
commutative partially integrable system.

Forthcoming Theorem \ref{bi0} shows that, under certain
conditions, there exists a compatible Poisson structure on an open
neighborhood of an invariant submanifold $M$ of a commutative
dynamical algebra.  Theorems \ref{bi92} -- \ref{bi72}  describe
all these Poisson structures around an invariant submanifold
$M\subset Z$ of $\cA$ \cite{jmp03}. Given a commutative partially
integrable system $(w,\cA)$ in Theorem \ref{bi92}, the bivector
field $w$ (\ref{bi42}) can be brought into the canonical form
(\ref{kk500}) with respect to generalized action-angle coordinates
in Theorem \ref{bi100}. This theorem extends the above-mentioned
Liouville--Arnold and Poincar\'e--Lyapounov--Nekhoroshev theorems
to the case of a Poisson structure and a non-compact invariant
submanifold \cite{jmp03,book10,book15}.

Given a commutative dynamical algebra $\cA$ on a manifold $Z$, let
$G$ be the group of local diffeomorphisms of $Z$ generated by the
flows of its elements. The orbits of $G$ are maximal invariant
submanifolds of $\cA$ (we follow the terminology of \cite{susm}).
Tangent spaces to these submanifolds form a (regular) distribution
$\cV\subset TZ$ whose maximal integral manifolds coincide with
orbits of $G$. Being involutive, this distribution yields a
foliation $\cS$ of $Z$ (Section 6.1).

\begin{theorem} \label{bi0} \mar{bi0}
Let $\cA$ be a commutative dynamical algebra, $M$ its invariant
submanifold, and $U$ a saturated open neighborhood of $M$ (Remark
\ref{cmp8}). Let us suppose that:

(i) vector fields $\vt_\la$ on $U$ are complete,

(ii) a foliation $\cS$ of $U$ is a fibred manifold $\pi_\cS$ with
mutually diffeomorphic fibres.

\noindent Then the following hold \cite{jmp03,book10,book15}.

(I) Leaves of $\cS$ are diffeomorphic to the toroidal cylinder
(\ref{g120a}).

(II) There exists an open saturated neighborhood of $M$, say $U$
again, which is a trivial principal bundle
\mar{z10'}\beq
U=N\times(\mathbb R^{m-r}\times T^r)\ar^{\pi_\cS} N \label{z10'}
\eeq
with the structure additive group (\ref{g120a}) over a domain
$N\subset \mathbb R^{\di Z-m}$.

(III) If $2m\leq\di Z$, there exists a Poisson structure of rank
$2m$ on $U$ such that $\cA$ is a commutative partially integrable
system in accordance with Definition \ref{cc1}.
\end{theorem}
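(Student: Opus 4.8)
The plan is to establish the three statements in order, treating (I) and (II) as essentially a structure theorem for the foliation $\cS$ under completeness and triviality hypotheses, and then building the compatible Poisson structure in (III) by hand using the coordinates supplied by (II). First I would analyze the leaves of $\cS$. By hypothesis (i) the generators $\vt_\la$ are complete and mutually commutative, so they define a locally free action of the abelian group $\mathbb R^m$ on $U$ whose orbits are exactly the leaves of $\cS$. For each leaf $M'$, the isotropy subgroup of a point is a discrete subgroup (a lattice) $\G\subset\mathbb R^m$, and since the $\vt_\la$ are functionally independent the orbit is an immersed submanifold $\mathbb R^m/\G$. Because $\G$ is a closed subgroup of $\mathbb R^m$, it is isomorphic to $\mathbb Z^r$ for some $0\le r\le m$, whence $M'\cong\mathbb R^{m-r}\times T^r$, which is statement (I). The hypothesis that the fibres of $\pi_\cS$ are mutually diffeomorphic forces $r$ to be constant across $U$, so a single toroidal cylinder model works throughout.

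For statement (II) I would promote the pointwise identification to a trivialization. Shrinking $U$ to a saturated neighborhood over a contractible (hence trivial) base $N\subset\mathbb R^{\di Z-m}$, the $\mathbb R^m$-action together with the common lattice $\G$ gives $\pi_\cS:U\to N$ the structure of a principal bundle with structure group $\mathbb R^{m-r}\times T^r=\mathbb R^m/\G$; triviality follows because the base is contractible, yielding the product decomposition (\ref{z10'}). Concretely this produces angle-type fibre coordinates $y^\la$ on the toroidal cylinder (adapted to the flows of $\vt_\la$) and coordinates on $N$; the dimension count $\di N=\di Z-m$ is immediate. This part is a direct transcription of the argument already used for Theorem \ref{nc6}, so I would simply invoke that machinery.

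Statement (III) is where the real work lies and is the step I expect to be the main obstacle. The goal is to produce a Poisson bivector $w$ of rank exactly $2m$ on $U$ for which $\cA$ becomes a commutative partially integrable system obeying Definition \ref{cc1}; in particular the generating functions $S_\la$ (with $\vt_\la$ their Hamiltonian vector fields) must be in involution and must generate, together with the Casimir functions, a commutative Poisson subalgebra $\cC$. My approach is constructive: using the coordinates from (II), split the $\di Z-m$ coordinates on $N$ into $m$ ``action'' coordinates $I_\la$ and $\di Z-2m$ ``Casimir-transverse'' coordinates, and define
\begin{equation}
w=\dr^\la_I\w\dr_{y^\la},
\end{equation}
where $\dr_{y^\la}=\vt_\la$ and $\dr^\la_I$ is the coordinate field dual to $I_\la$. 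This bivector is manifestly of rank $2m$ and closed (its components are constant in these coordinates, so $[w,w]=0$ trivially), the $S_\la=I_\la$ are in involution with vanishing mutual brackets, and the remaining transverse coordinates become Casimir functions — exactly recovering item (b) of Definition \ref{cc1}.

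The genuine difficulty is twofold. First, one must verify that the action coordinates $I_\la$ can be chosen so that $\vt_\la$ really is the Hamiltonian vector field $w^\sh(dI_\la)$ with respect to this $w$; this is the analogue of the construction of action variables in the Liouville--Arnold theorem and requires that the $\vt_\la$ be expressible as the Hamiltonian fields of functions depending only on the base, which is where hypothesis (ii) (the fibration structure of $\cS$) is essential. Second, one must confirm that when $2m<\di Z$ there is genuine freedom in choosing $w$ on the transverse directions — the Casimir directions carry no constraint — so that $w$ extends smoothly and globally over the trivial bundle (\ref{z10'}); here triviality of the bundle and contractibility of $N$ remove any cohomological obstruction to the global extension. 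I would close the argument by checking the rank is constant (hence the structure is regular) and that $\cC$ is precisely the ring of functions annihilated by all $\vt_\la$, matching Definition \ref{cc1}. The boundary case $2m=\di Z$ reduces to the symplectic completely integrable situation of Theorem \ref{cmp20}, so no separate treatment is needed there.
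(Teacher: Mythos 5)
Your treatment of (I) matches the paper's, but there is a genuine gap running through (II) and (III), and it sits exactly where the paper does its real work: the isotropy subgroups $K_x\subset\mathbb R^m$ of the $\mathbb R^m$-action are \emph{not} a ``common lattice $\Gamma$''. They are pairwise isomorphic to $\mathbb Z^r$ (this is what mutual diffeomorphism of the fibres buys), but as subgroups of $\mathbb R^m$ they vary from fibre to fibre, so there is no ready-made action of a fixed group $\mathbb R^{m-r}\times T^r$ on $U$. To obtain the principal bundle (\ref{z10'}) one must first show that generators $v_i(x)$ of $K_x$ can be chosen smoothly in $x$ (the paper does this by solving $g(s_i^\al)\si(x)=\si(x)$ along a section $\si$ of $\pi$), and then intertwine the groups $G_x=\mathbb R^m/K_x$ with a fixed model $G_0$ by the smooth family of automorphisms $A(x)$ (\ref{d6}). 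Contractibility of $N$ gives triviality only \emph{after} this principal bundle structure exists; it cannot replace its construction.

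The same oversight invalidates your step (III). In the trivialization produced by (II) the original generators are not coordinate fields: they take the form (\ref{ww25}), $\vt_a=\dr_a$, $\vt_i=-(BC^{-1})^a_i(x)\dr_a+(C^{-1})^k_i(x)\dr_k$, with genuinely $x$-dependent coefficients, precisely because the lattice varies. Hence your identification $\dr_{y^\la}=\vt_\la$ fails, and with it the claim that $w=\dr^\la_I\w\vt_\la$ has constant components and is ``trivially'' Poisson. Indeed this bivector need not satisfy the Jacobi identity: take $m=2$, fibre coordinates $(r^1,r^2)$, base coordinates including $(J_1,J_2)$, and the commuting independent vertical fields $\vt_1=\dr_1+g(J_1)\dr_2$, $\vt_2=\dr_2$; for $w=\dr^1\w\vt_1+\dr^2\w\vt_2$ one gets
\begin{equation*}
\{J_1,r^1\}_w=1,\qquad \{J_1,r^2\}_w=g(J_1),\qquad \{r^1,r^2\}_w=0,
\end{equation*}
and the Jacobi sum for the triple $(J_1,r^1,r^2)$ equals $g'(J_1)\neq 0$ in general. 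The paper's resolution is to define $w=\dr^\la\w\dr_\la$ (\ref{kk500}) using the coordinate (fundamental) vector fields of the $G_0$-action, which is Poisson for free, and then to satisfy condition (a) of Definition \ref{cc1} not by making the original $\vt_\la$ Hamiltonian but by \emph{re-choosing the generators of $\cA$}: since the matrix in (\ref{ww25}) is invertible with entries in $\cC$, the Lie $\cC$-algebra $\cA$ is equally generated by the Hamiltonian fields $\dr_\la$ of $S_\la=J_\la$. Your own text flags this first difficulty but offers no mechanism to resolve it; this $\cC$-algebra regeneration step is the missing idea.
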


\begin{proof} (I) Since $m$-dimensional leaves of the foliation $\cF$ admit $m$
complete independent vector fields, they are locally affine
manifolds diffeomorphic to the toroidal cylinder (\ref{g120a}).

(II) Since a foliation $\cF$ of $U$ is a fibred manifold by virtue
of item (ii), one can always choose an open fibred neighborhood of
its fibre $M$, say $U$ again, over a domain $N$ such that this
fibred manifold
\mar{d20}\beq
\pi:U\to N \label{d20}
\eeq
admits a section $\si$. In accordance with the above mentioned
theorem \cite{onish,palais}, complete Hamiltonian vector fields
$\vt_\la$ define an action of a simply connected Lie group $G$ on
$Z$. Because vector fields $\vt_\la$ are mutually commutative, it
is an additive group $\mathbb R^m$ whose group space is
coordinated by parameters $s^\la$ of the flows with respect to the
basis $\{e_\la=\vt_\la\}$ for its Lie algebra. Orbits of a group
$\mathbb R^m$ in $U\subset Z$ coincide with fibres of the fibred
manifold (\ref{d20}). Since vector fields $\vt_\la$ are
independent everywhere on $U$, the action of $\mathbb R^m$ on $U$
is locally free, i.e., isotropy groups of points of $U$ are
discrete subgroups of a group $\mathbb R^m$. Given a point $x\in
N$, the action of $\mathbb R^m$ on a fibre $M_x=\pi^{-1}(x)$
factorizes as
\mar{d4}\beq
\mathbb R^m\times M_x\to G_x\times M_x\to M_x \label{d4}
\eeq
through the free transitive action on $M_x$ of the factor group
$G_x=\mathbb R^m/K_x$, where $K_x$ is the isotropy group of an
arbitrary point of $M_x$. It is the same group for all points of
$M_x$ because $\mathbb R^m$ is a commutative group. Clearly, $M_x$
is diffeomorphic to a group space of $G_x$. Since fibres $M_x$ are
mutually diffeomorphic, all isotropy groups $K_x$ are isomorphic
to the group $\mathbb Z_r$ for some fixed $0\leq r\leq m$.
Accordingly, the groups $G_x$ are isomorphic to the additive group
(\ref{g120a}). Let us bring the fibred manifold (\ref{d20}) into a
principal bundle with a structure group $G_0$, where we denote
$\{0\}=\pi(M)$. For this purpose, let us determine isomorphisms
$\rho_x: G_0\to G_x$ of a group $G_0$ to groups $G_x$, $x\in N$.
Then a desired fibrewise action of $G_0$ on $U$ is defined by the
law
\mar{d5}\beq
G_0\times M_x\to\rho_x(G_0)\times M_x\to M_x. \label{d5}
\eeq
Generators of each isotropy subgroup $K_x$ of $\mathbb R^m$ are
given by $r$ linearly independent vectors of the group space
$\mathbb R^m$. One can show that there exist ordered collections
of generators $(v_1(x),\ldots,v_r(x))$ of the groups $K_x$ such
that $x\to v_i(x)$ are smooth $\mathbb R^m$-valued fields on $N$.
Indeed, given a vector $v_i(0)$ and a section $\si$ of the fibred
manifold (\ref{d20}), each field $v_i(x)=(s^\al_i(x))$ is a unique
smooth solution of an equation
\be
g(s_i^\al)\si(x)=\si(x), \qquad  (s_i^\al(0))=v_i(0),
\ee
on an open neighborhood of $\{0\}$. Let us consider the
decomposition
\be
v_i(0)=B_i^a(0) e_a + C_i^j(0) e_j, \qquad a=1,\ldots,m-r, \qquad
j=1,\ldots, r,
\ee
where $C_i^j(0)$ is a non-degenerate matrix. Since the fields
$v_i(x)$ are smooth, there exists an open neighborhood of $\{0\}$,
say $N$ again, where the matrices $C_i^j(x)$ are non-degenerate.
Then
\mar{d6}\beq
A(x)=\left(
\begin{array}{ccc}
\id & \qquad & (B(x)-B(0))C^{-1}(0) \\
0 & & C(x)C^{-1}(0)
\end{array}
\right) \label{d6}
\eeq
is a unique linear endomorphism
\be
(e_a,e_i)\to (e_a,e_j)A(x)
\ee
of a vector space $\mathbb R^m$ which transforms a frame
$\{v_\la(0)\}=\{e_a,v_i(0)\}$ into a frame
$\{v_\la(x)\}=\{e_a,\vt_i(x)\}$, i.e.,
\be
v_i(x)=B_i^a(x) e_a + C_i^j(x) e_j=B_i^a(0) e_a + C_i^j(0)
[A_j^b(x)e_b +A_j^k(x)e_k].
\ee
Since $A(x)$ (\ref{d6}) also is an automorphism of a group
$\mathbb R^m$ sending $K_0$ onto $K_x$, we obtain a desired
isomorphism $\rho_x$ of a group $G_0$ to a group $G_x$. Let an
element $g$ of a group $G_0$ be the coset of an element $g(s^\la)$
of a group $\mathbb R^m$. Then it acts on $M_x$ by the rule
(\ref{d5}) just as the element $g((A_x^{-1})^\la_\bt s^\bt)$ of a
group $\mathbb R^m$ does. Since entries of the matrix $A$
(\ref{d6}) are smooth functions on $N$, this action of a group
$G_0$ on $U$ is smooth. It is free, and $U/G_0=N$. Then the fibred
manifold (\ref{d20}) is a trivial principal bundle with a
structure group $G_0$. Given a section $\si$ of this principal
bundle, its trivialization $U=N\times G_0$ is defined by assigning
the points $\rho^{-1}(g_x)$ of a group space $G_0$ to the points
$g_x\si(x)$, $g_x\in G_x$, of a fibre $M_x$. Let us endow $G_0$
with the standard coordinate atlas $(r^\la)=(t^a,\vf^i)$ of the
group (\ref{g120a}). Then $U$ admits the trivialization
(\ref{z10'}) with respect to the bundle coordinates
$(x^A,t^a,\vf^i)$ where $x^A$, $A=1,\ldots,\di Z-m$, are
coordinates on a base $N$. The vector fields $\vt_\la$ on $U$
relative to these coordinates read
\mar{ww25}\beq
\vt_a=\dr_a, \qquad \vt_i=-(BC^{-1})^a_i(x)\dr_a +
(C^{-1})_i^k(x)\dr_k.\label{ww25}
\eeq
Accordingly, the subring $\cC$ restricted to $U$ is the pull-back
$\pi^*C^\infty(N)$ onto $U$ of a ring of smooth functions on $N$.

(III). Let us split coordinates $(x^A)$ on $N$ into some $m$
coordinates $(J_\la)$ and the rest $\di Z- 2m$ coordinates
$(z^A)$. Then we can provide the toroidal domain $U$ (\ref{z10'})
with the Poisson bivector field
\mar{kk500}\beq
w=\dr^\la\w\dr_\la \label{kk500}
\eeq
of rank $2m$. The independent complete vector fields $\dr_a$ and
$\dr_i$ are Hamiltonian vector fields of the functions $S_a=J_a$
and $S_i=J_i$ on $U$ which are in involution with respect to a
Poisson bracket $\{,\}_w$ defined by the bivector field $w$
(\ref{kk500}). By virtue of the expression (\ref{ww25}), the
Hamiltonian vector fields $\{\dr_\la\}$ generate the $\cC$-algebra
$\cA$. Therefore, $(w,\cA)$ is a commutative partially integrable
system on a Poisson manifold $(Z,w)$.
\end{proof}

\begin{remark} \label{kk501} \mar{kk501}
If fibres of a fibred manifold in item (ii) of Theorem \ref{bi0}
are assumed to be compact then this fibred manifold is a fibre
bundle and vertical vector fields on it (e.g., in condition (i) of
Theorem \ref{bi0}) are complete.
\end{remark}

A Poisson structure in Theorem \ref{bi0} is by no means unique as
follows.

\begin{theorem} \label{bi92} \mar{bi92}
Given the toroidal domain $U$ (\ref{z10'}) provided with bundle
coordinates $(x^A,r^\la)$, it is readily observed that, if a
Poisson bivector field on $U$ satisfies Definition \ref{cc1}, it
takes a form
\mar{bi20}\beq
w=w_1+w_2=w^{A\la}(x^B)\dr_A\w\dr_\la +
w^{\m\nu}(x^B,r^\la)\dr_\m\w \dr_\nu. \label{bi20}
\eeq
Conversely, given a Poisson bivector field $w$ (\ref{bi20}) of
rank $2m$ on the toroidal domain $U$ (\ref{z10'}), there exists a
toroidal domain $U'\subset U$ such that a commutative dynamical
algebra $\cA$ in Theorem \ref{bi0} is a commutative partially
integrable system on $U'$.
\end{theorem}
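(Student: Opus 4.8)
The plan is to work throughout in the bundle coordinates $(x^A,r^\la)$ on the toroidal domain $U$ (\ref{z10'}), in which the subring $\cC$ is precisely the pull-back $\pi^*C^\infty(N)$, i.e.\ the functions of the $x^A$ alone, and in which the generators $\vt_\la$ of $\cA$ have purely vertical components depending only on $x$ (formula (\ref{ww25})). Both implications are then read off the block decomposition $w=\frac12 w^{AB}\dr_A\w\dr_B+w^{A\la}\dr_A\w\dr_\la+\frac12 w^{\la\mu}\dr_\la\w\dr_\mu$.

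For the direct assertion, condition (b) of Definition \ref{cc1} applied to the coordinate functions $x^A,x^B\in\cC$ gives $w^{AB}=\{x^A,x^B\}_w=0$, so the horizontal block vanishes and each $V^A:=w^\sh(dx^A)=w^{A\la}\dr_\la$ is vertical. Because $\{x^A,x^B\}_w=0$, these Hamiltonian vector fields commute, and $[V^A,V^B]=0$ reads
\[
w^{A\mu}\dr_\mu w^{B\la}=w^{B\mu}\dr_\mu w^{A\la}.
\]
By condition (a) the generators $\vt_\la=w^\sh(dS_\la)$ have vertical components $(\dr_A S_\la)\,w^{A\mu}$, and by (\ref{ww25}) these are functions of $x$ forming an invertible matrix $M$; writing $J_{\la A}=\dr_A S_\la$ (again a function of $x$) we have $M=JW$ with $W=(w^{A\la})$, so $L:=M^{-1}J$ is an $x$-dependent left inverse, $L^\nu_A w^{A\mu}=\dl^\nu_\mu$. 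Contracting the displayed relation with $L^\nu_A$ and using that $L$ depends only on $x$ (so $\dr_\mu L^\nu_A=0$), the left-hand side becomes $L^\nu_A w^{A\mu}\,\dr_\mu w^{B\la}=\dr_\nu w^{B\la}$, whereas the right-hand side becomes $w^{B\mu}\,\dr_\mu(L^\nu_A w^{A\la})=w^{B\mu}\,\dr_\mu\dl^\nu_\la=0$. Hence $\dr_\nu w^{B\la}=0$ for every fibre index, i.e.\ $w^{A\la}=w^{A\la}(x^B)$, which is (\ref{bi20}).

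For the converse I would start from $w$ of the form (\ref{bi20}) of rank $2m$. Condition (b) then holds on all of $U$, since $w^{AB}=0$ makes any two functions of $x$ Poisson-commute. For condition (a) it is enough that $W=(w^{A\la})$ have rank $m$: the vertical fields $V^A=w^\sh(dx^A)$ then span the tangent spaces to the fibres of $\pi_\cS$, and since $w^{A\la}$ depends only on $x$ I may shrink the base $N$ to a domain on which one fixed $m\times m$ minor of $W$ stays invertible, obtaining a toroidal subdomain $U'$; the corresponding coordinate functions $x^{A_1},\dots,x^{A_m}\in\cC$ are then independent, Poisson-commute, and have independent vertical Hamiltonian vector fields, which therefore form a frame generating the same $\cC$-algebra $\cA$. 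To see that $\mathrm{rank}\,W=m$ I would compute $\Ker w^\sh$: a covector $\al_A\,dx^A+\bt_\la\,dr^\la$ lies in the kernel iff $W\bt=0$ together with a compatibility condition fixing $\al$ modulo $\Ker W^{\mathrm T}$, and the resulting count gives $\di \Ker w^\sh=(\di Z-m-\rho)+d$ with $\rho=\mathrm{rank}\,W$ and $0\le d\le m-\rho$; equating this to $\di Z-2m$ forces $d=\rho-m$, hence $\rho=m$.

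I expect the delicate step to be the fibre-independence of $w^{A\la}$ in the direct assertion. Neither ingredient suffices alone: there exist rank-$2m$ Poisson tensors of the shape $w^{A\la}(x,r)\dr_A\w\dr_\la$ (even for $m=1$) whose coefficients genuinely depend on the fibre coordinates, and these fail condition (a). The argument closes only because condition (a) supplies a left inverse of $W$ that is constant along the fibres, which is exactly what converts the commutation relation $w^{A\mu}\dr_\mu w^{B\la}=w^{B\mu}\dr_\mu w^{A\la}$ into the vanishing $\dr_\nu w^{A\la}=0$.
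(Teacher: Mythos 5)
Your proof is correct, but it proceeds differently from the paper's on both halves. For the direct implication the paper offers no argument at all (it is declared ``readily observed''), whereas you supply the real content: after getting $w^{AB}=0$ from condition (b), the fibre-independence of $w^{A\la}$ is extracted from condition (a) via the $x$-dependent left inverse $L=M^{-1}J$ of $W=(w^{A\la})$, which converts the commutation identity $w^{A\m}\dr_\m w^{B\la}=w^{B\m}\dr_\m w^{A\la}$ into $\dr_\n w^{B\la}=0$; this is a genuine and correct addition (and your observation that rank-$2m$ Poisson tensors with $r$-dependent mixed block exist and violate condition (a) confirms the step is not vacuous). For the converse, the paper's route is structural: it shows the projected vector fields $v^\la$ (\ref{bi25}) span an involutive rank-$m$ distribution on $N$ (involutivity coming from $[w,w]_{\mathrm{SN}}=0$), passes to the induced foliation $\cF_N$ and its adapted coordinates $(J_\la,z^A)$, and takes $S_\la=J_\la$ with Hamiltonian generators (\ref{bi93}); your route instead proves $\mathrm{rank}\,W=m$ pointwise by the kernel count (a fact the paper only asserts, in the guise of ``$\dr_\m$ can be expressed in the $v^A$'') and then simply selects $m$ base coordinates $x^{A_1},\ldots,x^{A_m}$ over a subdomain where a fixed minor of $W$ is invertible, using them as the action functions. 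Your selection argument is more elementary and fully proves the statement as written, since the transition matrices between $\{V^{A_j}\}$ and the generators (\ref{ww25}) depend only on $x$ and hence lie in $\cC$. What the paper's heavier construction buys is the normal form (\ref{bi42}) of $w$ in the adapted coordinates $(J_\la,z^A,r^\la)$, which is not a consequence of your choice of coordinate functions but is exactly what the proof of Theorem \ref{bi100} invokes when it ``employs Theorem \ref{bi92}''; so if one wants Theorem \ref{bi92} to serve its downstream role, the foliation $\cF_N$ and its adapted chart cannot be bypassed, while for the theorem taken in isolation your argument suffices.
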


\begin{remark}
It is readily observed that any Poisson bivector field $w$
(\ref{bi20}) fulfils condition (b) in Definition \ref{cc1}, but
condition (a) imposes a restriction on a toroidal domain $U$. A
key point is that the characteristic foliation $\cF$ of $U$
yielded by the Poisson bivector fields $w$ (\ref{bi20}) is the
pull-back of an $m$-dimensional foliation $\cF_N$ of a base $N$,
which is defined by the first summand $w_1$ (\ref{bi20}) of $w$.
With respect to the adapted coordinates $(J_\la,z^A)$,
$\la=1,\ldots, m$, on the foliated manifold $(N,\cF_N)$, the
Poisson bivector field $w$ reads
\mar{bi42}\beq
w= w^\m_\n(J_\la,z^A)\dr^\n\w \dr_\m +
w^{\m\n}(J_\la,z^A,r^\la)\dr_\m\w \dr_\n. \label{bi42}
\eeq
Then condition (a) in Definition \ref{cc1} is satisfied if
$N'\subset N$ is a domain of a coordinate chart $(J_\la,z^A)$ of
the foliation $\cF_N$. In this case, a commutative dynamical
algebra $\cA$ on a toroidal domain $U'=\pi^{-1}(N')$ is generated
by the Hamiltonian vector fields
\mar{bi93}\beq
\vt_\la=-w\lfloor dJ_\la=w^\m_\la\dr_\m \label{bi93}
\eeq
of the $m$ independent functions $S_\la=J_\la$.
\end{remark}

\begin{proof}
The characteristic distribution of the Poisson bivector field $w$
(\ref{bi20}) is spanned by Hamiltonian vector fields
\mar{bi21}\beq
v^A=-w\lfloor dx^A=w^{A\m}\dr_\m \label{bi21}
\eeq
and vector fields
\be
w\lfloor dr^\la= w^{A\la}\dr_A + 2w^{\m\la}\dr_\m.
\ee
Since $w$ is of rank $2m$, the vector fields $\dr_\m$ can be
expressed in the vector fields $v^A$ (\ref{bi21}). Hence, the
characteristic distribution of $w$ is spanned by the Hamiltonian
vector fields $v^A$ (\ref{bi21}) and the vector fields
\mar{bi25}\beq
v^\la=w^{A\la}\dr_A. \label{bi25}
\eeq
The vector fields (\ref{bi25}) are projected onto $N$. Moreover,
one can derive from the relation $[w,w]_\mathrm{SN}=0$ that they
generate a Lie algebra and, consequently, span an involutive
distribution $\cV_N$ of rank $m$ on $N$. Let $\cF_N$ denote the
corresponding foliation of $N$. We consider the pull-back
$\cF=\pi^*\cF_N$ of this foliation onto $U$ by the trivial
fibration $\pi$. Its leaves are the inverse images $\pi^{-1}(F_N)$
of leaves $F_N$ of the foliation $\cF_N$, and so is its
characteristic distribution
\be
T\cF=(T\pi)^{-1}(\cV_N).
\ee
This distribution is spanned by the vector fields $v^\la$
(\ref{bi25}) on $U$ and the vertical vector fields on $U\to N$,
namely,  the vector fields $v^A$ (\ref{bi21}) generating a
commutative dynamical algebra $\cA$. Hence, $T\cF$ is the
characteristic distribution of a Poisson bivector field $w$.
Furthermore, since $U\to N$ is a trivial bundle, each leaf
$\pi^{-1}(F_N)$ of the pull-back foliation $\cF$ is the manifold
product of a leaf $F_N$ of $N$ and the toroidal cylinder
(\ref{g120a}). It follows that the foliated manifold $(U,\cF)$ can
be provided with an adapted coordinate atlas
\be
\{(U_\iota,J_\la,z^A,r^\la)\}, \qquad \la=1,\ldots, m, \qquad
A=1,\ldots,\di Z-2m,
\ee
such that $(J_\la,z^A)$ are adapted coordinates on the foliated
manifold $(N,\cF_N)$. Relative to these coordinates, the Poisson
bivector field (\ref{bi20}) takes the form (\ref{bi42}). Let $N'$
be the domain of this coordinate chart. Then a commutative
dynamical algebra $\cA$ on a toroidal domain $U'=\pi^{-1}(N')$ is
generated by the Hamiltonian vector fields $\vt_\la$ (\ref{bi93})
of functions $S_\la=J_\la$. {}
\end{proof}

\begin{remark} \label{kk508} \mar{kk508}
Let us note that coefficients $w^{\m\nu}$ in the expressions
(\ref{bi20}) and (\ref{bi42}) are affine in coordinates $r^\la$
because of the relation $[w,w]_\mathrm{SN}=0$ and, consequently,
they are constant on tori.
\end{remark}

Now, let $w$ and $w'$ be two different Poisson structures
(\ref{bi20}) on the toroidal domain (\ref{z10'}) which make a
commutative dynamical algebra $\cA$ into different commutative
partially integrable systems $(w,\cA)$ and $(w',\cA)$.

\begin{definition} \label{cc2} \mar{cc2}
We agree to call a triple $(w,w',\cA)$ the bi-Poisson commutative
partially integrable system if any Hamiltonian vector field
$\vt\in\cA$ with respect to $w$ possesses the same Hamiltonian
representation
\mar{bi71}\beq
\vt=-w\lfloor df=-w'\lfloor df, \qquad f\in\cC, \label{bi71}
\eeq
relative to $w'$, and \textit{vice versa}.
\end{definition}

Definition \ref{cc2} establishes \textit{sui generis} an
equivalence between the commutative partially integrable systems
$(w,\cA)$ and $(w',\cA)$.

\begin{theorem} \label{bi72} \mar{bi72}
(I) The triple $(w,w',\cA)$ is a bi-Poisson partially integrable
system in accordance with Definition \ref{cc2} iff the Poisson
bivector fields $w$ and $w'$ (\ref{bi20}) differ in the second
terms $w_2$ and $w'_2$. (II) These Poisson bivector fields admit a
recursion operator.
\end{theorem}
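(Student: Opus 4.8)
\emph{Part (I).} The plan is to reduce the bi-Poisson condition to a single contraction. Recall from the proof of Theorem \ref{bi92} that on the toroidal domain $U$ (\ref{z10'}) the subring $\cC$ consists precisely of the pull-backs $\pi^*C^\infty(N)$, i.e. of functions $f=f(x^A)$ of the base coordinates alone, so that $df=\dr_Af\,dx^A$. Since $dx^A$ annihilates the vertical fields $\dr_\m$, the second summand $w_2=w^{\m\nu}\dr_\m\w\dr_\nu$ of (\ref{bi20}) drops out of the contraction, and using (\ref{bi21}) one gets
\[
-w\lfloor df=\dr_Af\,(-w\lfloor dx^A)=\dr_Af\,w^{A\m}\dr_\m,
\]
and in the same way $-w'\lfloor df=\dr_Af\,w^{\prime A\m}\dr_\m$. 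Each such field is vertical, hence an element of $\cA$, so Definition \ref{cc2} is equivalent to the identity $-w\lfloor df=-w'\lfloor df$ for all $f\in\cC$. Taking $f=x^A$ forces $w^{A\m}=w^{\prime A\m}$ for every $A$ and $\m$, that is $w_1=w'_1$; conversely, if $w_1=w'_1$ the two contractions agree identically. Hence the triple is bi-Poisson iff $w$ and $w'$ share their first terms, i.e. differ only in $w_2$ and $w'_2$.

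\emph{Part (II).} First I would record the consequence of (I) that $w$ and $w'$ carry the \emph{same} characteristic distribution. By the proof of Theorem \ref{bi92} this distribution $T\cF$ is spanned by the vertical fields $v^A=w^{A\m}\dr_\m$ (\ref{bi21}) and the projectable fields $v^\la=w^{A\la}\dr_A$ (\ref{bi25}), both built solely from the coefficients of $w_1$; as $w_1=w'_1$, the bivectors $w$ and $w'$ determine one and the same $T\cF$, on whose $2m$-dimensional leaves each of them is nondegenerate, and they share also the kernel of the contraction (the annihilator of $T\cF$, again fixed by $w_1$). I then define the recursion operator $R$ by the condition $w^{\prime\sh}=R\circ w^\sh$, where $w^\sh$ denotes the morphism $\a\mapsto-w\lfloor\a$. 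Because $\a\mapsto-w\lfloor\a$ and $\a\mapsto-w'\lfloor\a$ send $T^*U$ onto the common image $T\cF$ with the common kernel, the rule $R(-w\lfloor\a):=-w'\lfloor\a$ is unambiguous on $T\cF$; extending $R$ by the identity on a complementary subbundle transverse to $T\cF$ produces a global $(1,1)$-tensor on $U$ (only the restriction to the image of $w^\sh$ is constrained, so existence does not depend on the choice of complement).

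That $R$ is a genuine invertible recursion operator I would then read off from its action on the spanning fields. On $dx^A$ the two morphisms coincide, $-w\lfloor dx^A=-w'\lfloor dx^A=v^A$, so $R$ is the identity on the vertical directions $\dr_\m$. On $dr^\la$ a short contraction gives $-w\lfloor dr^\la=v^\la-2w^{\la\nu}\dr_\nu$ and $-w'\lfloor dr^\la=v^\la-2w^{\prime\la\nu}\dr_\nu$ with the same projectable part $v^\la$, whence
\[
R(v^\la)=v^\la+2\,(w^{\la\nu}-w^{\prime\la\nu})\,\dr_\nu .
\]
Thus in the frame $(v^\la,\dr_\m)$ of $T\cF$ the operator $R$ is unipotent and block lower-triangular, its sole off-diagonal block being proportional to $w_2-w'_2$; it is therefore invertible, and by construction it intertwines the two bundle morphisms $w^\sh$ and $w^{\prime\sh}$, which is exactly a recursion operator for the pair $(w,w')$. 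The step I expect to require the most care is not this linear algebra but the upgrade of $R$ to a recursion operator in the strong, Nijenhuis sense: writing $\dl w=w'-w=w'_2-w_2$ one has $[w,w']_{\mathrm{SN}}=[w,\dl w]_{\mathrm{SN}}=-\tfrac12[\dl w,\dl w]_{\mathrm{SN}}$, so the vanishing of the Nijenhuis torsion reduces to $\dl w$ being itself a Poisson bivector; extracting this from the affine dependence of $w_2$ and $w'_2$ on the angle coordinates (Remark \ref{kk508}) is the genuine technical point, whereas the bare existence of the intertwiner $R$ is immediate from the shared characteristic distribution.
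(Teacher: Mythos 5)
Your proposal is correct and takes essentially the paper's own route: your Part (I) is the explicit coordinate verification that the paper dismisses as ``easily justified,'' and your Part (II) --- defining $R$ on the common characteristic distribution by $R\circ w^\sh=w'^\sh$ via the leafwise isomorphisms and extending by the identity on a complement --- is precisely the construction in Lemma \ref{pp1}, from which the paper derives Theorem \ref{bi72}(II). Your closing concern about upgrading $R$ ``in the strong, Nijenhuis sense'' is unnecessary for this statement: the paper's Definition (\ref{pp0}) of a recursion operator demands only the intertwining relation $w'^\sh=R\circ w^\sh=w^\sh\circ R^*$ (and the second equality follows automatically from the first by skew-symmetry of $w$ and $w'$), with no vanishing-torsion condition, so your invertible intertwiner already completes the proof.
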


\begin{proof} (I). It is easily justified that, if Poisson bivector
fields $w$ (\ref{bi20}) fulfil Definition \ref{cc2}, they are
distinguished only by the second summand $w_2$. Conversely, as
follows from the proof of Theorem \ref{bi92}, the characteristic
distribution of the Poisson bivector field $w$ (\ref{bi20}) is
spanned by the vector fields (\ref{bi21}) and (\ref{bi25}). Hence,
all Poisson bivector fields $w$ (\ref{bi20}) distinguished only by
the second summand $w_2$ have the same characteristic
distribution, and they bring $\cA$ into a commutative partially
integrable system on the same toroidal domain $U'$. Then the
condition in Definition \ref{cc2} is easily justified.   (II). The
result follows from forthcoming Lemma \ref{pp1}.
\end{proof}

Given  a smooth real manifold $X$, let $w$ and $w'$ be Poisson
bivector fields of rank $2m$ on $X$, and let $w^\sh$ and $w'^\sh$
be the corresponding bundle homomorphisms (\ref{m51}). A
tangent-valued one-form $R$ on $X$ yields bundle endomorphisms
\mar{bi90}\beq
R: TX\to TX, \qquad R^*: T^*X\to T^*X. \label{bi90}
\eeq
It is called the  recursion operator if
\mar{pp0}\beq
w'^\sh=R\circ w^\sh=w^\sh\circ R^*. \label{pp0}
\eeq

\begin{lemma} \label{pp1} \mar{pp1}
A recursion operator between Poisson structures of the same rank
exists iff their characteristic distributions coincide.
\end{lemma}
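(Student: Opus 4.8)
The plan is to realise each characteristic distribution as the image of the corresponding anchor map, $\cV:=\im w^\sh$ and $\cV':=\im w'^\sh$, which by regularity are smooth subbundles of $TX$ of constant rank $2m$, and to lean throughout on the skew-symmetry of a bivector, $(w^\sh)^*=-w^\sh$. The single algebraic fact I would record at the outset is that the image of $w^\sh$ already determines its kernel: since $w^\sh\alpha=0$ means $\langle\beta,w^\sh\alpha\rangle=0$ for every $\beta$, and skew-symmetry rewrites this as $\langle\alpha,w^\sh\beta\rangle=0$ for every $\beta$, we obtain $\Ker w^\sh=\cV^0$, the annihilator of $\im w^\sh$ (and likewise $\Ker w'^\sh=(\cV')^0$).

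Suppose first that a recursion operator $R$ exists; in particular the relation $w'^\sh=w^\sh\circ R^*$ holds. Reading off images gives $\cV'=\im w'^\sh\subseteq\im w^\sh=\cV$ pointwise. As both distributions have constant rank $2m$, a $2m$-dimensional subspace sitting inside a $2m$-dimensional subspace must equal it, whence $\cV'=\cV$ and the characteristic distributions coincide. (Only one of the two defining relations is needed for this direction.)

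Conversely, assume $\cV=\cV'$. Then $\cV^0=(\cV')^0$, so by the opening remark $\Ker w^\sh=\Ker w'^\sh=\cV^0$, and both anchors factor through the quotient projection $q:T^*X\to T^*X/\cV^0$ as $w^\sh=\overline{w^\sh}\circ q$ and $w'^\sh=\overline{w'^\sh}\circ q$, with $\overline{w^\sh},\overline{w'^\sh}:T^*X/\cV^0\to\cV$ bundle isomorphisms onto the common image. I would then set $R:=\overline{w'^\sh}\circ(\overline{w^\sh})^{-1}$ on $\cV$ and extend it by zero on a smooth complement $\cW$ with $TX=\cV\opl\cW$, obtaining a bundle endomorphism of $TX$ (a tangent-valued one-form) that satisfies $R\circ w^\sh=\overline{w'^\sh}\circ(\overline{w^\sh})^{-1}\circ\overline{w^\sh}\circ q=\overline{w'^\sh}\circ q=w'^\sh$, which is the first defining relation. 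The second relation is then free: transposing $w'^\sh=R\circ w^\sh$ and using $(w^\sh)^*=-w^\sh$ and $(w'^\sh)^*=-w'^\sh$ yields $-w'^\sh=-w^\sh\circ R^*$, i.e. $w^\sh\circ R^*=w'^\sh$.

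The content is thus almost entirely fibrewise linear algebra; the one point demanding care — and the step I would flag as the main obstacle — is smoothness. The pointwise value of $R$ on $\cV$ is forced by the requirement $R\circ w^\sh=w'^\sh$, but one must verify that these values assemble into a genuine smooth endomorphism. This is exactly where constant rank $2m$ enters: it guarantees that $\cV$, the annihilator $\cV^0$, the quotient bundle $T^*X/\cV^0$, and a complement $\cW$ are all smooth vector (sub)bundles, so that $q$, the isomorphisms $\overline{w^\sh}$ and $\overline{w'^\sh}$, the inverse $(\overline{w^\sh})^{-1}$, and the extension by zero all depend smoothly on the base point.
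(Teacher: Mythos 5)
Your proof is correct and follows essentially the same route as the paper: identify the characteristic distributions with the images of the anchor maps, settle the forward direction by rank counting, and for the converse factor both $w^\sh$ and $w'^\sh$ through bundle isomorphisms onto the common distribution (your quotient $T^*X/\cV^0$ is canonically the paper's $T\cF^*$), define $R$ as the composite of one isomorphism with the inverse of the other, and extend to all of $TX$ by a choice of complement. The only deviations are minor: you extend by zero where the paper extends by the identity (the paper's choice additionally makes $R$ invertible, though the definition of a recursion operator does not require this), and you obtain the second relation $w^\sh\circ R^*=w'^\sh$ for free by transposing the first and using skew-symmetry, whereas the paper constructs the extension of $R^*_\cF$ separately --- your transposition argument is in fact the cleaner way to guarantee that the two extensions are actually dual to one another.
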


\begin{proof} It follows from the equalities
(\ref{pp0}) that a recursion operator $R$ sends the characteristic
distribution of $w$ to that of $w'$, and these distributions
coincide if $w$ and $w'$ are of the same rank. Conversely, let
Poisson structures $w$ and $w'$ possess the same characteristic
distribution $T\cF\to TX$ tangent to a foliation $\cF$ of $X$. We
have the exact sequences (\ref{pp2}) -- (\ref{pp3}). The bundle
homomorphisms $w^\sh$ and $w'^\sh$ (\ref{m51}) factorize in the
unique fashion (\ref{lmp03}) through the bundle isomorphisms
$w_\cF^\sh$ and $w'^\sh_\cF$ (\ref{lmp03}). Let us consider
inverse isomorphisms
\mar{pp13}\beq
w_\cF^\fl : T\cF\to T\cF^*, \qquad w'^\fl_\cF : T\cF\to T\cF^*
\label{pp13}
\eeq
and compositions
\mar{pp10}\beq
R_\cF= w'^\sh_\cF\circ w_\cF^\fl: T\cF\to T\cF, \qquad R_\cF^*=
w_\cF^\fl \circ w'^\sh_\cF: T\cF^*\to T\cF^*. \label{pp10}
\eeq
There is an obvious relation
\be
w'^\sh_\cF=R_\cF\circ w^\sh_\cF=  w^\sh_\cF\circ R^*_\cF.
\ee
In order to obtain a recursion operator (\ref{pp0}), it suffices
to extend the morphisms $R_\cF$ and $R_\cF^*$ (\ref{pp10}) onto
$TX$ and $T^*X$, respectively. For this purpose, let us consider a
splitting
\be
\zeta: TX\to T\cF, \qquad TX=T\cF\oplus
(\id-i_\cF\circ\zeta)TX=T\cF\oplus E,
\ee
of the exact sequence (\ref{pp2}) and the dual splitting
\be
 \zeta^* :T\cF^*\to T^*X, \quad
T^*X=\zeta^*(T\cF^*)\oplus (\id-\zeta^*\circ i^*_\cF)T^*X=
\zeta^*(T\cF^*)\oplus E',
\ee
of the exact sequence (\ref{pp3}). Then the desired extensions are
\be
R=R_\cF\times \id E, \qquad R^*=(\zeta^*\circ R^*_\cF)\times \id
E'.
\ee
This recursion operator is invertible, i.e., the morphisms
(\ref{bi90}) are bundle isomorphisms.
\end{proof}

For instance,  the Poisson bivector field $w$ (\ref{bi20}) and the
Poisson bivector field
\be
w_0=w^{A\la}\dr_A\w\dr_\la
\ee
admit a recursion operator $w^\sh_0=R\circ w^\sh$ whose entries
are given by the equalities
\be
R^A_B=\dl^A_B, \qquad R^\m_\n=\dl^\m_\n, \qquad R^A_\la=0, \qquad
w^{\m\la}=R^\la_Bw^{B\m}.
\ee

Given a commutative partially integrable system $(w,\cA)$ in
Theorem \ref{bi92}, the bivector field $w$ (\ref{bi42}) can be
brought into the canonical form (\ref{kk500}) with respect to
partial action-angle coordinates in forthcoming Theorem
\ref{bi100}. This theorem extends the Liouville--Arnold theorem to
the case of a Poisson structure and a non-compact invariant
submanifold \cite{jmp03,book10,book15}.

\begin{theorem} \label{bi100} \mar{bi100}
Given a commutative partially integrable system $(w,\cA)$ on a
Poisson manifold $(U,w)$, there exists a toroidal domain
$U'\subset U$ equipped with partial  action-angle coordinates
$(I_a,I_i,z^A, \tau^a,\f^i)$ such that, restricted to $U'$, a
Poisson bivector field takes the canonical form
\mar{bi101}\beq
w=\dr^a\w \dr_a + \dr^i\w \dr_i, \label{bi101}
\eeq
while a commutative dynamical algebra $\cA$ is generated by
Hamiltonian vector fields of the action coordinate functions
$S_a=I_a$, $S_i=I_i$.
\end{theorem}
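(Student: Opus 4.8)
The plan is to reduce the canonical-form statement to the symplectic action-angle theorem applied one symplectic leaf at a time, and then to assemble the leafwise coordinates into a single smooth chart on a toroidal subdomain $U'$. First I would invoke Theorem \ref{bi92} so as to assume from the outset that $w$ is already written in the normal form (\ref{bi42}) on the toroidal domain $U=N\times(\mathbb R^{m-r}\times T^r)$ of (\ref{z10'}): adapted coordinates $(J_\la,z^A)$ on the base $N$, toroidal fibre coordinates $(t^a,\vf^i)$, and the generators of $\cA$ realized as the Hamiltonian vector fields $\vt_\la=w^\m_\la\dr_\m$ of (\ref{bi93}) of the independent functions $S_\la=J_\la$. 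Since $w$ has constant rank $2m$, its characteristic distribution is $2m$-dimensional and the leaves $L=\pi^{-1}(F_N)$ of the characteristic foliation $\cF$ are symplectic manifolds of dimension $2m$, each fibred by the toroidal cylinders (\ref{g120a}) over an $m$-dimensional leaf $F_N$ of $\cF_N$ in the base.

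The key observation is that on every such leaf the problem is purely symplectic. The restrictions $J_\la|_L$ are $m$ independent functions in mutual involution on the $2m$-dimensional symplectic leaf $L$, i.e.\ a completely integrable system on $L$ in the sense of Definition \ref{cmp21}, whose invariant submanifolds are exactly the toroidal fibres and whose Hamiltonian vector fields $\vt_\la$ are complete (the torus and $\mathbb R^{m-r}$ directions being complete on the toroidal cylinder). Hence the generalized Liouville--Arnold theorem (Theorem \ref{cmp20}) applies on $L$ and yields action-angle coordinates $(I_\la,y^\la)$ in which the leaf symplectic form is $dI_\la\w dy^\la$; dually, the Poisson bivector restricted to $L$ becomes $\dr^\la\w\dr_\la$ with $\dr^\la=\dr/\dr I_\la$. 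Here the $r$ compact actions $I_i$ are produced as period integrals of the action one-form around the cycles $\vf^i$ of $T^r$, while the $m-r$ actions $I_a$ along the noncompact factor $\mathbb R^{m-r}$ may be taken to be functions of the momenta $J_a$.

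The remaining work is transverse. I would show that the leafwise actions $I_\la=I_\la(J_\m,z^A)$, defined by those period integrals, are smooth in the transverse coordinates $z^A$ that label the leaves, and that on a sufficiently small toroidal subdomain $U'$ the assignment $(J_\m,z^A)\mapsto(I_a,I_i,z^A)$ is a diffeomorphism, so that $(I_a,I_i,z^A)$ serve as base coordinates. The angles are then fixed by transporting the leafwise coordinates $(y^\la)$ along the foliation so as to obtain angle coordinates $(\tau^a,\f^i)$ for which each $\vt_\la$ is the constant coordinate field $\dr_\la$; in these coordinates the off-diagonal term $w_2$ of (\ref{bi42}) is absorbed and $w$ acquires the canonical shape (\ref{bi101}), while the identity $\vt_\la=-w\lfloor dI_\la$ exhibits $\cA$ as generated by the Hamiltonian vector fields of $S_a=I_a$ and $S_i=I_i$.

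I expect the last, transverse, step to be the main obstacle: proving that the action functions built leaf-by-leaf glue into a single smooth and everywhere-independent system across the leaves, and that the angle redefinition can be carried out globally on a toroidal domain with no monodromy in the $z^A$ directions. This is precisely where the closedness of the relevant action one-form and the coincidence of characteristic distributions supplied by Lemma \ref{pp1} enter: the recursion operator between $w$ and the canonical bivector guarantees that nothing beyond the term $w_2$ can obstruct the reduction to (\ref{bi101}).
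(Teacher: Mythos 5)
Your argument begins the same way as the paper (invoke Theorem \ref{bi92} to put $w$ in the form (\ref{bi42}) with generators (\ref{bi93})), but its core --- apply the symplectic Theorem \ref{cmp20} on each $2m$-dimensional symplectic leaf separately and then glue --- contains a genuine gap, and it is exactly the step you yourself flag as the main obstacle. Theorem \ref{cmp20} produces action-angle coordinates on each leaf $L$ with no canonical normalization; to assemble them into one chart you must show the leafwise actions and angles can be chosen coherently, smoothly in the transverse coordinates $z^A$, with everywhere-regular Jacobian and no monodromy. You do not prove this, and the tool you invoke for it cannot do the job: Lemma \ref{pp1} only asserts that two Poisson structures of equal rank with coincident characteristic distributions admit a recursion operator; it says nothing about smooth transverse dependence of leafwise period integrals or about straightening angle coordinates across leaves. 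Moreover, a recursion operator between $w$ and ``the canonical bivector'' presupposes the canonical bivector (\ref{bi101}) already exists on $U'$, which is precisely what is being constructed --- the appeal is circular.

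The paper closes this gap by never arguing leaf-by-leaf. After Theorem \ref{bi92}, it re-applies Theorem \ref{bi0} with the new generators so that the $\vt_\la$ take the straightened form (\ref{ww25}), and then works with the leafwise symplectic form $\Om_\cF$ (\ref{bi102}) in the leafwise de Rham complex of the whole foliated toroidal domain: since $U$ is a trivial bundle (\ref{bi103}), $H^*_\cF(U)=H^*_{\rm DR}(T^r)$, and the absence of a $\wt dr^\m\w\wt dr^\n$ term forces $\Om_\cF=\wt d\,\Xi$ for a single, globally defined leafwise Liouville form $\Xi$ whose components are smooth in all coordinates, including $z^A$. The relations $\vt_\la\rfloor\Om_\cF=-\wt dJ_\la$ then give, by direct computation, the actions $I_a=J_a$, $I_i=\Xi_i(J_\la)$ with regular Jacobian and the shifts $\tau^a=-t^a+E^a$, $\f^i=\vf^i-E^i$; transverse smoothness and absence of monodromy come for free from this parametrized (global) exactness argument instead of having to be established after the fact. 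To rescue your leaf-by-leaf scheme you would essentially have to reprove this parametrized exactness statement, at which point you would have reproduced the paper's proof.
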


\begin{proof}
First, let us employ Theorem \ref{bi92} and restrict $U$ to a
toroidal domain, say  $U$ again, equipped with coordinates
$(J_\la,z^A,r^\la)$ such that a Poisson bivector field $w$ takes
the form (\ref{bi42}) and a commutative dynamical algebra $\cA$ is
generated by the Hamiltonian vector fields $\vt_\la$ (\ref{bi93})
of $m$ independent functions $S_\la=J_\la$ in involution. Let us
choose these vector fields as new generators of a group $G$ and
return to Theorem \ref{bi0}. In accordance with this theorem,
there exists a toroidal domain $U'\subset U$ provided with another
trivialization $U'\to N'\subset N$ in the toroidal cylinders
(\ref{g120a}) and endowed with bundle coordinates
$(J_\la,z^A,r^\la)$ such that the vector fields $\vt_\la$
(\ref{bi93}) take the form (\ref{ww25}). For the sake of
simplicity, let $U'$, $N'$ and $y^\la$ be denoted $U$, $N$ and
$r^\la=(t^a,\vf^i)$ again. Herewith, a Poisson bivector field $w$
is given by the expression (\ref{bi42}) with new coefficients. Let
$w^\sh: T^*U\to TU$ be the corresponding bundle homomorphism. It
factorizes in a unique fashion (\ref{lmp03}):
\be
w^\sh: T^*U\ar^{i^*_\cF} T\cF^*\ar^{w^\sh_\cF} T\cF\ar^{i_\cF} TU
\ee
through the bundle isomorphism
\be
w_\cF^\sh: T\cF^*\to T\cF,  \qquad w^\sh_\cF:\al\to -w(x)\lfloor
\al.
\ee
Then the inverse isomorphisms $w_\cF^\fl : T\cF\to T\cF^*$
provides a foliated manifold $(U,\cF)$ with the leafwise
symplectic form
\mar{bi102,'}\ben
&& \Om_\cF=\Om^{\m\n}(J_\la,z^A,t^a) \wt dJ_\m\w \wt dJ_\n +
\Om_\m^\n(J_\la,z^A) \wt dJ_\n\w \wt dr^\m, \label{bi102}\\
&& \Om_\m^\al w^\m_\bt=\dl^\al_\bt, \qquad
\Om^{\al\bt}=-\Om^\al_\m\Om^\bt_\n w^{\m\n}. \label{bi102'}
\een
Let us show that it is $\wt d$-exact. Let $F$ be a leaf of the
foliation $\cF$ of $U$. There is a homomorphism of the de Rham
cohomology $H^*_{\rm DR}(U)$ of $U$ to the de Rham cohomology
$H^*_{\rm DR}(F)$ of $F$, and it factorizes through the leafwise
cohomology $H^*_\cF(U)$. Since $N$ is a domain of an adapted
coordinate chart of the foliation $\cF_N$, the foliation $\cF_N$
of $N$ is a trivial fibre bundle
\be
N=V\times W\to W.
\ee
Since $\cF$ is the pull-back onto $U$ of the foliation $\cF_N$ of
$N$, it also is a trivial fibre bundle
\mar{bi103}\beq
U=V\times W\times (\mathbb R^{k-m}\times T^m) \to W \label{bi103}
\eeq
over a domain $W\subset \mathbb R^{\di Z-2m}$. It follows that
\be
H^*_{\rm DR}(U)=H^*_{\rm DR}(T^r)=H^*_\cF(U).
\ee
Then the closed leafwise two-form $\Om_\cF$ (\ref{bi102}) is exact
due to the absence of the term $\Om_{\m\n}dr^\m\w dr^\nu$.
Moreover, $\Om_\cF=\wt d\Xi$ where $\Xi$ reads
\be
\Xi=\Xi^\al(J_\la,z^A,r^\la)\wt dJ_\al + \Xi_i(J_\la,z^A)\wt
d\vf^i
\ee
up to a $\wt d$-exact leafwise form. The Hamiltonian vector fields
$\vt_\la=\vt_\la^\m\dr_\m$ (\ref{ww25}) obey the relation
\mar{ww22'}\beq
\vt_\la\rfloor\Om_\cF=-\wt dJ_\la, \qquad \Om^\al_\bt
\vt^\bt_\la=\dl^\al_\la, \label{ww22'}
\eeq
which falls into the following conditions
\mar{bi110,1}\ben
&& \Om^\la_i=\dr^\la\Xi_i-\dr_i\Xi^\la, \label{bi110} \\
&& \Om^\la_a=-\dr_a\Xi^\la=\dl^\la_a. \label{bi111}
\een
The first of the relations (\ref{bi102'})  shows that
$\Om^\al_\bt$ is a non-degenerate matrix  independent of
coordinates $r^\la$. Then the condition (\ref{bi110}) implies that
$\dr_i\Xi^\la$ are  independent of $\vf^i$, and so are $\Xi^\la$
since $\vf^i$ are cyclic coordinates. Hence,
\mar{bi112,3}\ben
&&\Om^\la_i=\dr^\la\Xi_i, \label{bi112}\\
&& \dr_i\rfloor\Om_\cF=-\wt d\Xi_i. \label{bi113}
\een
Let us introduce new coordinates $I_a=J_a$, $I_i=\Xi_i(J_\la)$. By
virtue of the equalities (\ref{bi111}) and (\ref{bi112}), the
Jacobian of this coordinate transformation is regular. The
relation (\ref{bi113}) shows that $\dr_i$ are Hamiltonian vector
fields of the functions $S_i=I_i$. Consequently, we can choose
vector fields $\dr_\la$ as generators of a commutative dynamical
algebra $\cA$. One obtains from the equality (\ref{bi111}) that
\be
\Xi^a=-t^a+E^a(J_\la,z^A)
\ee
and $\Xi^i$ are independent of $t^a$. Then the leafwise Liouville
form $\Xi$ reads
\be
\Xi=(-t^a+E^a(I_\la,z^A))\wt dI_a + E^i(I_\la, z^A)\wt dI_i + I_i
\wt d\vf^i.
\ee
The coordinate shifts
\be
\tau^a=-t^a+E^a(I_\la,z^A), \qquad \f^i=\vf^i-E^i(I_\la,z^A)
\ee
bring the leafwise form $\Om_\cF$ (\ref{bi102}) into the canonical
form
\be
\Om_\cF= \wt dI_a\w \wt d \tau^a + \wt dI_i\w \wt d\f^i
\ee
which ensures the canonical form (\ref{bi101}) of a Poisson
bivector field $w$.
\end{proof}

\section{Partially superintegrable systems on Poisson manifolds}

Studying partially superintegrable systems, we bear in mind that
in Example \ref{x6}, but restrict our consideration to Lie algebra
superintegrable systems whose generating functions constitute a
real Lie algebra (Section 3).

Given a smooth manifold $Z$, let $\{\vt_i\}$ be $k$ independent
vector fields on $Z$ (i.e., $\op\w^k \vt_i\neq 0$) which generate
a real Lie algebra $\cG$ with commutation relations
\mar{x9}\beq
[\vt_i,\vt_j]= c_{ij}^h \vt_h. \label{x9}
\eeq
We denote by $\cC\subset C^\infty(Z)$ a $\mathbb R$-subring of
smooth real functions $f$ on $Z$ whose derivations $\vt_i\rfloor
df$ vanish for all $\vt_i$. Let $\cA$ be a $k$-dimensional Lie
$\cC$-algebra generated by vector fields $\{\vt_i\}$.

\begin{definition} \label{x10} \mar{x10}
We agree to call $\cA$ the dynamical algebra.
\end{definition}

In particular, this definition reproduces Definition \ref{kk1} of
a commutative dynamical algebra if vector fields $\{\vt_i\}$
mutually commute.

\begin{definition} \label{x11} \mar{x11} Let $(Z,w)$ be a
Poisson manifold and $\cA$ a $k$-dimensional dynamical algebra on
$Z$. A triple $(Z,\cA,w)$ is said to be a partially
superintegrable system if the following hold.

(a) Generators $\vt_i$ of $\cA$ are Hamiltonian vector fields of
some independent functions $F_i$ on $Z$. In view of the relations
(\ref{m81}), these functions obey the commutation relations
\mar{x12}\beq
\{F_i,F_j\}_w= c_{ij}^h F_h, \qquad \{F_i, f\}_w=0, \qquad
f\in\cC. \label{x12}
\eeq

(b) All elements of $\cC\subset C^\infty(Z)$ are mutually in
involution, i.e., their Poisson brackets $\{f,f'\}_w$ equal zero.
\end{definition}

For instance, let $F=(F_i)$ be the Lie algebra superintegrable
system (\ref{zz60}) on a symplectic manifold $(Z,\Om)$.
Hamiltonian vector fields $\vt_i$ of its generating functions
$F_i$ obey the commutation relations (\ref{x9}) and yield a
dynamical algebra. Then it is a partially superintegrable system
in accordance with Definition \ref{x11} where a ring $\cC$
consists of the pull-back of Casimir functions of the Lie--Poisson
structure on a Lie coalgebra $\cG^*$.

A partially superintegrable systems on a product of manifolds in
Examples \ref{x6} and commutative partially integrable systems in
Example \ref{x7} also are well in accordance with Definition
\ref{x11}.

Certainly, a Poisson structure $w$ in Definition \ref{x11} is not
unique (see Theorem \ref{bi92} for a case of commutative partially
integrable systems). Following Definition \ref{kk2}, we agree to
call it compatible. Generalizing Theorem \ref{bi72}, one can show
that two Poisson structures are compatible only if they admit the
recursion operator (\ref{bi90}). In accordance with Lemma
\ref{pp1}, their symplectic foliations coincide.

\begin{lemma} \label{x50} \mar{x50}
The rank of a compatible Poisson structure of a partially
superintegrable system equals $k+m$ where $m$ is a corank of the
Lie algebra $\cG$ (\ref{x9}).
\end{lemma}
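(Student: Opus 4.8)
The plan is to reduce the statement to a pointwise computation in symplectic linear algebra on a single symplectic leaf. Fix a regular point $z\in Z$ and let $L$ be the symplectic leaf of $w$ through $z$, so that $\di L$ is exactly the rank $2d$ of $w$ that we must identify with $k+m$. Write $\om_L$ for the symplectic form induced by $w$ on $L$ and set $V=\mathrm{span}\{\vt_1(z),\ldots,\vt_k(z)\}\subset T_zL$. Since the $\vt_i$ are independent Hamiltonian vector fields, hence tangent to $L$, the subspace $V$ is the tangent space $T_z\cF$ to the orbit foliation and $\di V=k$. Throughout, $V^\perp$ will denote the symplectic orthogonal of $V$ taken inside $T_zL$.

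First I would compute the radical of $\om_L$ restricted to $V$. Because $\om_L(\vt_i,\vt_j)$ equals $\{F_i,F_j\}_w=c_{ij}^hF_h(z)$ up to sign by the relations (\ref{x12}), the matrix of $\om_L|_V$ is precisely the Lie--Poisson structure matrix $[c_{ij}^hx_h]$ of $\cG^*$ evaluated at the point $(F_i(z))\in\cG^*$. By definition of the corank of $\cG$ this matrix has corank $m$, so the radical $V\cap V^\perp$ of $\om_L|_V$ has dimension $m$. This step is essentially bookkeeping and should be routine.

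Next I would identify $V^\perp$ with the span of the Hamiltonian vector fields of $\cC$. The subring $\cC$ consists exactly of the functions Poisson-commuting with every $F_i$ (equivalently, annihilated by every $\vt_i$), so at a regular point $\{dg:g\in\cC\}$ spans the annihilator $V^0\subset T_z^*Z$, and $w^\sh(dg)\in V^\perp$ for each $g\in\cC$ since $\om_L(w^\sh(dg),\vt_i)=\{g,F_i\}_w=0$. A dimension count then pins down the image: because $\Ker w^\sh=(T_zL)^0\subseteq V^0$, one gets $\di\, w^\sh(V^0)=(\di Z-k)-(\di Z-2d)=2d-k=\di V^\perp$, whence $W:=w^\sh(V^0)=V^\perp$.

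Finally I would invoke involutivity. Condition (b) of Definition \ref{x11} says the elements of $\cC$ are mutually in involution, which means $W$ is an isotropic subspace of $T_zL$; since $W=V^\perp$ and $(V^\perp)^\perp=V$, isotropy forces $V^\perp\subseteq V$, i.e. $V$ is \emph{coisotropic}. Then the radical $V\cap V^\perp$ coincides with $V^\perp$, so the two computations of its dimension give $2d-k=m$, that is $\mathrm{rank}\,w=2d=k+m$. I expect the main obstacle to be the middle step: verifying that the differentials of $\cC$ genuinely span the full annihilator $V^0$, and locating $\Ker w^\sh$ inside it, since this is precisely where regularity of the orbit foliation and constancy of the Poisson rank (and of the corank of $[c_{ij}^hF_h]$) must be used. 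The opening and closing steps are then short linear algebra once that identification is secured.
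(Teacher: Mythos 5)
Your proposal opens with the same move as the paper's outline --- pass to a symplectic leaf, so that the rank of $w$ becomes the leaf dimension --- but after that the two arguments genuinely diverge. The paper disposes of everything in one sentence: restricted to a leaf $W$, the $F_i$ "generate a Lie algebra superintegrable system", so that item (iii) of Definition \ref{i0} (the corank condition (\ref{x1})) hands it $\dim W=k+m$. But that assertion \emph{is} the statement being proved: for a Lie algebra system on $W$, item (iii) is precisely the equality $\dim W-k=m$, and the paper's outline never indicates where condition (b) of Definition \ref{x11} enters. Your proof supplies exactly this missing content: the radical computation gives $\dim(V\cap V^\perp)=m$ (granted that $(F_i(z))$ lies in the regular stratum of $\cG^*$, which you correctly flag), the rank count gives $w^\sh(V^0)=V^\perp$, and involutivity of $\cC$ forces $V^\perp$ to be isotropic, hence $V$ coisotropic, hence $V^\perp$ equal to the radical, whence $2d-k=m$. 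The linear algebra is sound, and making the role of condition (b) explicit is a real improvement on the paper's one-line argument.

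That said, the obstacle you flag in your middle step is a genuine one, shared by (and hidden in) the paper's own proof, and it is not a routine verification: nothing in Definition \ref{x11} guarantees that the differentials of elements of $\cC$ span $V^0$, even modulo $\Ker w^\sh$. This requires the orbit foliation $\cF$ to be, at least locally around $z$, a fibred manifold admitting $\dim Z-k$ independent invariants --- a hypothesis the paper imposes only after the lemma, in the run-up to Theorem \ref{x51} (compare condition (ii) of Theorem \ref{bi0}). Without it the lemma itself can fail. Indeed, on $Z=T^2\times\mathbb R^2$ with symplectic form $dx\w d\alpha+dy\w d\beta$, take $k=1$ and $F_1=x+\lambda y$ with $\lambda$ irrational, so that $\vt_1=\dr_\alpha+\lambda\dr_\beta$; the orbits are dense in the tori, so $\cC$ consists of functions of $(x,y)$ only, and conditions (a) and (b) of Definition \ref{x11} hold, while $\cG$ is abelian with $k=1$, $m=1$, yet the rank of $w$ is $4\neq k+m=2$. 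So your middle step is exactly the place where an additional standing regularity hypothesis must be inserted; once it is, your argument closes the proof, and does so more completely than the paper's own.
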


\begin{proof}
Let $\cW$ be a symplectic foliation of $Z$ and $W$ its leaf. The
generating vector field $\vt_i$ obviously are tangent to $W$.
Restricted to $W$, they generate a Lie algebra superintegrable
system so that $k+m$ is a dimension of a leaf $W$.
\end{proof}

Let $(Z,\cA,w)$ be a partially superintegrable system, and let its
generating vector fields be complete. In accordance with the
above-mentioned theorem \cite{onish,palais}, they define a
Hamiltonian action of a simply connected Lie group $G$ whose Lie
algebra is isomorphic to $\cG$ on $Z$. Since vector fields $\vt_i$
are independent, the action of $G$ on $Z$ is locally free, i.e.,
isotropy groups of points of $U$ are discrete subgroups of $G$.
Orbits of $G$ coincide with $k$-dimensional maximal integral
manifolds of a regular distribution $\cV$ on $Z$ spanned by vector
fields $\vt_i$ \cite{susm}. They constitute a foliation $\cF$ of
$Z$. It is subordinate a symplectic foliation $\cW$ whose leaves
are foliated by leaves of $\cF$.

Let both a foliation $\cF$ and a foliation $\cW$ be fibred
manifolds $\pi_\cF$ and $\pi_\cW$ with mutually diffeomorphic
fibres, respectively. Thus, we have a composite fibred manifold
\mar{x40}\beq
\pi_\cW: Z\ar \pi_\cF(Z)\ar \pi_\cW(Z). \label{x40}
\eeq
Then one can show the following.

\begin{theorem} \label{x51} \mar{x51}
Let $V$ be a leaf of $\cF$. Then there exists an open saturated
neighborhood $U_V$ of $V$ such that the composite fibred manifold
$\pi_\cW$ (\ref{x40}) becomes a composite bundle
\beq
\pi_\cW: U_V\ar \pi_\cF(U_V)\ar \pi_\cW(U_V) \label{x41}
\eeq
which is endowed with bundle (generalized action-angle)
coordinates
\be
(I_\la,q^A,y^\la,w^a)\to (I_\la,q^A,x^a)\to (I_\la,x^a), \quad
\la=1,\ldots,m, \quad A=1,\ldots, k-m,
\ee
where: (i) the angle coordinates $y^i$ are coordinates on $\mathbb
R^{m-r}\times T^r$; (ii) the $(I_\la, x^a)$ are coordinates on
$\pi_\cF(U_V)$; the $(x^a)$ are coordinates on $\pi_\cW(U_V)$.
With respect to these coordinates, the Poisson bivector field
takes a form
\be
w= \dr^\la\w \dr_\la + w^{A\la}(q^B,x^a)\dr_A\w\dr_B.
\ee
\end{theorem}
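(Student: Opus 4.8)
The plan is to follow the same reduction strategy that proved Theorem \ref{nc0'} for symplectic superintegrable systems, but now carried out leaf-by-leaf over the symplectic foliation $\cW$ of the Poisson manifold. First I would use Lemma \ref{x50}: each symplectic leaf $W$ of $\cW$ has dimension $k+m$, and restricted to $W$ the generating vector fields $\vt_i$ span a genuine Lie algebra superintegrable system of corank $m$ in the sense of Section 3. This reduces the problem to applying the Lie-algebra analysis of Section 3 fibrewise. Concretely, on a leaf $W$ the momentum map $\wh F|_W$ sends $W$ onto a domain of the Lie coalgebra $\cG^*$, the leaves of $\cF$ restricted to $W$ map onto coadjoint orbits, and the pull-backs of $m$ independent Casimir functions of the Lie--Poisson structure on $\cG^*$ constitute a commutative partially integrable system whose integral manifolds are exactly the leaves of $\cF$ inside $W$ (this is the content of Remark \ref{x2} applied to each leaf).

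Next, since by hypothesis both $\cF$ and $\cW$ are fibred manifolds with mutually diffeomorphic fibres, I would pass to the composite fibred manifold $\pi_\cW=\pi_\ccG\circ\pi_\cF$ of (\ref{x40}) and choose, for the given leaf $V$ of $\cF$, an open saturated neighborhood $U_V$ over which $m$ independent Casimir functions $C=(C_\la)$ exist on $\pi_\cW(U_V)\subset\cG^*$, exactly as in the passage from (\ref{x20}) to (\ref{x21}) in Section 3. Setting $S_\la=\wh F^*C_\la$ and invoking completeness of the $\vt_i$ together with Remark \ref{zz95}, the Hamiltonian vector fields of the $S_\la$ are complete $\mathbb R$-linear combinations of the $\vt_i$ on each leaf, so the commutative dynamical algebra they generate satisfies the hypotheses of Theorem \ref{bi0}. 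This produces the trivial principal-bundle structure (\ref{z10'}) with structure group $\mathbb R^{m-r}\times T^r$ on the $\pi_\cF$-part of the composite bundle, giving the toroidal-cylinder angle coordinates $y^\la$ and establishing that $\pi_\cW$ restricts to the composite bundle (\ref{x41}).

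To obtain the remaining coordinates and the canonical form of $w$, I would apply Theorem \ref{bi92} and Theorem \ref{bi100} to the commutative partially integrable system $(w,\cA_S)$ generated by the $S_\la$ on $U_V$. Theorem \ref{bi100} supplies partial action-angle coordinates $(I_\la,z^A,r^\la)$ in which the vertical part of $w$ takes the canonical form $\dr^\la\w\dr_\la$, where the action coordinates $I_\la$ are values of the Casimir functions, the $y^\la$ are the toroidal-cylinder angles, and the $q^A$ are the transverse coordinates along the $k-m$ directions spanned by the remaining $\vt_i$ within a leaf. Splitting the base coordinates $z^A$ into the $x^a$ on $\pi_\cW(U_V)$ (which parametrize the symplectic leaves and on which $w$ degenerates) and the $q^A$ on the fibres of $\pi_\ccG$, the full bivector field acquires the stated form $w=\dr^\la\w\dr_\la + w^{A\la}(q^B,x^a)\dr_A\w\dr_B$, where the second summand is the leafwise symplectic part along the $q^A$ directions described by the matrix $\Om_{AB}$ of Theorem \ref{nc0'}(III).

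The main obstacle I anticipate is the transverse, Poisson-degenerate directions: unlike the symplectic case, $w$ has corank equal to the codimension of the symplectic foliation, so one must verify that the action-angle construction on a single leaf extends smoothly and coherently across leaves, i.e. that the coordinates $x^a$ transverse to $\cW$ can be chosen so that the coefficients $w^{A\la}$ depend smoothly on $(q^B,x^a)$ and the $(I_\la,x^a)$ are genuine independent coordinates on $\pi_\cW(U_V)$. Ensuring the Casimir functions $C_\la$ and the resulting action coordinates $I_\la$ vary smoothly with the leaf — so that the composite bundle (\ref{x41}) and its coordinate chart are well defined on a full neighborhood rather than merely on $V$ itself — is the delicate point, and it is precisely here that the fibred-manifold hypotheses on both $\cF$ and $\cW$, via the composite structure (\ref{x40}), must be used to glue the leafwise data.
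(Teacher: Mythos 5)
Your proposal takes essentially the same route as the paper: the paper's own (one-sentence) proof is exactly the reduction you describe --- pull back Casimir-type functions along $\pi_\cF$ to obtain a commutative partially integrable system on the Poisson manifold, then invoke Theorems \ref{bi92} and \ref{bi100} --- and the details you supply (Lemma \ref{x50}, the Section~3 construction of the Casimirs $C_\la$, Theorem \ref{bi0} via Remark \ref{zz95}) are precisely what the paper leaves implicit. One small slip worth noting: the integral manifolds of that commutative system are the $m$-dimensional invariant submanifolds (fibres of $\wh F$, diffeomorphic to $\mathbb R^{m-r}\times T^r$) contained \emph{inside} the $k$-dimensional leaves of $\cF$, not the leaves of $\cF$ themselves; this misidentification does not damage your argument, since everything downstream correctly uses the $m$-dimensional toroidal cylinders as the fibres carrying the structure group.
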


\begin{proof} The proof of Theorem \ref{x51} is reduced to Theorems
\ref{bi92} and \ref{bi100} for commutative partially integrable
systems because the pull-back $\pi_\cF^*C$ of functions on
$\pi_\cF(U_N)$ constitute a commutative partially integrable
system on a Poisson manifold $(Z,w)$ whose integral manifolds,
diffeomorphic to toroidal cylinders, are invariant submanifolds of
fibres of $\pi_\cF$.
\end{proof}

\section{Appendix}

For the convenience of the reader this Section summarize the
relevant mathematical material on symplectic manifolds, Poisson
manifolds and symplectic foliations
\cite{abr,book05,book10,libe,vais}.

\subsection{Distributions and foliations}

A subbundle $\bT$ of the tangent bundle $TZ$ of a manifold $Z$ is
called a regular  distribution (or, simply, a distribution). A
vector field $u$ on $Z$ is said to be  subordinate to a
distribution $\bT$ if it lives in $\bT$. A distribution $\bT$ is
called  involutive if the Lie bracket of $\bT$-subordinate vector
fields also is subordinate to $\bT$.

A subbundle of the cotangent bundle $T^*Z$ of $Z$ is called a
codistribution $\bT^*$ on a manifold $Z$. For instance, the
annihilator  $\rA\bT$ of a distribution $\bT$ is a codistribution
whose fibre over $z\in Z$ consists of covectors $w\in T^*_z$ such
that $v\rfloor w=0$ for all $v\in \bT_z$.

The following local coordinates can be associated to an involutive
distribution  \cite{war}.

\begin{theorem}\label{c11.0} \mar{c11.0} Let $\bT$ be an involutive
$r$-dimensional distribution on a manifold $Z$, $\di Z=k$. Every
point $z\in Z$ has an open neighborhood $U$ which is a domain of
an adapted coordinate chart $(z^1,\dots,z^k)$ such that,
restricted to $U$, the distribution $\bT$ and its annihilator
$\rA\bT$ are spanned by the local vector fields $\dr/\dr z^1,
\cdots,\dr/\dr z^r$ and the local one-forms $dz^{r+1},\dots,
dz^k$, respectively.
\end{theorem}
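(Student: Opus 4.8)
The plan is to recognize the statement as the classical Frobenius theorem and to prove it by induction on the rank $r$ of the involutive distribution $\bT$, straightening one spanning vector field at each stage by the flow-box (rectification) theorem and reducing the rank by one by passing to a transverse hypersurface. Once $\bT$ is shown to be spanned by $\dr/\dr z^1,\ldots,\dr/\dr z^r$ in suitable coordinates, the claim about the annihilator is automatic: a one-form lies in $\rA\bT$ iff it kills each $\dr/\dr z^j$ for $j\le r$, and the one-forms with this property are exactly the linear combinations of $dz^{r+1},\ldots,dz^k$.

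For the base case $r=1$, the distribution is locally spanned by a single nowhere-vanishing vector field $X$, and the flow-box theorem provides coordinates in which $X=\dr/\dr z^1$; there is nothing further to check. For the inductive step, assume the result for involutive distributions of rank $r-1$ on manifolds of dimension $k-1$. I would choose a local frame $X_1,\ldots,X_r$ of $\bT$ near $z$ and, by the flow-box theorem applied to $X_r$, pick coordinates $(z^1,\ldots,z^k)$ with $X_r=\dr/\dr z^r$. I would then replace the remaining generators by
\[
Y_i=X_i-(X_i z^r)\,\dr/\dr z^r,\qquad i=1,\ldots,r-1,
\]
so that $Y_i z^r=0$; the fields $Y_1,\ldots,Y_{r-1},\dr/\dr z^r$ still span $\bT$. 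The key computation is that $\{Y_1,\ldots,Y_{r-1}\}$ is involutive: by involutivity of $\bT$ one has $[Y_i,Y_j]\in\bT$, while $[Y_i,Y_j]z^r=Y_i(Y_j z^r)-Y_j(Y_i z^r)=0$, so $[Y_i,Y_j]$ carries no $\dr/\dr z^r$ component and is therefore a combination of $Y_1,\ldots,Y_{r-1}$ alone. Since the $Y_i$ annihilate $z^r$, they are tangent to the slice $S=\{z^r=0\}$, a $(k-1)$-dimensional submanifold, and restrict there to an involutive rank-$(r-1)$ distribution.

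Applying the inductive hypothesis on $S$ yields coordinates on $S$ in which this restricted distribution is spanned by the first $r-1$ coordinate fields; I would take $z^r$ as the remaining coordinate and transport the slice coordinates off $S$ to be constant along the flow of $\dr/\dr z^r$, producing an adapted chart $(z^1,\ldots,z^k)$ on a neighborhood of $z$. The step I expect to be the main obstacle is verifying that this extension really straightens the \emph{full} distribution rather than only its restriction to $S$; this is where involutivity is used a second time, through the relation $[\dr/\dr z^r,Y_i]\in\bT$ together with $[\dr/\dr z^r,Y_i]z^r=0$, which shows $[\dr/\dr z^r,Y_i]$ again lies in the span of $Y_1,\ldots,Y_{r-1}$. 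Hence the flow of $\dr/\dr z^r$ preserves the reduced distribution, so the chart obtained by extension is adapted to $\bT$ on the whole neighborhood, and $\bT$ is spanned by $\dr/\dr z^1,\ldots,\dr/\dr z^r$ there. This completes the induction and, with the annihilator remark above, the proof.
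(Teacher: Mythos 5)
Your proof is correct. Note that the paper itself gives no proof of this statement: it appears in the Appendix as classical background (the coordinate form of the Frobenius theorem) and is simply cited to Warner's book. Your induction on the rank via the flow-box theorem is precisely the standard textbook argument found in that reference: straighten one generator, correct the remaining generators so that $Y_i z^r=0$, restrict to the slice $\{z^r=0\}$, and extend the slice coordinates along the flow. You also correctly isolate and close the one genuinely delicate step, namely the second use of involutivity: since $[\dr/\dr z^r,Y_i]\in\bT$ and $[\dr/\dr z^r,Y_i]z^r=0$, the bracket lies in the span of $Y_1,\ldots,Y_{r-1}$, so the flow of $\dr/\dr z^r$ preserves the reduced distribution (via the standard linear ODE argument for the frame coefficients, which you may state explicitly in a full write-up) and the extended chart is adapted to all of $\bT$, not merely to its restriction to the slice. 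The annihilator statement then follows exactly as you say, since a one-form annihilates $\dr/\dr z^1,\ldots,\dr/\dr z^r$ iff it is a combination of $dz^{r+1},\ldots,dz^k$.
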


A connected submanifold $N$ of a manifold $Z$ is called an
integral manifold of a distribution $\bT$ on $Z$ if $TN\subset
\bT$. Unless otherwise stated, by an integral manifold is meant an
integral manifold of dimension of $\bT$. An integral manifold is
called  maximal if no different integral manifold contains it. The
following is the classical theorem of Frobenius \cite{kob,war}.

\begin{theorem}\label{to.1}  \mar{to.1} Let $\bT$ be an
involutive distribution on a manifold $Z$. For any $z\in Z$, there
exists a unique maximal integral manifold of $\bT$ through $z$,
and any integral manifold through $z$ is its open subset.
\end{theorem}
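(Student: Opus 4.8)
The plan is to construct the maximal integral manifold through $z$ by gluing together the local integral manifolds furnished by Theorem \ref{c11.0}. Around any point, an adapted chart $(U;z^1,\ldots,z^k)$ has the property that $\bT|_U$ is spanned by $\dr/\dr z^1,\ldots,\dr/\dr z^r$; hence a connected submanifold of $U$ is an integral manifold of $\bT$ if and only if it is an open subset of one of the \emph{slices}
\[
U_c=\{z^{r+1}=c^{r+1},\ldots,z^k=c^k\}\subset U .
\]
These slices are the basic building blocks: each is a connected $r$-dimensional integral manifold of $\bT$, and two slices of the same chart either coincide or are disjoint. Overlaps of slices from different charts behave compatibly, again by Theorem \ref{c11.0}.

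First I would define $N$ to be the set of all points of $Z$ that can be joined to $z$ by a finite chain of slices in which consecutive members overlap. I would topologize $N$ by declaring the slices and their open subsets to be a basis; this \emph{leaf topology} is finer than the subspace topology induced from $Z$. Taking the restrictions of $(z^1,\ldots,z^r)$ to the slices as charts, $N$ becomes a connected $r$-dimensional smooth manifold for which the inclusion $N\hookrightarrow Z$ is an injective immersion with $TN=\bT|_N$. Thus $N$ is an integral manifold of $\bT$ through $z$.

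The remaining work is to prove maximality and uniqueness and to show that every integral manifold through $z$ sits inside $N$ as an open subset. Let $L$ be an arbitrary integral manifold through $z$. Working in an adapted chart $U$, the key step is to show that $L\cap U$ meets only countably many slices $U_c$. Granting this, since $TL\subset\bT$ the coordinate functions $z^{r+1},\ldots,z^k$ are locally constant on $L\cap U$; together with countability this forces each connected component of $L\cap U$ to lie in a single slice. Propagating this along paths in the connected manifold $L$ shows $L\subset N$ and that $L$ is open in $N$. In particular the maximal integral manifold is unique, being $N$ itself.

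The hard part will be the countability step, which is precisely where the standing paracompactness hypothesis on $Z$ (hence second countability of the integral manifold $L$) enters. The point is that $L$, being a manifold, has a countable basis, so the locally constant map $(z^{r+1},\ldots,z^k)\colon L\cap U\to\mathbb R^{k-r}$ assumes at most countably many values; without this hypothesis an integral manifold could wind densely and meet uncountably many slices, and the leaf would fail to be a well-defined immersed submanifold. Once the countability is in hand the gluing is routine and delivers the unique maximal integral manifold, of which every integral manifold through $z$ is an open subset, as asserted.
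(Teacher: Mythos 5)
The paper offers no proof of this statement---it is quoted as the classical Frobenius theorem with references to Kobayashi--Nomizu and Warner---and your slice-gluing construction with the leaf topology plus the countability lemma is precisely the standard argument given in those cited sources, so your outline is sound and matches the intended proof. One small correction in emphasis: each component of $L\cap U$ lies in a single slice already because $z^{r+1},\ldots,z^k$ are locally constant on $L\cap U$ and components are connected; where the countability (via the standing paracompactness, hence second countability of connected manifolds) is genuinely indispensable is in guaranteeing that the glued leaf $N$ meets each adapted chart in only countably many slices, so that $N$ is itself a manifold in the paper's sense.
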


Maximal integral manifolds of an involutive distribution on a
manifold $Z$ are assembled into a regular foliation $\cF$ of $Z$.
A regular $r$-dimensional  foliation (or, simply, a foliation)
$\cF$ of a $k$-dimensional manifold $Z$ is defined as a partition
of $Z$ into connected $r$-dimensional submanifolds (the leaves of
a foliation) $F_\iota$, $\iota\in I$, which possesses the
following properties \cite{rei,tam}.

A manifold $Z$ admits an adapted coordinate atlas
\mar{spr850}\beq
\{(U_\xi;z^\la, z^i)\},\quad \la=1,\ldots,k-r, \qquad
i=1,\ldots,r, \label{spr850}
\eeq
such that transition functions of coordinates $z^\la$ are
independent of the remaining coordinates $z^i$. For each leaf $F$
of a foliation $\cF$, the connected components of $F\cap U_\xi$
are given by the equations $z^\la=$const. These connected
components and coordinates $(z^i)$ on them make up a coordinate
atlas of a leaf $F$. It follows that tangent spaces to leaves of a
foliation $\cF$ constitute an involutive distribution $T\cF$ on
$Z$, called the  tangent bundle to the foliation $\cF$. The factor
bundle $V\cF=TZ/T\cF$, called the  normal bundle to $\cF$, has
transition functions independent of coordinates $z^i$. Let
$T\cF^*\to Z$ denote the dual of $T\cF\to Z$. There are the exact
sequences
\mar{pp2,3}\ben
&& 0\to T\cF \ar^{i_\cF} TX \ar V\cF\to 0, \label{pp2} \\
&& 0\to {\rm Ann}\,T\cF\ar T^*X\ar^{i^*_\cF} T\cF^* \to 0
\label{pp3}
\een
of vector bundles over $Z$.

A pair $(Z,\cF)$, where $\cF$ is a foliation of $Z$, is called a
 foliated manifold. It should be emphasized that leaves of a
foliation need not be closed or imbedded submanifolds. Every leaf
has an open  saturated neighborhood $U$, i.e., if $z\in U$, then a
leaf through $z$ also belongs to $U$.

Any submersion $\zeta:Z\to M$ yields a foliation
\be
\cF=\{F_p=\zeta^{-1}(p)\}_{p\in \zeta(Z)}
\ee
of $Z$ indexed by elements of $\zeta(Z)$, which is an open
submanifold of $M$, i.e., $Z\to \zeta(Z)$ is a fibred manifold.
Leaves of this foliation are closed imbedded submanifolds. Such a
foliation is called  simple. Any (regular) foliation is locally
simple.

\begin{example} \label{10b6} \mar{10b6}
Every smooth real function $f$ on a manifold $Z$ with nowhere
vanishing differential $df$ is a submersion $Z\to \mathbb R$. It
defines a one-codimensional foliation whose leaves are given by
equations
\be
f(z)=c, \qquad c\in f(Z)\subset\mathbb R.
\ee
This is a foliation of level surfaces of a function $f$, called
the generating function. Every one-codimensional foliation is
locally a foliation of level surfaces of some function on $Z$. The
level surfaces of an arbitrary smooth real function $f$ on a
manifold $Z$ define a  singular foliation $\cF$ on $Z$
\cite{kamb}. Its leaves are not submanifolds in general.
Nevertheless if $df(z)\neq 0$, the restriction of $\cF$ to some
open neighborhood $U$ of $z$ is a foliation with the generating
function $f|_U$.
\end{example}

Let $\cF$ be a (regular) foliation of a $k$-dimensional manifold
$Z$ provided with the adapted coordinate atlas (\ref{spr850}). The
real Lie algebra $\cT_1(\cF)$ of  global sections of the tangent
bundle $T\cF\to Z$ to $\cF$ is a $C^\infty(Z)$-submodule of the
derivation module of the $\mathbb R$-ring $C^\infty(Z)$ of smooth
real functions on $Z$. Its kernel $S_\cF(Z)\subset C^\infty(Z)$
consists of functions constant on leaves of $\cF$. Therefore,
$\cT_1(\cF)$ is the Lie $S_\cF(Z)$-algebra of derivations of
$C^\infty(Z)$, regarded as a $S_\cF(Z)$-ring. Then one can
introduce the leafwise differential calculus \cite{jmp02,book05}
as the Chevalley--Eilenberg differential calculus over the
$S_\cF(Z)$-ring $C^\infty(Z)$. It is defined as a subcomplex
\mar{spr892}\beq
0\to S_\cF(Z)\ar C^\infty(Z)\ar^{\wt d} \gF^1(Z) \cdots \ar^{\wt
d} \gF^{\di\cF}(Z) \to 0 \label{spr892}
\eeq
of the Chevalley--Eilenberg complex of the Lie $S_\cF(Z)$-algebra
$\cT_1(\cF)$ with coefficients in $C^\infty(Z)$ which consists of
$C^\infty(Z)$-multilinear skew-symmetric maps
\be
\op\times^r \cT_1(\cF) \to C^\infty(Z), \qquad r=1,\ldots,\di\cF.
\ee
These maps are global sections of exterior products $\op\w^r
T\cF^*$ of the dual $T\cF^*\to Z$ of $T\cF\to Z$. They are called
the leafwise forms on a foliated manifold $(Z,\cF)$, and are given
by the coordinate expression
\be
\f=\frac1{r!}\f_{i_1\ldots i_r}\wt dz^{i_1}\w\cdots\w \wt
dz^{i_r},
\ee
where $\{\wt dz^i\}$ are the duals of the holonomic fibre bases
$\{\dr_i\}$ for $T\cF$. Then one can think of the Chevalley --
Eilenberg coboundary operator
\be
\wt d\f= \wt dz^k\w \dr_k\f=\frac{1}{r!} \dr_k\f_{i_1\ldots
i_r}\wt dz^k\w\wt dz^{i_1}\w\cdots\w\wt dz^{i_r}
\ee
as being the  leafwise exterior differential. Accordingly, the
complex (\ref{spr892}) is called the  leafwise de Rham complex (or
the tangential de Rham complex).

Let us consider the exact sequence (\ref{pp3}) of vector bundles
over $Z$. Since it admits a splitting, the epimorphism $i^*_\cF$
yields that of the algebra $\cO^*(Z)$ of exterior forms on $Z$ to
the algebra $\gF^*(Z)$ of leafwise forms. It obeys the condition
$i^*_\cF\circ d=\wt d\circ i^*_\cF$, and provides the cochain
morphism
\mar{lmp04}\ben
&& i^*_\cF: (\mathbb R,\cO^*(Z),d)\to (S_\cF(Z),\cF^*(Z),\wt d),
\label{lmp04}\\
&& dz^\la\to 0, \quad dz^i\to\wt dz^i, \nonumber
\een
of the de Rham complex of $Z$ to the leafwise de Rham complex
(\ref{spr892}).

Given a leaf $i_F:F\to Z$ of $\cF$, we have the pull-back
homomorphism
\mar{lmp30}\beq
(\mathbb R,\cO^*(Z),d) \to (\mathbb R,\cO^*(F),d) \label{lmp30}
\eeq
of the de Rham complex of $Z$ to that of $F$.

\begin{proposition} \label{lmp32} \mar{lmp32} The homomorphism (\ref{lmp30})
factorize through the homomorphism \cite{book05}.
\end{proposition}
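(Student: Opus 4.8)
The plan is to exhibit the pull-back $i_F^*$ of (\ref{lmp30}) as a composite of the leafwise cochain morphism $i^*_\cF$ of (\ref{lmp04}) with a restriction morphism onto the leaf. Concretely, I would construct a cochain morphism $r_F:(S_\cF(Z),\gF^*(Z),\wt d)\to(\mathbb R,\cO^*(F),d)$ and check that $i_F^*=r_F\circ i^*_\cF$. Intuitively this is the statement that the pull-back of an exterior form to a leaf sees only the tangential (leafwise) part of the form, which is precisely what $i^*_\cF$ retains.

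First I would construct $r_F$ as the restriction of leafwise forms to $F$. Since $F$ is a leaf of $\cF$, it is an integral manifold of the involutive distribution $T\cF$, so that $T_zF=(T\cF)_z$ for every $z\in F$, and hence $T\cF^*|_F=T^*F$ canonically. A leafwise $r$-form is a global section of $\op\w^r T\cF^*$, and pulling such a section back along the immersion $i_F:F\to Z$ therefore yields an ordinary exterior $r$-form on $F$. This defines $r_F:\gF^*(Z)\to\cO^*(F)$; in degree zero it restricts a function on $Z$ to $F$, and it carries the subring $S_\cF(Z)$ of functions constant on leaves into the constants $\mathbb R$.

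Next I would verify that $r_F$ is a cochain morphism, i.e. $r_F\circ\wt d=d\circ r_F$. In an adapted chart $(z^\la,z^i)$ the leafwise differential acts as $\wt d=\wt dz^k\w\dr_k$ with $k$ running over the leaf indices alone, while on $F$ the functions $z^\la$ are constant and the $z^i$ restrict to coordinates on the leaf. Hence the coordinate expression of $\wt d$, restricted to $F$, is exactly the de Rham differential $d=dz^k\w\dr_k$ of $F$, which gives the required intertwining relation.

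Finally I would verify the factorization $i_F^*=r_F\circ i^*_\cF$ on generators. By (\ref{lmp04}) the morphism $i^*_\cF$ sends $dz^\la\mapsto 0$ and $dz^i\mapsto\wt dz^i$, and $r_F$ then sends $\wt dz^i\mapsto dz^i|_F$; so the composite annihilates $dz^\la$ and carries $dz^i$ to $dz^i|_F$. On the other hand $i_F^*$ likewise annihilates $dz^\la$, since $z^\la$ is constant on $F$, and sends $dz^i$ to $dz^i|_F$. As the two cochain morphisms agree on functions and on all one-form generators of the exterior algebra $\cO^*(Z)$, they coincide. The one point requiring care, and the place where I expect the main (if modest) obstacle to sit, is that a leaf of $\cF$ need not be an embedded or closed submanifold; thus $r_F$ must be read pointwise through the immersion $i_F$. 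The identification $T_zF=(T\cF)_z$ is nevertheless canonical and insensitive to the embeddedness of $F$, so the construction and the factorization go through unchanged.
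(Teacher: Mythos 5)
Your proposal is correct, and it is the standard argument: the paper itself gives no proof of this proposition (it appears in the background Appendix with the proof delegated to \cite{book05}), and the intended content is exactly the factorization $i_F^*=r_F\circ i^*_\cF$ through the leafwise cochain morphism (\ref{lmp04}) that you construct. Your three steps --- defining the restriction morphism $r_F$ via the canonical identification $T_zF=(T\cF)_z$ along the immersion $i_F$, checking $r_F\circ\wt d=d\circ r_F$ in adapted coordinates, and verifying agreement on functions and coordinate differentials --- are precisely what is needed, and your closing remark about non-embedded leaves (restriction read pointwise through $i_F$, with $z^\la\circ i_F$ locally constant on plaques) addresses the only genuine subtlety.
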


\subsection{Differential geometry of Lie groups}

Let $G$ be a real Lie group of $\di G >0$, and let $L_g:G\to gG$
and $R_g:G\to Gg$ denote the action of $G$ on itself by left and
right multiplications, respectively. Clearly, $L_g$ and $R_{g'}$
for all $g,g'\in G$ mutually commute, and so do the tangent maps
$TL_g$ and $TR_{g'}$.

A vector field $\xi_l$ (resp. $\xi_r$) on a group $G$ is said to
be left-invariant (resp.  right-invariant) if $\xi_l\circ L_g
=TL_g\circ \xi_l$ (resp. $\xi_r\circ R_g=TR_g\circ \xi_r$).
Left-invariant (resp. right-invariant) vector fields make up the
 left Lie algebra $\cG_l$ (resp. the  right Lie algebra
$\cG_r$) of  $G$.

There is one-to-one correspondence between the left-invariant
vector field $\xi_l$ (resp. right-invariant vector fields $\xi_r$)
on $G$ and the vectors $\xi_l(e)=TL_{g^{-1}}\xi_l(g)$ (resp.
$\xi_r(e)=TR_{g^{-1}}\xi_l(g)$) of the tangent space $T_eG$ to $G$
at the unit element $e$ of $G$. This correspondence provides
$T_eG$ with the left and the right Lie algebra structures.
Accordingly, the left action $L_g$ of a Lie group $G$ on itself
defines its  adjoint representation
\mar{1210}\beq
\xi_r \to {\rm Ad}\,g(\xi_r)=TL_g\circ\xi_r\circ L_{g^{-1}}
\label{1210}
\eeq
in the right Lie algebra $\cG_r$.

Let $\{\e_m\}$ (resp. $\{\ve_m\}$) denote the basis for the left
(resp. right) Lie algebra, and let $c^k_{mn}$ be the  right
structure constants
\be
[\ve_m,\ve_n]=c^k_{mn}\ve_k.
\ee
There is the morphism
\be
\rho: \cG_l\ni \e_m \to\ve_m= -\e_m\in \cG_r
\ee
between left and right Lie algebras such that
\be
[\e_m,\e_n]=-c^k_{mn}\e_k.
\ee

The tangent bundle $\pi_G:TG\to G$ of a Lie group $G$ is trivial.
There are the following two canonical isomorphisms
\be
&& \vr_l: TG\ni q\to (g=\pi_G(q), TL^{-1}_g(q))\in G\times \cG_l,\\
&& \vr_r: TG\ni q\to (g=\pi_G(q), TR^{-1}_g(q))\in G\times \cG_r.
\ee
Therefore, any action
\be
G\times Z\ni (g,z)\to gz\in Z
\ee
of a Lie group $G$ on a manifold $Z$ on the left yields the
homomorphism
\mar{spr941}\beq
\cG_r\ni\ve\to \xi_\ve\in \cT_1(Z) \label{spr941}
\eeq
of the right Lie algebra $\cG_r$ of $G$ into the Lie algebra of
vector fields on $Z$ such that
\mar{z212}\beq
\xi_{{\rm Ad}\,g(\ve)}=Tg\circ \xi_\ve\circ g^{-1}. \label{z212}
\eeq
Vector fields $\xi_\ve$ are said to be the  infinitesimal
generators of a representation  of the Lie group $G$ in $Z$.

In particular, the adjoint representation (\ref{1210}) of a Lie
group in its right Lie algebra yields the adjoint representation
\be
\ve': \ve\to {\rm ad}\,\ve' (\ve)=[\ve',\ve], \qquad {\rm
ad}\,\ve_m(\ve_n)=c^k_{mn}\ve_k,
\ee
of the right Lie algebra $\cG_r$ in itself.

The dual $\cG^*=T^*_eG$ of the tangent space $T_eG$ is called the
 Lie coalgebra). It is provided with the basis $\{\ve^m\}$
which is the dual of the basis $\{\ve_m\}$ for $T_eG$. The group
$G$ and the right Lie algebra $\cG_r$ act on $\cG^*$ by the
coadjoint representation
\mar{z211}\ben
&&\lng{\rm Ad}^*g(\ve^*),\ve\rng =\lng\ve^*,{\rm
Ad}\,g^{-1}(\ve)\rng, \quad
\ve^*\in \cG^*, \quad \ve\in \cG_r, \label{z211}\\
&&\lng{\rm ad}^*\ve'(\ve^*),\ve\rng=-\lng\ve^*,[\ve',\ve]\rng, \qquad
\ve'\in \cG_r,\nonumber\\
&&  {\rm ad}^*\ve_m(\ve^n)=-c^n_{mk}\ve^k.\nonumber
\een

The Lie coalgebra $\cG^*$ of a Lie group $G$ is provided with the
canonical Poisson structure, called the  Lie--Poisson structure
\cite{abr,libe}. It is given by the bracket
\mar{gm510}\beq
\{f,g\}_{\rm LP}=\langle\ve^*,[df(\ve^*),dg(\ve^*)]\rangle, \qquad
f,g\in C^\infty(\cG^*), \label{gm510}
\eeq
where $df(\ve^*),dg(\ve^*)\in \cG_r$ are seen as linear mappings
from $T_{\ve^*}\cG^*=\cG^*$ to $\mathbb R$. Given coordinates
$z_k$ on $\cG^*$ with respect to the basis $\{\ve^k\}$, the Lie --
Poisson bracket (\ref{gm510}) and the corresponding Poisson
bivector field $w$ read
\be
\{f,g\}_{\rm LP}=c^k_{mn}z_k\dr^mf\dr^ng, \qquad
w_{mn}=c^k_{mn}z_k.
\ee
One can show that symplectic leaves of the Lie--Poisson structure
on the coalgebra $\cG^*$ of a connected Lie group $G$ are orbits
of the coadjoint representation (\ref{z211}) of $G$ on $\cG^*$
\cite{wein}.

\subsection{Symplectic structure}

 Let $Z$ be a smooth
manifold. Any exterior two-form $\Om$ on $Z$ yields a linear
bundle morphism
\mar{m52}\beq
\Om^\fl: TZ\op\to_Z T^*Z, \qquad \Om^\fl:v \to - v\rfloor\Om(z),
\quad v\in T_zZ, \quad z\in Z. \label{m52}
\eeq
One says that a two-form $\Om$ is of rank  $r$ if the morphism
(\ref{m52}) has a rank $r$. A kernel
 $\Ker \Om$ of $\Om$
is defined as the kernel of the morphism (\ref{m52}). In
particular, $\Ker \Om$ contains the canonical zero section $\wh 0$
of $TZ\to Z$. If $\Ker \Om=\wh 0$, a two-form $\Om$ is said to be
non-degenerate. A closed non-degenerate two-form $\Om$ is called
symplectic. Accordingly, a manifold equipped with a symplectic
form is a symplectic manifold. A symplectic manifold $(Z,\Om)$
always is even dimensional and orientable.

A manifold morphism $\zeta$ of a symplectic manifold $(Z,\Om)$ to
a symplectic manifold $(Z',\Om')$ is called symplectic if
$\Om=\zeta^*\Om'$. Any symplectic morphism is an immersion. A
symplectic isomorphism is called the symplectomorphism.

A vector field $u$ on a symplectic manifold $(Z,\Om)$ is an
infinitesimal generator of a local one-parameter group of local
symplectomorphism iff the Lie derivative $\bL_u\Om$ vanishes. It
is called  the canonical vector field. A canonical vector field
$u$ on a symplectic manifold $(Z,\Om)$ is said to be Hamiltonian
if a closed one-form $u\rfloor\Om$ is exact. Any smooth function
$f\in C^\infty(Z)$ on $Z$ defines a unique Hamiltonian vector
field $\vt_f$ such that
\mar{z100}\beq
\vt_f\rfloor\Om=-df, \qquad \vt_f=\Om^\sh(df), \label{z100}
\eeq
where $\Om^\sh$ is the inverse isomorphism to $\Om^\fl$
(\ref{m52}).

\begin{example}\label{e11.0} \mar{e11.0} Given an $m$-dimensional manifold $M$ coordinated by
$(q^i)$, let
\be
\pi_{*M}:T^*M\to M
\ee
be its cotangent bundle equipped with the holonomic coordinates
$(q^i, p_i=\dot q_i)$. It is endowed with the canonical Liouville
form
\be
\Xi=p_idq^i
\ee
and the canonical symplectic form
\mar{m83}\beq
\Om_T= d\Xi=dp_i\w dq^i. \label{m83}
\eeq
Their coordinate expressions are maintained under holonomic
coordinate transformations. The Hamiltonian vector field $\vt_f$
(\ref{z100}) with respect to the canonical symplectic form
(\ref{m83}) reads
\be
\vt_f=\dr^if\dr_i -\dr_i f\dr^i.
\ee
\end{example}

The canonical symplectic form (\ref{m83}) plays a prominent role
in symplectic geometry in view of the classical Darboux theorem.

\begin{theorem} \label{spr825} \mar{spr825}
Each point of a symplectic manifold $(Z,\Om)$ has an open
neighborhood equipped with coordinates $(q^i,p_i)$, called
canonical or Darboux coordinates, such that $\Om$ takes the
coordinate form (\ref{m83}).
\end{theorem}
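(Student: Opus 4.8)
The plan is to establish the Darboux theorem by Moser's deformation method: I would fix a point $z_0\in Z$ and construct a local diffeomorphism carrying the given symplectic form $\Om$ into the constant-coefficient model form $\Om_T$ (\ref{m83}), from which the coordinate statement follows by transporting the holonomic coordinates.

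First I would normalize $\Om$ at the single point $z_0$ by linear algebra. The tangent space $T_{z_0}Z$, equipped with the non-degenerate skew-symmetric bilinear form $\Om(z_0)$, is a symplectic vector space and hence admits a symplectic basis. Choosing a coordinate chart about $z_0$ whose coordinate differentials at $z_0$ form the dual of such a basis, I obtain coordinates $(q^i,p_i)$ in which $\Om(z_0)$ agrees with the value at $z_0$ of the model form $\Om_0=dp_i\w dq^i$. Thus $\Om$ and $\Om_0$ are two symplectic forms on a neighborhood of $z_0$ that coincide at $z_0$.

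Next I would interpolate and linearize. Put $\Om_t=\Om_0+t(\Om-\Om_0)$ for $t\in[0,1]$; each $\Om_t$ is closed, and since $\Om_t(z_0)=\Om_0(z_0)$ is non-degenerate, every $\Om_t$ is non-degenerate on a common neighborhood of $z_0$ (using compactness of $[0,1]$). The two-form $\Om-\Om_0$ is closed and vanishes at $z_0$, so the Poincar\'e lemma, applied with the homotopy operator based at $z_0$, yields a one-form $\sigma$ with $d\sigma=\Om-\Om_0$ and $\sigma(z_0)=0$. Exploiting the non-degeneracy of $\Om_t$, I would then define a time-dependent vector field $X_t$ by the pointwise-linear equation $X_t\rfloor\Om_t=-\sigma$; since $\sigma(z_0)=0$, this forces $X_t(z_0)=0$.

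Finally I would integrate $X_t$ and verify Moser's identity. Because $X_t$ vanishes at $z_0$, its time-dependent flow $\phi_t$ fixes $z_0$ and, after shrinking the neighborhood, is defined for all $t\in[0,1]$. Cartan's formula together with $d\Om_t=0$ gives $\bL_{X_t}\Om_t=d(X_t\rfloor\Om_t)=-d\sigma=-(\Om-\Om_0)$, whence $\frac{d}{dt}(\phi_t^*\Om_t)=\phi_t^*\!\left(\bL_{X_t}\Om_t+\tfrac{d}{dt}\Om_t\right)=0$; therefore $\phi_1^*\Om=\phi_1^*\Om_1=\phi_0^*\Om_0=\Om_0$. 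The functions $q^i\circ\phi_1^{-1}$ and $p_i\circ\phi_1^{-1}$ are then the desired Darboux coordinates. I expect the final step to be the principal obstacle: one must guarantee that the flow of the non-autonomous field $X_t$ exists on the whole interval $[0,1]$ over a single neighborhood of $z_0$, which is secured precisely by the vanishing $X_t(z_0)=0$ (shrinking the domain if necessary). As an alternative I could give the classical inductive argument, forming a canonical pair from a function with non-vanishing differential and its Hamiltonian vector field, passing to the symplectic slice on which this pair is constant, and iterating on the lower-dimensional symplectic manifold.
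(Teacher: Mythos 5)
Your proposal is correct, but there is no internal argument to compare it against: the paper states Theorem \ref{spr825} in its Appendix purely as classical background material (the Darboux theorem), deferring to the references \cite{abr,book05,libe,vais}, and supplies no proof of its own. Your Moser deformation argument is a complete and standard route: the linear normalization of $\Om$ at $z_0$ via a symplectic basis, the affine interpolation $\Om_t=\Om_0+t(\Om-\Om_0)$ with non-degeneracy on a common neighborhood by compactness of $[0,1]$, the primitive $\sigma$ with $\sigma(z_0)=0$ from the homotopy operator based at $z_0$, the vector field defined by $X_t\rfloor\Om_t=-\sigma$ (hence $X_t(z_0)=0$), and the Moser identity giving $\phi_1^*\Om=\Om_0$ are all sound; the one delicate point, existence of the non-autonomous flow on all of $[0,1]$ over a fixed neighborhood, is correctly secured by the vanishing of $X_t$ at $z_0$ together with openness of the flow's domain. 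A small simplification you could note: with the radial homotopy operator the primitive of \emph{any} two-form vanishes at the base point, so the vanishing of $\Om-\Om_0$ at $z_0$ is not needed for $\sigma(z_0)=0$; its real role is to make all the $\Om_t$ agree, and hence be non-degenerate, at $z_0$. The classical inductive argument you sketch as an alternative --- producing a canonical pair from a function with nowhere-vanishing differential and its Hamiltonian vector field, then descending to a lower-dimensional symplectic manifold --- is the proof found in the sources the paper cites (e.g.\ Abraham--Marsden, Libermann--Marle); it is more elementary in that it avoids the Poincar\'e lemma and time-dependent flows, whereas Moser's method buys flexibility, adapting immediately to relative, parametric and equivariant versions of the theorem.
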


Let $i_N:N\to Z$ be a submanifold of a $2m$-dimensional symplectic
manifold $(Z,\Om)$. A subset
\be
\rO_\Om TN=\op\bigcup_{z\in N} \{v\in T_zZ: \, v\rfloor u\rfloor
\Om=0, \,\,  u\in T_zN\}
\ee
of $TZ|_N$ is called orthogonal to $TN$ relative to a symplectic
form $\Om$. One considers the following special types of
submanifolds of a symplectic manifold such that the pull-back
$\Om_N=i_N^*\Om$ of a symplectic form $\Om$ onto a submanifold $N$
is of constant rank. A submanifold $N$ of $Z$ is said to be:

$\bullet$  coisotropic if $\rO_\Om TN\subseteq TN$, $\di N\geq m$;

$\bullet$ symplectic if $\Om_N$ is a symplectic form on $N$;

$\bullet$ isotropic  if $TN\subseteq \rO_\Om TN$, $\di N\leq m$.

\subsection{Poisson structure}

A Poisson bracket on a ring $C^\infty(Z)$ of smooth real functions
on a manifold $Z$ (or a Poisson structure on $Z$) is defined as an
$\mathbb R$-bilinear map
\be
C^\infty(Z)\times C^\infty(Z)\ni (f,g)\to \{f,g\}\in C^\infty(Z)
\ee
which satisfies the following conditions:

$\bullet$ $\{g,f\}=-\{f,g\}$;

$\bullet$ $\{f,\{g,h\}\} + \{g,\{h,f\}\} +\{h,\{f,g\}\}=0$;

$\bullet$ $\{h,fg\}=\{h,f\}g +f\{h,g\}$.

A  Poisson bracket makes $C^\infty(Z)$ into a real Lie algebra,
called the Poisson algebra.  A Poisson structure is characterized
by a particular bivector field as follows.

\begin{theorem}\label{p11.1} \mar{p11.1} Every Poisson bracket on a
manifold $Z$ is uniquely defined as
\mar{b450}\beq
\{f,f'\}=w(df,df')=w^{\m\nu}\dr_\m f\dr_\nu f' \label{b450}
\eeq
by a bivector field $w$ whose Schouten--Nijenhuis bracket
$[w,w]_{\rm SN}$ vanishes. It is called a Poisson bivector field.
\end{theorem}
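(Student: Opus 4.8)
The plan is to prove the correspondence in two stages: first that skew-symmetry and the Leibniz rule force the bracket to be given by a unique bivector field, and second that the Jacobi identity is equivalent to the vanishing of the Schouten--Nijenhuis bracket of that bivector. For the first stage I would use the Leibniz axiom $\{h,ff'\}=\{h,f\}f'+f\{h,f'\}$ to observe that, for fixed $h$, the map $f\mapsto\{h,f\}$ is a derivation of $C^\infty(Z)$, hence a vector field $X_h$. By skew-symmetry the bracket is then a derivation in each argument, i.e. a biderivation, and therefore a bidifferential operator of bi-order $(1,1)$ that annihilates constants. Such an operator is determined by its principal symbol, a contravariant two-tensor; skew-symmetry of the bracket makes this tensor antisymmetric, so it is a bivector field $w$ (a section of $\w^2TZ$) with $\{f,f'\}=w(df,df')=w^{\m\n}\dr_\m f\,\dr_\n f'$ and $w^{\m\n}=-w^{\n\m}$. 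Uniqueness is immediate, since evaluating on coordinate functions gives $w^{\m\n}=\{z^\m,z^\n\}$, so $w$ is completely recovered from the bracket.

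For the second stage I would introduce the Jacobiator
\[
J(f,g,h)=\{f,\{g,h\}\}+\{g,\{h,f\}\}+\{h,\{f,g\}\}
\]
and compute it directly in local coordinates using $\{f,g\}=w^{\m\n}\dr_\m f\,\dr_\n g$. Substituting the bracket into itself produces terms carrying second derivatives of $f,g,h$ and terms carrying first derivatives of the coefficients $w^{\m\n}$. The crucial cancellation is that all second-derivative contributions vanish upon cyclic summation, by skew-symmetry of $w$ together with the symmetry of mixed partial derivatives; what survives is a purely first-order expression in $df,dg,dh$ whose coefficients are built from $w^{\m\n}\dr_\m w^{\la\si}$. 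This surviving expression is exactly the contraction of the trivector field $[w,w]_{\rm SN}$ with $df\w dg\w dh$, up to the conventional numerical factor, so that $J(f,g,h)$ is proportional to $[w,w]_{\rm SN}(df,dg,dh)$.

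To conclude I would appeal to tensoriality: since $[w,w]_{\rm SN}$ is a genuine multivector field and the differentials $df,dg,dh$ span the whole cotangent space at each point, the identity $J(f,g,h)\equiv 0$ for all $f,g,h$ holds if and only if the trivector $[w,w]_{\rm SN}$ vanishes identically. Hence the third Poisson axiom is equivalent to $[w,w]_{\rm SN}=0$, which together with the first stage establishes the bijection between Poisson brackets and bivector fields with vanishing Schouten--Nijenhuis self-bracket. The main obstacle is precisely the coordinate computation of the middle stage: one must track the index bookkeeping carefully enough to see both that the second-order terms cancel and that the first-order remainder reassembles into the intrinsic quantity $[w,w]_{\rm SN}$, which can be done either by direct manipulation or by invoking the defining derivation property of the Schouten--Nijenhuis bracket on multivector fields and matching it against the biderivation structure found above.
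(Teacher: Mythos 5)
Your proposal is correct: the Leibniz rule makes $f\mapsto\{h,f\}$ a derivation (a vector field), so with skew-symmetry the bracket is a biderivation depending pointwise only on $df,df'$ and hence given by a unique bivector field $w$ (recovered via $w^{\m\n}=\{z^\m,z^\n\}$), while the coordinate computation of the Jacobiator — second-derivative terms cancelling by skew-symmetry of $w$ together with symmetry of mixed partials, the surviving first-order remainder built from $w^{\m\n}\dr_\n w^{\la\si}$ being proportional to the contraction of $[w,w]_{\rm SN}$ with $df\w dg\w dh$ — combined with tensoriality of the trivector $[w,w]_{\rm SN}$ yields the equivalence of the Jacobi identity with $[w,w]_{\rm SN}=0$. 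The paper itself states Theorem \ref{p11.1} in its Appendix as standard background with no proof at all, deferring to the cited literature (e.g.\ Vaisman \cite{vais}, Libermann--Marle \cite{libe}), and your two-stage argument is precisely the standard proof found there, so there is nothing in the paper for it to diverge from.
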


A manifold $Z$ endowed with a Poisson structure is called a
Poisson manifold.

\begin{example} \label{w206} \mar{w206}
Any manifold admits a zero Poisson structure characterized by a
zero Poisson bivector field $w=0$.
\end{example}

Any bivector field $w$ on a manifold $Z$ yields a linear bundle
morphism
\mar{m51}\beq
w^\sh: T^*Z\op\to_Z TZ, \qquad w^\sh:\al\to -w(z)\lfloor \al,
\qquad \al\in T^*_zZ. \label{m51}
\eeq
One says that $w$ is of rank  $r$ if the morphism (\ref{m51}) is
of this rank. If a Poisson bivector field is of constant rank, the
Poisson structure is called regular. Throughout this work, only
regular Poisson structures are considered. A Poisson structure
determined by a Poisson bivector field $w$ is said to be
non-degenerate if $w$ is of maximal rank.

There is one-to-one correspondence $\Om_w \leftrightarrow w_\Om$
between the symplectic forms and the non-degenerate Poisson
bivector fields  which is given by the equalities
\be
&&w_\Om(\f,\si)=\Om_w(w_\Om^\sh(\f),w_\Om^\sh(\si)), \qquad
\f,\si\in\cO^1(Z),\\
&&\Om_w(\vt,\nu)=w_\Om(\Om_w^\fl(\vt),\Om_w^\fl(\nu)), \qquad
\vt,\nu\in \cT(Z),
\ee
where the morphisms $w_\Om^\sh$ (\ref{m51}) and $\Om_w^\fl$
(\ref{m52}) are mutually inverse.

However, this correspondence is not preserved under manifold
morphisms in general. Namely, let $(Z_1,w_1)$ and $(Z_2,w_2)$ be
Poisson manifolds. A manifold  morphism $\varrho:Z_1\to Z_2$ is
said to be a Poisson morphism if
\be
\{f\circ \varrho,f'\circ\varrho\}_1=\{f,f'\}_2\circ\varrho, \qquad
f,f'\in C^\infty(Z_2),
\ee
or, equivalently, if $w_2=T\varrho \circ w_1$, where $T\varrho$ is
the tangent map to $\varrho$. Herewith, the rank of $w_1$ is
superior or equal to that of $w_2$. Therefore, there are no
pull-back and push-forward operations of Poisson structures in
general. Nevertheless, let us mention the following construction.

\begin{theorem}\label{p11.3} \mar{p11.3} Let $(Z,w)$ be a Poisson manifold and
$\pi:Z\to Y$ a fibration such that, for every pair of functions
$(f,g)$ on $Y$ and for each point $y\in Y$, the restriction of a
function $\{\pi^*f,\pi^*g\}$ to a fibre $\pi^{-1}(y)$ is constant,
i.e., $\{\pi^*f,\pi^*g\}$ is the pull-back onto $Z$ of some
function on $Y$. Then there exists a coinduced Poisson structure
$w'$ on $Y$ for which $\pi$ is a Poisson morphism.
\end{theorem}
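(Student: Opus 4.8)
The plan is to construct the bracket on $Y$ by descending the bracket of pull-backs, and then to verify the Poisson axioms by transporting everything back to $Z$ and invoking the injectivity of $\pi^*$. Since $\pi$ is a fibration, hence a surjective submersion, the comorphism $\pi^*:C^\infty(Y)\to C^\infty(Z)$ is injective, and any smooth function on $Z$ that is constant along the fibres of $\pi$ is the pull-back of a unique smooth function on $Y$. Thus, for each pair $f,g\in C^\infty(Y)$, the hypothesis provides a unique $h\in C^\infty(Y)$ with $\pi^*h=\{\pi^*f,\pi^*g\}$, and I would define $\{f,g\}':=h$, i.e.\ by the requirement $\pi^*\{f,g\}'=\{\pi^*f,\pi^*g\}$.

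First I would check $\mathbb R$-bilinearity, which is immediate from the bilinearity of $\{,\}$ together with the linearity and injectivity of $\pi^*$. Skew-symmetry follows because $\pi^*\{f,g\}'=\{\pi^*f,\pi^*g\}=-\{\pi^*g,\pi^*f\}=-\pi^*\{g,f\}'$, and injectivity of $\pi^*$ then forces $\{f,g\}'=-\{g,f\}'$. The Leibniz rule is obtained the same way, using that $\pi^*$ is a ring homomorphism: from $\pi^*\{f,gg'\}'=\{\pi^*f,(\pi^*g)(\pi^*g')\}=(\pi^*g')\{\pi^*f,\pi^*g\}+(\pi^*g)\{\pi^*f,\pi^*g'\}=\pi^*\bigl(g'\{f,g\}'+g\{f,g'\}'\bigr)$ one concludes $\{f,gg'\}'=g'\{f,g\}'+g\{f,g'\}'$. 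Finally the Jacobi identity transfers verbatim, since by construction $\pi^*\{f,\{g,h\}'\}'=\{\pi^*f,\{\pi^*g,\pi^*h\}\}$, so the pull-back of the Jacobiator of $(f,g,h)$ on $Y$ equals the Jacobiator of $(\pi^*f,\pi^*g,\pi^*h)$ on $Z$, which vanishes; injectivity then forces the $Y$-side Jacobiator to vanish.

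Having established that $\{,\}'$ obeys all the axioms of a Poisson bracket, I would appeal to Theorem \ref{p11.1} to represent it by a bivector field $w'$ on $Y$ with vanishing Schouten--Nijenhuis bracket, so that $(Y,w')$ is a Poisson manifold. That $\pi$ is then a Poisson morphism is nothing but the defining relation $\{\pi^*f,\pi^*g\}=\pi^*\{f,g\}'$, which holds by construction, so $w'$ is the coinduced structure sought.

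The only genuinely nontrivial point is the descent of smoothness: one must know that a smooth function on $Z$ constant along the fibres of the submersion $\pi$ is the pull-back of a smooth function on $Y$, so that $h=\{f,g\}'$ is itself smooth and $\{,\}'$ is well-defined on $C^\infty(Y)$. This is a standard consequence of the local triviality of submersions, since in adapted coordinates such a function simply does not depend on the fibre coordinates; it is exactly what realizes $\pi^*$ as an isomorphism onto the subring of fibrewise-constant functions. Everything else is a routine transfer of identities across the injective ring homomorphism $\pi^*$.
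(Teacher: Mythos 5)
Your proof is correct, and it is essentially the canonical argument: the paper itself states Theorem \ref{p11.3} in its appendix without proof, as a standard result quoted from the cited literature (e.g., Libermann--Marle, Vaisman), where the proof is exactly your descent construction --- define $\{f,g\}'$ by $\pi^*\{f,g\}'=\{\pi^*f,\pi^*g\}$ using injectivity of $\pi^*$ for the surjective submersion $\pi$, transfer the Poisson algebra axioms across $\pi^*$, and invoke the bivector-field characterization (Theorem \ref{p11.1}). You also correctly isolate and settle the one nontrivial analytic point, namely that a fibrewise-constant smooth function on $Z$ descends to a smooth function on $Y$ via local sections of the submersion, so nothing is missing.
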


\begin{example} \label{spr862} \mar{spr862}
The direct product $Z\times Z'$ of Poisson manifolds $(Z,w)$ and
$(Z',w')$ can be endowed with the product of Poisson structures,
given by a bivector field $w + w'$ such that the surjections
$\pr_1$ and $\pr_2$ are Poisson morphisms.
\end{example}

A function $f\in C^\infty(Z)$ is called the Casimir function of a
Poisson structure on $Z$ if its Poisson bracket with any function
on $Z$ vanishes. Casimir functions form a real ring $\cC(Z)$. In
particular, a symplectic manifold admits only constant Casimir
functions.

A vector field $u$ on a Poisson manifold $(Z,w)$ is an
infinitesimal generator of a local one-parameter group of Poisson
automorphisms iff the Lie derivative
\mar{gm508}\beq
\bL_uw=[u,w]_{\rm SN} \label{gm508}
\eeq
vanishes. It is called the canonical vector field for a Poisson
structure $w$. In particular, for any real smooth function $f$ on
a Poisson manifold $(Z,w)$, let us put
\mar{gm509}\beq
\vt_f=w^\sh (df)= -w\lfloor df=w^{\m\nu}\dr_\m f\dr_\nu.
\label{gm509}
\eeq
It is a canonical vector field, called the Hamiltonian vector
field of a function $f$ with respect to a Poisson structure $w$.
Hamiltonian vector fields fulfil the relations
\mar{m81}\beq
\{f,g\}=\vt_f\rfloor dg, \qquad [\vt_f,\vt_g]=\vt_{\{f,g\}},
\qquad f,g\in C^\infty(Z). \label{m81}
\eeq

For instance, the Hamiltonian vector field $\vt_f$ (\ref{z100}) of
a function $f$ on a symplectic manifold $(Z,\Om)$ coincides with
that (\ref{gm509}) with respect to the corresponding Poisson
structure $w_\Om$. The Poisson bracket defined by a symplectic
form $\Om$ reads
\be
\{f,g\}=\vt_g\rfloor\vt_f\rfloor\Om.
\ee

Since a Poisson manifold $(Z,w)$ is assumed to be regular, the
range $\bT=w^\sh(T^*Z)$ of the morphism (\ref{m51}) is a subbundle
of $TZ$ called the characteristic distribution on $(Z,w)$. It is
spanned by Hamiltonian vector fields, and it is involutive by
virtue of the relation (\ref{m81}). It follows that a Poisson
manifold $Z$ admits local adapted coordinates in Theorem
\ref{c11.0}. Moreover, one can choose particular adapted
coordinates which bring a Poisson structure into the following
canonical form.

\begin{theorem}\label{canpo} \mar{canpo} For any point $z$ of a $k$-dimensional Poisson
manifold $(Z,w)$, there exist coordinates
\mar{m111}\beq
(z^1,\dots,z^{k-2m},q^1,\dots,q^m,p_1,\dots,p_m) \label{m111}
\eeq
on a neighborhood of $z$ such that
\be
w=\frac{\dr}{\dr p_i}\w\frac{\dr}{\dr q^i},\qquad
\{f,g\}=\frac{\dr f}{\dr p_i} \frac{\dr g}{\dr q^i}
   - \frac{\dr f}{\dr q^i} \frac{\dr g}{\dr p_i}.
\ee
\end{theorem}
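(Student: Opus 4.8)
The plan is to reduce the statement to the classical Darboux Theorem \ref{spr825} for symplectic manifolds by passing to the characteristic (symplectic) foliation, and then to trivialise the transverse directions. The preceding discussion already establishes that the characteristic distribution $\bT=w^\sh(T^*Z)$ is an involutive subbundle of rank $2m$ spanned by Hamiltonian vector fields, so the foliation machinery is available.

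First I would apply the adapted-coordinate Theorem \ref{c11.0} to $\bT$, obtaining about $z$ coordinates $(z^\la,y^a)$, $\la=1,\dots,k-2m$, $a=1,\dots,2m$, in which $\bT$ is spanned by the $\dr/\dr y^a$ and its annihilator $\rA\bT$ by the $dz^\la$. The point of this choice is that, for a skew bivector, $\Ker w^\sh$ coincides with $\rA\bT$; hence $w^\sh(dz^\la)=0$, so each $z^\la$ is a local Casimir function, $\{z^\la,f\}=0$ for every $f$. Consequently the components $w^{\la\nu}$ all vanish and $w=\frac12 w^{ab}(z^\la,y^c)\,\dr_a\w\dr_b$ is supported entirely in the leaf directions.

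Next I would restrict to a single leaf $F$, a level set $z^\la=\mathrm{const}$. On $F$ the restricted bivector has maximal rank $2m=\di F$, so by the correspondence between non-degenerate Poisson bivector fields and symplectic forms recalled in the text it determines a symplectic form $\Om_F$. Theorem \ref{spr825} then supplies canonical coordinates $(q^i,p_i)$ on $F$ in which $w|_F=\dr_{p_i}\w\dr_{q^i}$; together with the transverse $z^\la$ these are the candidate coordinates (\ref{m111}).

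The hard part is the last step: arranging the leafwise Darboux coordinates to vary smoothly with the parameters $z^\la$ and, crucially, to eliminate any cross terms coupling $(q^i,p_i)$ to the $z^\la$, so that $w$ is in the split form on the whole chart rather than merely leaf by leaf. I would handle this by induction on $m$ in the spirit of Weinstein's splitting argument. Since $w(z)\neq 0$ one finds $q^1,p_1$ with $\{q^1,p_1\}=1$ near $z$: pick $p_1$ with $\vt_{p_1}(z)\neq 0$ and, by the classical straightening theorem for the nonvanishing field $\vt_{p_1}$, a function $q^1$ conjugate to it. Because $[\vt_{q^1},\vt_{p_1}]=\vt_{\{q^1,p_1\}}=0$ by the relation (\ref{m81}), these two Hamiltonian fields commute and may be straightened simultaneously to $\dr_{q^1},\dr_{p_1}$. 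One then checks, using $[w,w]_\mathrm{SN}=0$ (Theorem \ref{p11.1}) together with the vanishing of $\{q^1,\,\cdot\,\}$ and $\{p_1,\,\cdot\,\}$ on the remaining coordinates, that $w=\dr_{p_1}\w\dr_{q^1}+w'$ with $w'$ a Poisson bivector field of rank $2m-2$ in the complementary variables. The inductive hypothesis applied to $w'$ completes the construction, the base case $m=0$ being $w=0$, for which any chart works. I expect the bookkeeping that $w'$ is again Poisson and genuinely transverse — that is, that the Jacobi identity forces the splitting — to be the only substantive computation.
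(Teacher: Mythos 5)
Your argument is correct, but there is nothing in the paper to compare it against: Theorem \ref{canpo} appears in the Appendix as standard background material, stated without proof and referred to the literature \cite{libe,vais,wein}. Your induction on $m$ is precisely the classical Lie--Weinstein splitting argument for regular Poisson structures given in those references, so in effect you have reconstructed the standard proof rather than diverged from anything internal to the paper. Two remarks. First, your preliminary reduction (adapted coordinates for $\bT$ via Theorem \ref{c11.0}, the identification $\Ker w^\sh=\rA\bT$ producing the local Casimirs $z^\la$, and leaf-by-leaf use of Theorem \ref{spr825}) is logically dispensable: as you yourself observe, leafwise Darboux coordinates give no control over the dependence on $z^\la$, and your final induction proves the whole statement from scratch --- the transverse coordinates emerge automatically at the base case, where constancy of the rank forces the residual bivector to vanish identically on the chart and not merely at the centre point. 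Second, a bookkeeping point in the induction: once $q^1,p_1$ with $\{p_1,q^1\}=1$ are completed to a chart by joint invariants $w^3,\dots,w^k$ of the commuting flows of $\vt_{q^1},\vt_{p_1}$, what one gets is $\vt_{p_1}=\dr_{q^1}$ and $\vt_{q^1}=-\dr_{p_1}$ (each Hamiltonian field straightens to the coordinate field of the \emph{conjugate} variable, up to sign, rather than to $\dr_{q^1},\dr_{p_1}$); the Jacobi identity in the form $\{q^1,\{w^i,w^j\}\}=\{p_1,\{w^i,w^j\}\}=0$ then says exactly that the residual bivector $w'=\frac12\{w^i,w^j\}\,\dr_{w^i}\w\dr_{w^j}$ has coefficients independent of $(q^1,p_1)$, which is the "substantive computation" you flagged, and regularity of $w$ gives $w'$ constant rank $2(m-1)$, so the induction closes.
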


The coordinates (\ref{m111}) are called the canonical or Darboux
coordinates for the Poisson structure $w$. The Hamiltonian vector
field of a function $f$ written in this coordinates is
\be
\vt_f= \dr^if\dr_i - \dr_if\dr^i.
\ee
Of course, the canonical coordinates for a symplectic form $\Om$
in Theorem \ref{spr825} also are canonical coordinates in Theorem
\ref{canpo} for the corresponding non-degenerate Poisson bivector
field $w$, i.e.,
\be
\Om=dp_i\w dq^i, \qquad w=\dr^i\w \dr_i.
\ee
With respect to these coordinates, the mutually inverse bundle
isomorphisms $\Om^\fl$ (\ref{m52}) and $w^\sh$ (\ref{m51}) read
\be
&& \Om^\fl: v^i\dr_i + v_i\dr^i\to -v_idq^i+ v^idp_i, \\
&& w^\sh: v_idq^i+ v^idp_i \to v^i\dr_i- v_i\dr^i.
\ee

\subsection{Symplectic foliations}

Integral manifolds of the characteristic distribution $\bT$ of a
$k$-dimensional Poisson manifold $(Z,w)$ constitute a (regular)
foliation $\cF$ of $Z$ whose tangent bundle $T\cF$ is $\bT$. It is
called the characteristic foliation of a Poisson manifold. By the
very definition of the characteristic distribution $\bT=T\cF$, a
Poisson bivector field $w$ is subordinate to $\op\w^2 T\cF$.
Therefore, its restriction $w|_F$ to any leaf $F$ of $\cF$ is a
non-degenerate Poisson bivector field on $F$. It provides $F$ with
a non-degenerate Poisson structure $\{,\}_F$ and, consequently, a
symplectic structure. Clearly, the local Darboux coordinates for
the Poisson structure $w$ in Theorem \ref{canpo} also are the
local adapted coordinates
\be
(z^1,\ldots,z^{k-2m}, z^i=q^i,z^{m+i}=p_i), \qquad i=1,\ldots,m,
\ee
(\ref{spr850}) for the characteristic foliation $\cF$, and the
symplectic structures along its leaves read
\be
\Om_F=dp_i\w dq^i.
\ee

In particular, it follows that Casimir functions of a Poisson
structure are constant on leaves of its characteristic symplectic
foliation.

Since any foliation is locally simple, a local structure of an
arbitrary Poisson manifold reduces to the following
\cite{vais,wein}.

\begin{theorem}\label{wenst1} \mar{wenst1} Each point of a Poisson manifold has an
open neighborhood which is Poisson equivalent to the product of a
manifold with the zero Poisson structure and a symplectic
manifold.
\end{theorem}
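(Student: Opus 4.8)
The plan is to deduce the splitting directly from the canonical (Darboux) normal form of a regular Poisson bivector field established in Theorem \ref{canpo}. Since throughout this work only regular Poisson structures are considered, I would first fix a point $z$ of the $k$-dimensional Poisson manifold $(Z,w)$ and invoke Theorem \ref{canpo} to obtain a coordinate neighborhood of $z$ equipped with canonical coordinates
\[
(z^1,\dots,z^{k-2m},q^1,\dots,q^m,p_1,\dots,p_m)
\]
in which the Poisson bivector field reads $w=\dr^i\w\dr_i$, the sum running over $i=1,\dots,m$.

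Next I would observe that this coordinate expression of $w$ involves only the $q^i$ and $p_i$ directions: it contains no term with $\dr/\dr z^\la$, $\la=1,\dots,k-2m$. After shrinking the chart domain to a product of coordinate balls, say $U=V\times S$, where $V$ carries the coordinates $(z^\la)$ and $S$ carries the coordinates $(q^i,p_i)$, this means that $w$ is the pull-back along the projection $\pr_2:U\to S$ of the bivector field $\dr^i\w\dr_i$ on $S$. On $V$ the induced bivector field is zero, so $V$ is a manifold with the zero Poisson structure (Example \ref{w206}). On $S$ the bivector field $\dr^i\w\dr_i$ is non-degenerate and, by the correspondence recorded after Theorem \ref{p11.1}, corresponds to the symplectic form $\Om_S=dp_i\w dq^i$; thus $(S,\Om_S)$ is a symplectic manifold.

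It then remains to identify $w$ with the product Poisson structure of these two factors. By Example \ref{spr862}, the product $V\times S$ of the zero Poisson manifold $V$ and the symplectic (hence Poisson) manifold $S$ carries the product Poisson structure whose bivector field is $w_V+w_S=0+\dr^i\w\dr_i$, and the projections are Poisson morphisms. This coincides on the nose with $w$ in the chosen coordinates, so the identity map on coordinates is a Poisson isomorphism of $(U,w)$ onto the product. Hence the given point has a neighborhood Poisson equivalent to the product of a manifold with the zero Poisson structure and a symplectic manifold, as claimed.

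I would not expect any serious obstacle here: in the regular setting the entire content is already packaged in the Darboux normal form of Theorem \ref{canpo}, and the only care needed is the mild bookkeeping of shrinking the chart to a genuine coordinate product $V\times S$ and verifying that the product Poisson bivector field of Example \ref{spr862} reproduces the canonical $w$ exactly. By contrast, the general (non-regular) Weinstein splitting theorem would require an inductive transverse construction near a singular point of the rank function, which the standing regularity assumption of this paper removes.
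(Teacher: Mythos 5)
Your proposal is correct. Note that the paper itself supplies no proof of Theorem \ref{wenst1}: it is quoted from the literature (Vaisman, Weinstein) with the one-line indication that the characteristic foliation, like any regular foliation, is locally simple, so the manifold locally fibres over a transversal with symplectic fibres. Your route instead extracts the splitting directly from the regular Darboux normal form of Theorem \ref{canpo}: since in canonical coordinates $(z^\la,q^i,p_i)$ the bivector field is the constant-coefficient $w=\dr^i\w\dr_i$ with no $\dr/\dr z^\la$ terms, shrinking the chart to a coordinate product $V\times S$ exhibits $w$ as the product bivector of Example \ref{spr862} with zero factor on $V$ and the non-degenerate factor $\dr^i\w\dr_i$, i.e.\ $\Om_S=dp_i\w dq^i$, on $S$. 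This is essentially equivalent to the foliation argument --- the paper itself remarks right after Theorem \ref{wenst1} that the Darboux coordinates of Theorem \ref{canpo} are adapted coordinates for the characteristic foliation --- but your version is more self-contained given \ref{canpo}, at the price of making that theorem carry all the weight, whereas the paper's foliation remark exposes the geometric mechanism (a locally simple symplectic foliation) that also drives the proof of \ref{canpo} itself. One small terminological point: a bivector field is not literally a ``pull-back'' along the projection ${\pr}_2:U\to S$ (bivectors push forward along submersions rather than pull back); in your product chart the intended statement is simply that $w$ coincides with the canonical lift of $\dr^i\w\dr_i$, which is what Example \ref{spr862} provides, so the argument is unaffected. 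Your closing observation is also apt: the standing regularity assumption is exactly what reduces the general Weinstein splitting (which needs a transverse inductive construction near points where the rank jumps) to this bookkeeping exercise.
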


Provided with this symplectic structure, the leaves of the
characteristic foliation of a Poisson manifold $Z$ are assembled
into a symplectic foliation of $Z$ as follows.

Let $\cF$ be an even dimensional foliation of a manifold $Z$. A
$\wt d$-closed non-degenerate leafwise two-form $\Om_\cF$ on a
foliated manifold $(Z,\cF)$ is called symplectic. Its pull-back
$i_F^*\Om_\cF$ onto each leaf $F$ of $\cF$ is a symplectic form on
$F$. A foliation $\cF$ provided with a symplectic leafwise form
$\Om_\cF$ is called the symplectic foliation.

If a symplectic leafwise form $\Om_\cF$ exists, it yields a bundle
isomorphism
\be
\Om_\cF^\fl: T\cF\op\to_Z T\cF^*, \qquad \Om_\cF^\fl:v\to -
v\rfloor\Om_\cF(z), \qquad v\in T_z\cF.
\ee
The inverse isomorphism $\Om_\cF^\sh$ determines a bivector field
\mar{spr904}\beq
w_\Om(\al,\bt)=\Om_\cF(\Om_\cF^\sh(i^*_\cF\al),\Om_\cF^\sh(i^*_\cF\bt)),
\qquad  \al,\bt\in T_z^*Z, \quad z\in Z, \label{spr904}
\eeq
on $Z$ subordinate to $\op\w^2T\cF$. It is a Poisson bivector
field. The corresponding Poisson bracket reads
\mar{spr902}\beq
\{f,f'\}_\cF=\vt_f\rfloor \wt df', \qquad \vt_f\rfloor\Om_\cF=-\wt
df, \qquad \vt_f=\Om_\cF^\sh(\wt df).\label{spr902}
\eeq
Its kernel is $S_\cF(Z)$.

Conversely, let $(Z,w)$ be a Poisson manifold and $\cF$ its
characteristic foliation. Since Ann$\,T\cF\subset T^*Z$ is
precisely the kernel of a Poisson bivector field $w$, a bundle
homomorphism
\be
w^\sh: T^*Z\op\to_Z TZ
\ee
factorizes in a unique fashion
\mar{lmp03}\beq
w^\sh: T^*Z\ar_Z^{i^*_\cF} T\cF^*\ar_Z^{w^\sh_\cF}
T\cF\ar_Z^{i_\cF} TZ \label{lmp03}
\eeq
through a bundle isomorphism
\mar{lmp02}\beq
w_\cF^\sh: T\cF^*\op\to_Z T\cF,  \qquad w^\sh_\cF:\al\to
-w(z)\lfloor \al, \qquad \al\in T_z\cF^*. \label{lmp02}
\eeq
The inverse isomorphism $w_\cF^\fl$ yields a symplectic leafwise
form
\mar{spr903}\beq
\Om_\cF(v,v')=w(w_\cF^\fl(v),w_\cF^\fl(v')), \qquad  v,v'\in
T_z\cF, \qquad z\in Z. \label{spr903}
\eeq
The formulas (\ref{spr904}) and (\ref{spr903}) establish the
equivalence between the Poisson structures on a manifold $Z$ and
its symplectic foliations.

\subsection{Group action on Poisson manifolds}

Turn now to a group action on Poisson manifolds. By $G$ throughout
is meant a real connected Lie group, $\cG$ is its right Lie
algebra, and $\cG^*$ is the Lie coalgebra (see Section 7.5).

We start with the symplectic case. Let a Lie group $G$ act on a
symplectic manifold $(Z,\Om)$ on the left by symplectomorphisms.
Such an action of $G$ is called symplectic. Since $G$ is
connected, its action on a manifold $Z$ is symplectic iff the
homomorphism $\ve\to\xi_\ve$, $\ve\in\cG$, (\ref{spr941}) of a Lie
algebra $\cG$ to a Lie algebra $\cT_1(Z)$ of vector fields on $Z$
is carried out by canonical vector fields for a symplectic form
$\Om$ on $Z$. If all these vector fields are Hamiltonian, an
action of $G$ on $Z$ is called a Hamiltonian action. One can show
that, in this case, $\xi_\ve$, $\ve\in\cG$, are Hamiltonian vector
fields of functions on $Z$ of the following particular type.

\begin{proposition}\label{gen4}  \mar{gena4} An action of a Lie group $G$ on
a symplectic manifold $Z$ is Hamiltonian iff there exists a
mapping
\mar{spr943}\beq
\wh J: Z\to \cG^*, \label{spr943}
\eeq
called the momentum mapping,  such that
\mar{z210}\beq
\xi_\ve\rfloor\Om=- dJ_\ve, \qquad J_\ve(z)=\lng\wh J(z),\ve\rng,
\qquad  \ve\in \cG. \label{z210}
\eeq
\end{proposition}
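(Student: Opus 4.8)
The plan is to prove the two implications separately; the reverse implication is immediate, while the forward one reduces to a linearity observation.

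First I would treat the \textit{if} part. Suppose a mapping $\wh J:Z\to\cG^*$ satisfying (\ref{z210}) exists. Then for every $\ve\in\cG$ the function $J_\ve=\lng\wh J,\ve\rng$ is smooth, and the relation $\xi_\ve\rfloor\Om=-dJ_\ve$ identifies $\xi_\ve$ with the Hamiltonian vector field $\vt_{J_\ve}$ in the sense of (\ref{z100}). Hence every infinitesimal generator $\xi_\ve$ is Hamiltonian, and the action of $G$ is Hamiltonian by definition.

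For the \textit{only if} part, assume the action is Hamiltonian, so that each $\xi_\ve$ is a Hamiltonian vector field; the task is to produce Hamiltonians depending linearly on $\ve$, which is precisely what allows them to be assembled into one $\cG^*$-valued map. I would fix a basis $\{\ve_m\}$ for the right Lie algebra $\cG$ and, for each $m$, choose a smooth function $J_m$ on $Z$ with $\xi_{\ve_m}\rfloor\Om=-dJ_m$, which is possible because $\xi_{\ve_m}$ is Hamiltonian. Since the homomorphism $\ve\to\xi_\ve$ of (\ref{spr941}) is $\mathbb R$-linear, writing $\ve=a^m\ve_m$ gives $\xi_\ve=a^m\xi_{\ve_m}$, whence $\xi_\ve\rfloor\Om=a^m\xi_{\ve_m}\rfloor\Om=-a^m dJ_m=-d(a^m J_m)$. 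Thus $J_\ve=a^m J_m$ is a family of Hamiltonians linear in $\ve$ with $\xi_\ve\rfloor\Om=-dJ_\ve$.

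It remains to package these into a momentum mapping. I would set $\wh J(z)=J_m(z)\ve^m$, using the dual basis $\{\ve^m\}$ of $\{\ve_m\}$; this is smooth because each $J_m$ is, and the pairing gives $\lng\wh J(z),\ve\rng=a^m J_m(z)=J_\ve(z)$ for $\ve=a^m\ve_m$, so (\ref{z210}) holds. The only point demanding care is the freedom in selecting each $J_m$: on a connected symplectic manifold a Hamiltonian is fixed only up to an additive constant, so $\wh J$ is not unique, but this does not affect the stated equivalence once the basis Hamiltonians are fixed. I expect the genuinely substantive refinement --- whether $\wh J$ can in addition be chosen so that $\ve\mapsto J_\ve$ is a Lie-algebra homomorphism into the Poisson algebra (equivariance) --- to be the real difficulty, but since the statement asserts only the existence of $\wh J$ with (\ref{z210}), it need not be confronted here.
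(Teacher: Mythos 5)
Your proof is correct, and it is the standard argument; note that the paper itself offers no proof of this proposition, which is stated as background material in the Appendix with a pointer to the references (Abraham--Marsden, Libermann--Marle), so there is nothing to compare against beyond the textbook route you followed: contract with $\Om$ for the \emph{if} direction, and for the \emph{only if} direction choose Hamiltonians $J_m$ for a basis $\{\ve_m\}$ and assemble $\wh J=J_m\ve^m$ using $\mathbb R$-linearity of $\ve\mapsto\xi_\ve$. One point you leave implicit in the \emph{if} direction: the paper's definition of a Hamiltonian action presupposes that the action is symplectic, but this is automatic from (\ref{z210}), since $\xi_\ve\rfloor\Om=-dJ_\ve$ is closed, hence $\bL_{\xi_\ve}\Om=d(\xi_\ve\rfloor\Om)=0$, and $G$ is assumed connected throughout that section, so the generators being canonical forces the action to be by symplectomorphisms.
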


The momentum mapping (\ref{spr943}) is defined up to a constant
map. Indeed, if $\wh J$ and $\wh J'$ are different momentum
mappings for the same symplectic action of $G$ on $Z$, then
\be
d(\lng \wh J(z)-\wh J'(z),\ve\rng)=0, \qquad  \ve\in\cG.
\ee
Given $g\in G$, let us us consider the difference
\mar{z220}\beq
\si(g) =\wh J(gz)-{\rm Ad}^*g(\wh J(z)), \label{z220}
\eeq
where ${\rm Ad}^*g$ is the coadjoint representation (\ref{z211})
on $\cG^*$. One can show that the difference (\ref{z220}) is
constant on a symplectic manifold $Z$ \cite{abr}. A momentum
mapping $\wh J$ is called equivariant if $\si(g)=0$, $g\in G$.

\begin{example}\label{imp} \mar{imp}
Let a symplectic form on $Z$ be exact, i.e., $\Om=d\thh$, and let
$\thh$ be $G$-invariant, i.e.,
\be
\bL_{\xi_\ve}\thh=d(\xi_\ve\rfloor\thh) +\xi_\ve\rfloor\Om=0,
\qquad
  \ve\in\cG.
\ee
Then the momentum mapping $\wh J$ (\ref{spr943}) can be given by
the relation
\be
\lng\wh J(z),\ve\rng=(\xi_\ve\rfloor\thh)(z).
\ee
It is equivariant. In accordance with the relation (\ref{z211}),
it suffices to show that
\be
J_\ve(gz)=J_{{\rm Ad}\,g^{-1}(\ve)}(z), \qquad (\xi_\ve\rfloor
\thh)(gz)=(\xi_{{\rm Ad}\,g^{-1}(\ve)}\rfloor\thh)(z).
\ee
This holds by virtue of the relation (\ref{z212}). For instance,
let $T^*Q$ be a symplectic manifold equipped with the canonical
symplectic form $\Om_T$ (\ref{m83}). Let a left action of a Lie
group $G$ on $Q$ have the infinitesimal generators
$\tau_m=\ve^i_m(q)\dr_i$. The canonical lift of this action onto
$T^*Q$ has the  infinitesimal generators
\mar{z216}\beq
\xi_m=\wt\tau_m=ve^i_m\dr_i -p_j\dr_i\ve_m^j\dr^i, \label{z216}
\eeq
and preserves the canonical Liouville form $\Xi$ on $T^*Q$. The
$\xi_m$ (\ref{z216}) are Hamiltonian vector fields of the
functions $J_m=\ve_m^i(q)p_i$, determined by the equivariant
momentum mapping $\wh J=\ve^i_m(q)p_i\ve^m$.
\end{example}

\begin{theorem}
A momentum mapping $\wh J$ associated to a symplectic action of a
Lie group $G$ on a symplectic manifold $Z$ obeys the relation
\mar{z223}\beq
\{J_\ve,J_{\ve'}\}= J_{[\ve,\ve']} - \lng T_e\si(\ve'),\ve\rng.
\label{z223}
\eeq
\end{theorem}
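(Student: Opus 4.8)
The plan is to reduce the Poisson bracket of the components $J_\ve$ to a directional derivative of $J_{\ve'}$ along the fundamental vector field $\xi_\ve$, and then to evaluate that derivative by differentiating the defining relation (\ref{z220}) of the non-equivariance cocycle $\si$ along a one-parameter subgroup. First I would record that, by (\ref{z210}) compared with (\ref{z100}), the generator $\xi_\ve$ is precisely the Hamiltonian vector field $\vt_{J_\ve}$ of the component $J_\ve$. Hence the first relation in (\ref{m81}) gives $\{J_\ve,J_{\ve'}\}=\vt_{J_\ve}\rfloor dJ_{\ve'}=\xi_\ve(J_{\ve'})$, so that the bracket is nothing but the derivative of $J_{\ve'}$ along the flow of $\xi_\ve$.

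Next I would exploit the cocycle relation. Pairing (\ref{z220}) with $\ve'$ and invoking the coadjoint identity (\ref{z211}) rewrites the definition of $\si$ as the pointwise relation $J_{\ve'}(gz)=J_{{\rm Ad}\,g^{-1}(\ve')}(z)+\lng\si(g),\ve'\rng$, valid for all $g\in G$ and $z\in Z$. Two features make this usable: the already-established independence of $\si(g)$ from $z$ turns the last term into a genuine constant, and $\si(e)=0$ makes the differential $T_e\si:\cG\to\cG^*$ well defined. I would then let $g$ run through the one-parameter subgroup generated by $\ve$ and differentiate this relation at the identity. The left-hand side reproduces $\xi_\ve(J_{\ve'})=\{J_\ve,J_{\ve'}\}$ from the first step. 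On the right-hand side the ${\rm Ad}$-term contributes $J_{[\ve,\ve']}$, through the linearity of $\eta\mapsto J_\eta$ and the fact that the derivative of ${\rm Ad}\,\exp(t\ve)(\ve')$ at $t=0$ equals ${\rm ad}\,\ve(\ve')=[\ve,\ve']$, while the $\si$-term contributes the constant $\lng T_e\si(\ve'),\ve\rng$ with the sign fixed as discussed below. Equating the two sides yields (\ref{z223}).

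It is worth noting the conceptual reason why the correction term is constant, which also serves as an independent check. Since the action is a left action, the assignment $\ve\mapsto\xi_\ve$ of (\ref{spr941}) is a Lie algebra homomorphism, $[\xi_\ve,\xi_{\ve'}]=\xi_{[\ve,\ve']}$; combining this with the second relation in (\ref{m81}), $[\vt_f,\vt_g]=\vt_{\{f,g\}}$, and with $\xi_\ve=\vt_{J_\ve}$ shows that $\{J_\ve,J_{\ve'}\}$ and $J_{[\ve,\ve']}$ possess one and the same Hamiltonian vector field. Their difference is therefore a Casimir function of $\Om$, hence locally constant, and constant on connected $Z$; this is exactly the constant identified above with the cocycle term.

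The main obstacle is not conceptual but a matter of sign and index bookkeeping. The precise sign in (\ref{z223}), and whether the correction reads $\lng T_e\si(\ve'),\ve\rng$ or $\lng T_e\si(\ve),\ve'\rng$, is dictated by three conventions that must be applied consistently: the orientation of the flow of $\xi_\ve$ fixed by the equivariance relation (\ref{z212}), which is what forces $\ve\mapsto\xi_\ve$ to be a homomorphism rather than an anti-homomorphism; the sign convention $\{f,g\}=\vt_f\rfloor dg$ in the Poisson bracket; and the antisymmetry $\lng T_e\si(\ve'),\ve\rng=-\lng T_e\si(\ve),\ve'\rng$ of the induced Lie algebra two-cocycle. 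Tracking these three choices through the single differentiation step is the only delicate point of the argument.
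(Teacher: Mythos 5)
The paper never proves this theorem: it sits in the Appendix as quoted background material (with the cocycle constancy credited to \cite{abr}), and the text passes directly from the definition of $\si$ (\ref{z220}) to the equivariant corollary (\ref{z224}). So there is no internal argument to compare yours against, and your proposal must stand on its own. Structurally it does, and it is the standard argument: the identification $\xi_\ve=\vt_{J_\ve}$ from (\ref{z210}) and (\ref{z100}), the reduction $\{J_\ve,J_{\ve'}\}=\xi_\ve(J_{\ve'})$ via (\ref{m81}), the rewriting of (\ref{z220}) through (\ref{z211}) as $J_{\ve'}(gz)=J_{{\rm Ad}\,g^{-1}(\ve')}(z)+\lng\si(g),\ve'\rng$, and differentiation along $g=\exp(t\ve)$, using the $z$-independence of $\si(g)$ (which the paper grants) to make the last term differentiate to a constant. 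Your second-paragraph check---that $\{J_\ve,J_{\ve'}\}$ and $J_{[\ve,\ve']}$ share the same Hamiltonian vector field by (\ref{spr941}), (\ref{m81}), so their difference is constant on connected $Z$---is also correct and is exactly why the defect is a (Lie algebra) two-cocycle.

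One point you defer to ``bookkeeping'' is, as literally written, an inconsistency, and it should be resolved rather than flagged. The relation you differentiate contains ${\rm Ad}\,g^{-1}(\ve')={\rm Ad}\,\exp(-t\ve)(\ve')$, so if, as you assert, the derivative of ${\rm Ad}\,\exp(t\ve)(\ve')$ at $t=0$ were $+[\ve,\ve']$, then the ${\rm Ad}$-term would contribute $-J_{[\ve,\ve']}$, and your conclusion would carry the wrong sign on the bracket term. The signs close only if the derivative of the term actually present, ${\rm Ad}\,\exp(-t\ve)(\ve')$, equals $+[\ve,\ve']$ in the bracket of the right Lie algebra used throughout Section 6.2 (equivalently, $\frac{d}{dt}\big|_{t=0}{\rm Ad}\,\exp(t\ve)(\ve')=-[\ve,\ve']$ in that convention); then the ${\rm Ad}$-term gives $+J_{[\ve,\ve']}$, the $\si$-term gives $+\lng T_e\si(\ve),\ve'\rng$, and the antisymmetry $\lng T_e\si(\ve),\ve'\rng=-\lng T_e\si(\ve'),\ve\rng$ of the induced two-cocycle produces (\ref{z223}) exactly. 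This is a fixable sign slip, not a wrong approach, but since the whole content of the theorem beyond (\ref{z224}) is precisely these signs, the proof is not complete until this step is carried out consistently in one fixed convention.
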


In the case of an equivariant momentum mapping, the relation
(\ref{z223}) leads to a homomorphism
\mar{z224}\beq
\{J_\ve,J_{\ve'}\}= J_{[\ve,\ve']} \label{z224}
\eeq
of a Lie algebra $\cG$ to a Poisson algebra of smooth functions on
a symplectic manifold $Z$ (cf. Proposition \ref{spr948} below).

Now let a Lie group $G$ act on a Poisson manifold $(Z,w)$ on the
left by Poisson automorphism. This is a Poisson action. Since $G$
is connected, its action on a manifold $Z$ is a Poisson action iff
the homomorphism $\ve\to\xi_\ve$, $\ve\in\cG$, (\ref{spr941}) of a
Lie algebra $\cG$ to a Lie algebra $\cT_1(Z)$ of vector fields on
$Z$ is carried out by canonical vector fields for a Poisson
bivector field $w$, i.e., the condition (\ref{gm508}) holds. The
equivalent conditions are
\be
&&\xi_\ve(\{f,g\})=\{\xi_\ve(f),g\} +\{f, \xi_\ve(g)\}, \qquad f,g\in
C^\infty(Z), \\
&&\xi_\ve(\{f,g\})=[\xi_\ve,\vt_f](g)- [\xi_\ve,\vt_g](f),\\
&& [\xi_\ve,\vt_f] =\vt_{\xi_\ve(f)},
\ee
where $\vt_f$ is the Hamiltonian vector field (\ref{gm509}) of a
function $f$.

A Hamiltonian action of $G$ on a Poisson manifold $Z$ is defined
similarly to that on a symplectic manifold. Its infinitesimal
generators are tangent to leaves of the symplectic foliation of
$Z$, and there is a Hamiltonian action of $G$ on every symplectic
leaf. Proposition \ref{gen4} together with the notions of a
momentum mapping and an equivariant momentum mapping also are
extended to a Poisson action. However, the difference $\si$
(\ref{z220}) is constant only on leaves of the symplectic
foliation of $Z$ in general. At the same time, one can say
something more on an equivariant momentum mapping (that also is
valid for a symplectic action).

\begin{proposition} \label{spr948} \mar{spr948}
An equivariant momentum mapping $\wh J$  (\ref{spr943}) is a
Poisson morphism to the Lie coalgebra $\cG^*$, provided with the
Lie--Poisson structure (\ref{gm510}).
\end{proposition}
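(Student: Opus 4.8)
The plan is to verify directly the defining property of a Poisson morphism, namely that
\[
\{f\circ\wh J,\,g\circ\wh J\}_w=\{f,g\}_{\rm LP}\circ\wh J,\qquad f,g\in C^\infty(\cG^*),
\]
where $\{,\}_w$ is the Poisson bracket (\ref{b450}) on $Z$ and $\{,\}_{\rm LP}$ is the Lie--Poisson bracket (\ref{gm510}) on $\cG^*$. The governing observation is that, by (\ref{b450}), the value of either bracket at a point depends only on the differentials of its arguments there; since $\cG^*$ is a vector space and the differential of any function at $\ve^*$ lies in $T^*_{\ve^*}\cG^*\cong\cG$, it will suffice to understand $\wh J$ on the linear functions $\ell_\ve(\eta)=\lng\eta,\ve\rng$, $\ve\in\cG$, whose differentials are constant and exhaust all covectors.

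First I would record two elementary identities. From the defining relation (\ref{z210}) of the momentum mapping, $\ell_\ve\circ\wh J=J_\ve$; and since $\ell_\ve$ is linear, $d\ell_\ve(\ve^*)=\ve$ at every $\ve^*$. Substituting the latter into (\ref{gm510}) gives $\{\ell_\ve,\ell_{\ve'}\}_{\rm LP}=\ell_{[\ve,\ve']}$, whose pull-back along $\wh J$ is $J_{[\ve,\ve']}$. On the other hand, the pull-back bracket of $\ell_\ve,\ell_{\ve'}$ is $\{J_\ve,J_{\ve'}\}_w$, which equals $J_{[\ve,\ve']}$ by the equivariance relation (\ref{z224}). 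Hence the required identity holds on all linear functions.

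To pass to arbitrary $f,g$ I would fix $z\in Z$, put $\ve^*=\wh J(z)$ and set $\ve=df(\ve^*)$, $\ve'=dg(\ve^*)$, both regarded as elements of $\cG$. The chain rule gives $d(f\circ\wh J)(z)=\wh J^*_z\bigl(df(\ve^*)\bigr)=dJ_\ve(z)$, and likewise for $g$, so the pointwise dependence of (\ref{b450}) reduces the computation at $z$ to the linear case: both $\{f\circ\wh J,g\circ\wh J\}_w(z)$ and $(\{f,g\}_{\rm LP}\circ\wh J)(z)$ collapse to $\lng\ve^*,[\ve,\ve']\rng$, and the proposition follows. The one place where the hypothesis is genuinely used is in invoking (\ref{z224}): it is precisely the equivariance of $\wh J$ that forces $T_e\si=0$ and thereby removes the cocycle term $\lng T_e\si(\ve'),\ve\rng$ in the general relation (\ref{z223}), turning it into the clean homomorphism (\ref{z224}). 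This is the step I expect to be the crux, and in the Poisson (as opposed to symplectic) setting it is also what lets the identity be checked leaf by leaf, where $w$ restricts to a genuine symplectic structure.
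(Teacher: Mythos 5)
Your proof is correct, but there is nothing in the paper to compare it against: Proposition \ref{spr948} is stated in the Appendix as background material, with no proof supplied (the argument is delegated to the cited literature, e.g.\ \cite{abr,libe,vais}). What you have written is the standard argument, and it is sound: the reduction to the linear functions $\ell_\ve$ via the pointwise dependence of both brackets on differentials, the identity $\{\ell_\ve,\ell_{\ve'}\}_{\rm LP}=\ell_{[\ve,\ve']}$ read off from (\ref{gm510}), and the identification of equivariance, in the form (\ref{z224}), as the crux are exactly the ingredients one finds in the references. One point deserves more care than you give it: the paper proves (\ref{z223}) and hence (\ref{z224}) only for a symplectic action on a symplectic manifold, whereas Proposition \ref{spr948} is asserted in the wider Poisson setting (the paper merely says the relevant notions ``are extended'' to Poisson actions). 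To invoke (\ref{z224}) there, you should justify it: since the generators satisfy $\xi_\ve=\vt_{J_\ve}$ and $[\xi_\ve,\xi_{\ve'}]=\xi_{[\ve,\ve']}$, the relations (\ref{m81}) give $\vt_{\{J_\ve,J_{\ve'}\}}=\vt_{J_{[\ve,\ve']}}$, so the difference $\{J_\ve,J_{\ve'}\}-J_{[\ve,\ve']}$ is a Casimir function; differentiating the equivariance relation $J_\ve(gz)=J_{{\rm Ad}\,g^{-1}(\ve)}(z)$ along one-parameter subgroups $g=\exp(t\ve')$ then shows this difference vanishes identically, which is the Poisson-manifold version of (\ref{z224}). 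Finally, your closing remark that equivariance ``lets the identity be checked leaf by leaf'' is unnecessary and slightly misleading: your own computation is pointwise in $w$ via (\ref{b450}) and requires no leafwise reduction at all.
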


\addcontentsline{toc}{section}{References}

\end{document}